\newif\ifshort
\shortfalse

\documentclass[a4paper,UKenglish,cleveref,autoref,thm-restate]{lipics-v2021}

\pdfoutput=1 
\hideLIPIcs  

\title{Navigating the Complexity Landscape of Nominee Selection in Schulze Voting}
\author{Katarína Cechlárová}
    {P.J. \v Saf\'arik University, Slovakia}
    {katarina.cechlarova@upjs.sk}
    {https://orcid.org/0000-0002-9641-1351}
    {Supported by VEGA 1/0585/24 and APVV-21-0369.}

\author{J\"org Rothe}
    {Heinrich-Heine-Universität Düsseldorf, Germany}
    {rothe@hhu.de}
    {https://orcid.org/0000-0002-0589-3616}
    {Supported in part by Deutsche Forschungsgemeinschaft under DFG research grant \linebreak[4]RO\nobreakdash-1202/21\nobreakdash-2 (project 438204498).}

\author{\v{S}imon Schierreich}
    {AGH University of Krakow, Poland \and
    Czech Technical University in Prague, Czechia}
    {schiesim@fit.cvut.cz}
    {https://orcid.org/0000-0001-8901-1942}
    {\flag{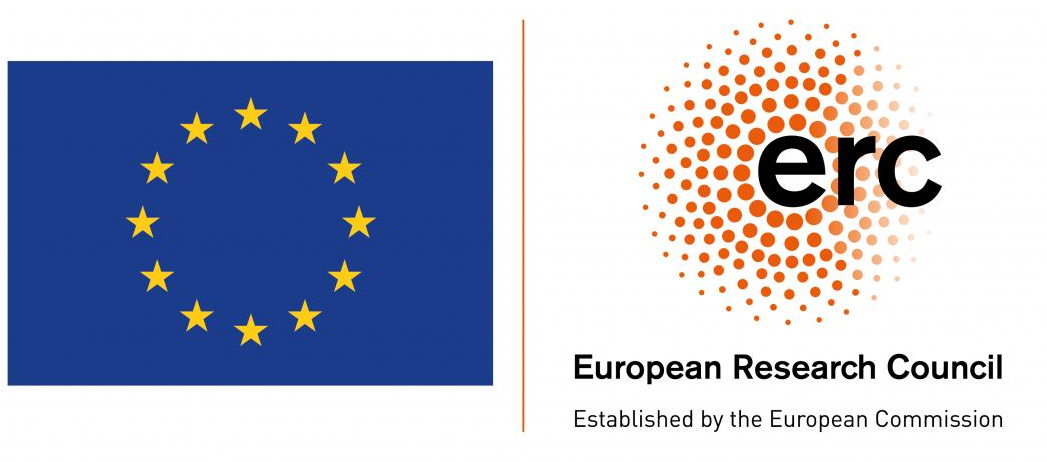}Supported from the European Research Council (ERC) under the European Union's Horizon 2020 research and innovation programme (grant agreement No 101002854) and by the European Union co-funded project Robotics and Advanced Industrial Production (reg. no. CZ.02.01.01/00/22\_008/0004590).}

\author{Ildikó Schlotter}
    {ELTE Centre for Economic and Regional Studies, Hungary \and
    Budapest University of Technology and Economics, Hungary}{schlotter.ildiko@krtk.elte.hu}
    {https://orcid.org/0000-0002-0114-8280}
    {Supported by the Hungarian Academy of Sciences under its Momentum Programme (LP2021-2) and its J\'anos Bolyai Research Scholarship.}

\authorrunning{K. Cechlárová, J. Rothe, Š. Schierreich, and I. Schlotter} 

\Copyright{Katarína Cechlárová, Jörg Rothe, Šimon Schierreich, and Ildikó Schlotter} 

\ccsdesc[500]{Theory of computation~Algorithmic game theory}
\ccsdesc[300]{Theory of computation~Parameterized complexity and exact algorithms}

\keywords{Computational complexity, computational social choice, Schulze voting, possible and necessary president problem} 

\category{} 

\relatedversion{} 

\acknowledgements{The authors are grateful to Jana Woitaschik for her useful feedback on preliminary versions of the paper.}

\nolinenumbers 

\EventEditors{John Q. Open and Joan R. Access}
\EventNoEds{2}
\EventLongTitle{42nd Conference on Very Important Topics (CVIT 2016)}
\EventShortTitle{CVIT 2016}
\EventAcronym{CVIT}
\EventYear{2016}
\EventDate{December 24--27, 2016}
\EventLocation{Little Whinging, United Kingdom}
\EventLogo{}
\SeriesVolume{42}
\ArticleNo{23}

\usepackage{amssymb,amsmath,verbatim}
\usepackage{booktabs}

\usepackage{xspace}

\usepackage{hyperref}
\usepackage{cleveref}
\usepackage{doi}

\usepackage{algorithm}
\usepackage[noend]{algpseudocode}

\usepackage[dvipsnames]{xcolor}
\usepackage{graphicx}
\usepackage{tikz}
\usetikzlibrary{backgrounds,calc,shapes.geometric,arrows.meta,decorations.markings,patterns.meta,calc}
\tikzset{
myarc/.style={
 	-{Stealth[length=3mm,width=1.8mm]},line width=1pt
	}
}

\newcommand{\probName}[1]{\textsc{#1}\xspace}
\def\SchPP{\probName{Schulze Possible President}}
\def\SchNP{\probName{Schulze Necessary President}}
\newcommand{\BalancedSat}{\probName{$2$-Balanced $3$-SAT}}

\newcommand{\voters}{V}
\newcommand{\numVoters}{{|V|}}
\newcommand{\candidates}{C}

\newcommand{\nomination}{N}
\newcommand{\election}{\mathcal{E}}
\newcommand{\electionI}{\election=(\candidates,\voters)}

\newcommand{\numParties}{k}
\newcommand{\maxPartySize}{\sigma}
\newcommand{\majG}{G(\election)}
\newcommand{\majGN}{G(\election(\nomination))}
\DeclareMathOperator{\weight}{\omega}
\def\beats{\mathsf{str}}
\def\inneigh{\delta^-}
\def\outneigh{\delta^+}

\newcommand{\colorSize}{{x}}
\newcommand{\edgeSetSize}{{y}}

\def\clauses{\Psi} 
\def\clause{\psi} 
\def\pathc{q}
\def\distc{p}
\def\Prfl{\Pi}
\def\true{\texttt{true}}
\def\false{\texttt{false}}

\newcommand{\N}{\mathbb{N}}

\def\P{{\cal P}}

\newcommand{\ol}[1]{\overline{#1}}
\newcommand{\ola}[1]{\overleftarrow{#1}}
\newcommand{\ora}[1]{\overrightarrow{#1}}
\newcommand{\wt}[1]{\widetilde{#1}}

\newcommand{\Oh}[1]{{\mathcal{O}\left(#1\right)}}

\NewDocumentCommand{\cc}{ O{} O{} m }{\mbox{%
    \expandafter\ifx\expandafter\relax\detokenize{#2}\relax\else{#2-}\fi%
    \textrm{#3}%
    \expandafter\ifx\expandafter\relax\detokenize{#1}\relax\else{-#1}\fi%
    }\xspace}
\newcommand{\DeclareComplexityLowerBound}[1]{%
  \expandafter\newcommand\csname #1\endcsname{\cc{#1}}%
  \expandafter\newcommand\csname #1h\endcsname{\cc[hard]{#1}}%
  \expandafter\newcommand\csname #1hness\endcsname{\cc[hardness]{#1}}%
  \expandafter\newcommand\csname #1c\endcsname{\cc[complete]{#1}}%
  \expandafter\newcommand\csname #1cness\endcsname{\cc[completeness]{#1}}%
}
\newcommand{\DeclareComplexityLowerBoundParam}[2]{%
  \expandafter\newcommand\csname #1\endcsname[1][#2]{\cc{#1[##1]}}%
  \expandafter\newcommand\csname #1h\endcsname[1][#2]{\cc[hard]{#1[##1]}}%
  \expandafter\newcommand\csname #1hness\endcsname[1][#2]{\cc[hardness]{#1[##1]}}%
  \expandafter\newcommand\csname #1c\endcsname[1][#2]{\cc[complete]{#1[##1]}}%
  \expandafter\newcommand\csname #1cness\endcsname[1][#2]{\cc[completeness]{#1[##1]}}%
}
\newcommand{\DeclareComplexityUpperBound}[1]{
  \expandafter\newcommand\csname #1\endcsname{\cc{#1}}%
}

\DeclareComplexityLowerBound{NP}
\DeclareComplexityLowerBound{paraNP}
\DeclareComplexityLowerBound{coNP}
\DeclareComplexityLowerBoundParam{W}{1}
\DeclareComplexityLowerBoundParam{coW}{1}
\DeclareComplexityUpperBound{FPT}
\DeclareComplexityUpperBound{XP}

\def\mypropsymbol{\diamond}
\def\propchain{(\hyperlink{prop:chain}{$\mypropsymbol$})}

\Crefname{claim}{Claim}{Claims}

\definecolor{myYellow}{RGB}{252,199,18} 
\definecolor{myBlue}{RGB}{51,102,204}
\definecolor{myRed}{RGB}{179,13,26}
\definecolor{myGreen}{RGB}{38,140,56}
\definecolor{myGray}{RGB}{130,128,133} 

\usepackage[title]{appendix}

\newcommand{\linkproof}[1]{%
\ifshort
    $\star$%
\else
    \hyperref[#1]{$\star$}%
\fi
}

\begin{document}
\maketitle              
\begin{abstract}
    We study the \probName{Possible President} problem and the \probName{Necessary President} problem for Schulze voting, a rule that, due to its many desirable axiomatic properties, is popular in practice. In both problems, we are given an election with the candidates partitioned into a set of \emph{parties}, and we are interested in questions about a given \emph{distinguished party}. In the \probName{Possible President} problem, we ask whether it is possible for the parties to each nominate exactly one candidate such that the nominee of the distinguished party is a Schulze winner of the resulting election with only the nominees running. In the \probName{Necessary President} problem, we ask whether the distinguished party's nominee is a Schulze winner of the resulting election, irrespective of the nomination from the other parties. Rothe and Woitaschik~\cite{rot-woi:c:possible-and-necessary-president-problem-in-schulze-voting} have shown that \probName{Possible President} is \NPc and \probName{Necessary President} is \coNP-complete for Schulze elections. We complement and improve their results by a more fine-grained analysis: we determine the parameterized complexity of both problems with respect to all possible parameterizations, where we consider each of three natural parameters---the number of voters, the maximum party size, and the number of parties---to be either a constant, a parameter, or unbounded. In particular, we obtain dichotomies regarding the number of voters for both problems.
\end{abstract}

\section{Introduction}\label{sec:introduction}

One of the core topics in computational social choice (COMSOC) is the study of strategic behavior in voting.
Initially, the main focus of interest was on manipulation~\cite{bar-tov-tri:j:manipulating,con-san-lan:j:when-hard-to-manipulate}, electoral control~\cite{bar-tov-tri:j:control,hem-hem-rot:j:destructive-control}, and bribery~\cite{fal-hem-hem:j:bribery,fal-hem-hem-rot:j:llull-copeland-full-techreport}, as surveyed in the book chapters by Conitzer and Walsh~\cite{con-wal:b:handbook-comsoc-manipulation}, Faliszewski and Rothe~\cite{fal-rot:b:handbook-comsoc-control-and-bribery}, and Baumeister and Rothe~\cite{bau-rot:b-2nd-edition:economics-and-computation-preference-aggregation-by-voting}.
Meanwhile, the focus of attention in COMSOC has shifted to other areas.
 %
Among the topics increasingly gaining attention is \emph{candidate nomination} in settings with multiple parties, each facing the task of selecting a nominee for an upcoming election.

In the \probName{Possible President} problem, we are given an election and a partition of the candidates into parties, and we ask whether each party can nominate exactly one candidate such that the nominee of a distinguished party  wins the resulting election with only the nominees running; if so, this candidate is a \emph{possible president}.
In the closely related \probName{Necessary President} problem, the question on the same input is whether the distinguished party has a \emph{necessary president}, i.e., a candidate who is guaranteed to win the resulting election regardless of whom the other parties nominate.
The \probName{Possible} and \probName{Necessary President} problems were introduced by Faliszewski \emph{et al.}~\cite{fal-gou-lan-les-mon:c:how-hard-is-it-for-a-party-to-nominate-an-election-winner} who studied their computational complexity for plurality voting.

These problems, which arguably are especially relevant in the real world,\footnote{In essentially all countries, political elections are held by parties nominating their top candidates as their representatives.}
have also been studied by
Misra~\cite{mis:c:parameterized-party-nominations},
Cechl{\'{a}}rov{\'{a}} \emph{et al.}~\cite{cec-les-tre-han-han:j:hardness-of-candidate-nomination},
Schlotter \emph{et al.}~\cite{sch-cec:c:candidate-nomination-for-condorcet-consistent-voting-rules,sch-cec-tre:j:parameterized-complexity-of-candidate-nomination-for-elections-based-on-positional-scoring-rules}
Faliszewski \emph{et al.}~\cite{fal-kaz-lis-sch-tur:c:computing-equilibrium-nominations}
Cechlárová and Schlotter~\cite{cec-sch:c:necessary-president},
and Rothe and Woitaschik~\cite{rot-woi:c:possible-and-necessary-president-problem-in-schulze-voting}.
In particular, Cechl{\'{a}}rov{\'{a}} \emph{et al.}~\cite{cec-les-tre-han-han:j:hardness-of-candidate-nomination} have studied them for several voting rules, including scoring rules (such as Borda, $k$-approval, and $k$-veto) and Condorcet-consistent rules (such as Copeland and maximin voting, a.k.a.\ the Simpson--Kramer rule) in terms of their computational complexity.
In addition, they formulated integer programs to solve the \probName{Possible President} problem, and performed experiments on both real-world and synthetic data.
Complementing the classical complexity analysis, Misra~\cite{mis:c:parameterized-party-nominations} and Schlotter \emph{et al.}~\cite{sch-cec:c:candidate-nomination-for-condorcet-consistent-voting-rules,sch-cec-tre:j:parameterized-complexity-of-candidate-nomination-for-elections-based-on-positional-scoring-rules} have studied this problem in terms of its parameterized complexity.
Here, we continue the work of Rothe and Woi\-ta\-schik~\cite{rot-woi:c:possible-and-necessary-president-problem-in-schulze-voting} who have shown \NPcness of the \probName{Possible President} problem and \coNPcness of the \probName{Necessary President} problem for Schulze elections; both their results hold even if no party contains more than three members.

Schulze's voting rule~\cite{sch:j:schulze-voting} is quite popular and has been widely used in practice, mainly due to its many desirable axiomatic properties: Among other properties, Schulze voting is monotonic, clone-independent, reversal-symmetric, and Condorcet-consistent.
This is why it has been applied for decision-making by the Pirate Parties of several countries, including Sweden and Australia, as well as by the Wikimedia Foundation, Kubernetes, and the Debian Vote Engine.
Schulze voting has been thoroughly investigated in COMSOC, especially regarding strategic behavior such as control attacks, for example by Parkes and Xia~\cite{par-xia:c:strategic-schulze-ranked-pairs}, Menton and Singh~\cite{men-sin:c:control-complexity-schulze}, and more recently by Maushagen \emph{et al.}~\cite{mau-nic-nue-rot-see:c:toward-completing-the-picture-of-control-in-schulze-and-ranked-pairs-elections}.


\paragraph*{Our Contribution}
We obtain a dichotomy for both the \probName{Possible} and the \probName{Necessary President} problem regarding the number of voters: For two voters, both problems are solvable in polynomial (actually, even in linear) time, whereas for three or more voters, the \probName{Possible President} problem is \NPc and the \probName{Necessary President} problem \coNPc.
In fact, these results hold even if each party has at most two members, with the exception of \probName{Necessary President} for an odd number of voters, where we were able to show \coNPcness for parties with at most three candidates.
Our dichotomies reveal that both of our problems are \paraNPh for the combination of two parameters---the number of voters and the maximum number of candidates in a party.
The third natural parameter we study is the number of parties, and we establish \Whness regarding this parameter for both problems, even for a constant number of voters.
We thus determine the parameterized complexity for all possible parameter combinations, that is, for every choice of viewing the number of voters, the maximum party size, and the number of parties as (a)~a constant, (b)~a parameter, or (c)~unbounded, we have pinpointed the parameterized complexity of both the \probName{Possible} and the \probName{Necessary President} problem for the resulting parameterization.

Although restricting the number of voters to be a constant does not guarantee fixed-parameter tractability with respect to the number of parties, we show that for the special case of three voters, \probName{Possible President} becomes fixed-parameter tractable when parameterized by the number of parties.
%


\section{Preliminaries}\label{sec:preliminaries}

Let us give some basic background from social choice, graph, and complexity theory.
In particular, we define Schulze voting and the \probName{Possible} and \probName{Necessary President} problems.


\subsection{Schulze Voting}

In social choice theory, an \emph{election} is a pair~$\electionI$, where $C$ is a set of candidates and~$\voters$ a list of votes over~$\candidates$, often expressed as a linear order $\succ_v$ over~$\candidates$ for each $v \in \voters$.
When specifying a vote~$v$'s preferences in an election, we usually omit the symbol~$\succ_v$.
For example, if~$C = \{a, b, c\}$ is a candidate set and~$b$ is preferred to~$a$ and~$a$ to~$c$ in~$v$, we write~$v: b\ a\ c$ instead of~$v: b\succ_v a\succ_v c$.
For any subset~$D \subseteq \candidates$, when $D$ occurs in some vote~$v$, this means that the members of~$D$ occur in some arbitrary, fixed order in~$v$; e.g., if~$D = \{a, b\}$, then the vote~$c\ D$ is a shorthand for~$c\ a\ b$.
Further, again assuming some fixed order of the candidates,~$\ora{D}$ gives the members of~$D$ in that order and $\ola{D}$ orders them backwards; e.g., $c\ \ora{D}$ is a shorthand for~$c\ a\ b$ and $c\ \ola{D}$ a shorthand for~$c\ b\ a$.

We define the \emph{(weighted) majority graph~$\majG=(\candidates,H)$ of~$\election$} as an (edge-weighted) directed graph (digraph) over~$\candidates$ that contains an arc~$(a,b)$ if and only if more votes in~$\voters$ prefer~$a$ to~$b$ than~$b$ to~$a$, that is,~$|\{v \in \voters : a\succ_v b\}|>|\{v \in V: b \succ_v a\}|$, and the difference $|\{u\in V: a\succ_v b\}|-|\{v \in V: b\succ_v a\}|$ is the weight~$\weight(a,b)$ of $(a,b)$ in~$\majG$.
The subgraph of~$\majG$  containing only arcs with weight exactly~$k$ will be denoted by~$G_k(\election)$.
The \emph{strength of a path~$P$ in~$\majG$} is the minimum weight of all arcs along~$P$.
The \emph{beatpath strength of~$a$ over~$b$} is the maximum strength of any path leading from~$a$ to~$b$ in~$\majG$, and is denoted by~$\beats(a,b)$;
if there is no path from~$a$ to~$b$ in~$\majG$, then we set~$\beats(a,b)=0$.
We say that a candidate~$a$ is a \emph{Schulze winner of~$(\candidates,\voters)$} if for every other candidate~$b \in \candidates \setminus\{a\}$, it holds that~$\beats(a,b) \geq \beats(b,a)$.

We conclude with an easy observation about the weighted majority graph $\majG$, which we use to simplify our proofs.

\begin{observation}
    Let $\electionI$ be an election and $\election'$ be an election constructed from $\election$ by adding two voters $v$ and $\hat{v}$ such that the ballot of $v$ is $\ola{C}$ and the ballot of $\hat{v}$ is $\ora{C}$. Then $\majG = G(\election')$.
\end{observation}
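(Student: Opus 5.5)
The plan is to show that the two added voters contribute nothing to any pairwise comparison. Then every arc weight in the weighted majority graph is unchanged, and so is its arc set.

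Fix an arbitrary pair of distinct candidates $a,b\in C$. Let $n_{ab}=|\{u\in V: a\succ_u b\}|$ and $n_{ba}=|\{u\in V: b\succ_u a\}|$ denote the counts in $\election$, and let $n'_{ab}, n'_{ba}$ be the corresponding counts in $\election'$.

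The key step is to note that $\ora{C}$ and $\ola{C}$ are exact reverses of each other. So if $a$ precedes $b$ in $\ora{C}$, then $b$ precedes $a$ in $\ola{C}$, and vice versa. Consequently, exactly one of $v,\hat v$ prefers $a$ to $b$, and exactly one prefers $b$ to $a$. This gives $n'_{ab}=n_{ab}+1$ and $n'_{ba}=n_{ba}+1$, hence
\[
n'_{ab}-n'_{ba}=n_{ab}-n_{ba}.
\]
Since the arc $(a,b)$ is present precisely when this difference is positive, and its weight $\weight(a,b)$ equals the difference, both the arc set and all weights coincide in $G(\election)$ and $G(\election')$. As $a,b$ were arbitrary and the vertex set is $C$ in both graphs, we conclude $G(\election)=G(\election')$.

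There is no real obstacle here. The only point needing care is to state explicitly that the fixed order underlying $\ora{C}$ is the same one that $\ola{C}$ reverses, which is exactly how the notation was introduced. It may also be worth remarking that, by induction, any number of such reversed pairs can be added. Moreover, since Schulze winners depend only on the weighted majority graph, the set of Schulze winners is likewise preserved, including for every subelection restricted to a nomination.
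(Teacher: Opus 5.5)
Your proof is correct and is exactly the pairwise-cancellation argument the paper leaves implicit; the paper states this as an observation without proof. The two reversed ballots add one vote to each side of every comparison, so each difference $|\{u: a\succ_u b\}|-|\{u: b\succ_u a\}|$, and hence every arc and weight, is unchanged. Your closing remark about nominations also holds, since restricting $\ora{C}$ and $\ola{C}$ to any $N\subseteq C$ again gives mutually reversed ballots.
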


A consequence of the above observation is that whenever we show a hardness result for an instance with $c$ candidates, the same hardness immediately generalize to $c + 2i$ candidates for any $i\in\N$. We will not stress this explicitly in the remainder of the paper and our proofs.


\subsection{Possible and Necessary President}

Faliszewski \emph{et al.}~\cite{fal-gou-lan-les-mon:c:how-hard-is-it-for-a-party-to-nominate-an-election-winner} introduced the model of \emph{elections with nominations}, which has been studied by several researchers since then~\cite{mis:c:parameterized-party-nominations,cec-les-tre-han-han:j:hardness-of-candidate-nomination,sch-cec-tre:j:parameterized-complexity-of-candidate-nomination-for-elections-based-on-positional-scoring-rules,sch-cec:c:candidate-nomination-for-condorcet-consistent-voting-rules,fal-kaz-lis-sch-tur:c:computing-equilibrium-nominations,rot-woi:c:possible-and-necessary-president-problem-in-schulze-voting}.
In this model, we represent an \emph{election with parties} by a triple $(C,V,\mathcal{P})$, where $(C,V)$ is an election as defined above and $\mathcal{P} = \{P_1, \ldots, P_k\}$ is a partition of $C$ into $k$ parties, i.e., $\bigcup_{i=1}^{k} P_{i} = C$ and $P_i \cap P_j = \emptyset$ for all $i, j \in [n]$ with $i \neq j$, where $[n]$ denotes $\{1, \ldots, n\}$.

In an election with parties,~$(C, V, \mathcal{P})$, each party nominates exactly one candidate among its members, i.e., a \emph{(valid) nomination}~$N \subseteq C$ requires that~$|P_i \cap N| = 1$ for all~$P_i \in \mathcal{P}$.
We then consider the corresponding \emph{reduced election}
$(N,V)$, where all votes in $V$ are restricted to contain only the nominees from~$N$. Every party is represented by its nominee in $(N,V)$, i.e., the winning parties are those whose nominees win the reduced election. Moreover, the weighted majority graph of the reduced election $(N,V)$ induced by~$\nomination$ will be denoted by~$\majGN$, and the beatpath strength of~$a$ over~$b$ is in this graph will be denoted by~$\beats_N(a,b)$.

Like Faliszewski \emph{et al.}~\cite{fal-gou-lan-les-mon:c:how-hard-is-it-for-a-party-to-nominate-an-election-winner}, we here adopt the \emph{nonunique-winner model} where for a nominee to win it is enough to be one winner, possibly among several others.
Note that Cechl{\'{a}}rov{\'{a}} \emph{et al.}~\cite{cec-les-tre-han-han:j:hardness-of-candidate-nomination} and Schlotter \emph{et al.}~\cite{sch-cec-tre:j:parameterized-complexity-of-candidate-nomination-for-elections-based-on-positional-scoring-rules} adopted the \emph{unique-winner model} that requires a unique winner of the reduced election.

We will study two
problems, defined specifically for Schulze voting: In \SchPP, given an election $(C, V, \mathcal{P})$ with parties and a distinguished party $P_i \in P$, we ask whether there exists a valid nomination $N \subseteq C$ such that $P_i$'s nominee is a Schulze winner of the reduced election.

In \SchNP, we are given the same input, and the question now is whether $P_i$ can nominate a candidate that is a Schulze winner of each reduced election resulting from any valid nomination of the other parties.

Depending on which problem we consider, the distinguished party's nominee is called the \emph{possible president} and the \emph{necessary president}, respectively.


\subsection{Directed Graphs}

In this section we summarize the graph-theoretic notation we will use.

Let~$G=(U,E)$ be a digraph.
Given a vertex~$u \in U$, we denote by~$\inneigh_G(u)=\{v:(v,u) \in E\}$ the set of its \emph{in-neighbors} and by~$\outneigh_G(u)=\{v:(u,v) \in E\}$ the set of its \emph{out-neighbors}.
We say that vertex~$v$ is \emph{reachable} from vertex~$u$ in~$G$ if there is a directed path from~$u$ to~$v$.
A digraph is \emph{strongly connected} if every vertex is reachable from every other vertex. A \emph{strongly connected component} (SCC) of~$G$ is a maximal strongly connected subgraph of~$G$.

A digraph~$G=(C,H)$ is \emph{acyclic} if it contains no directed cycle.
A digraph~$G=(C,H)$ is \emph{transitive} if for any three vertices~$u,v,z \in V$, we have the following: If~$(u,v)\in E$ and~$(v,z)\in E$, then~$(u,z)\in E$ as well.
If a transitive digraph contains no cycles of length two (i.e., an arc~$(u,v)$ and its opposite~$(v,u)$), then it is acyclic.


\subsection{Classical and Parameterized Complexity}

We assume the reader to be familiar with the foundations of classical and parameterized complexity theory.
Regarding the former, we will consider the complexity classes \NP (``nondeterministic polynomial time'') and \coNP (the class of complements of \NP sets) and \NPhness and \NPcness as well as \coNPhness and \coNPcness with respect to the \emph{polynomial-time many-one reducibility}, referring to the standard textbooks~\cite{gar-joh:b:int,aro-bar:b:complexity,rot:b:cryptocomplexity} for more background and formal definitions.
Regarding the latter, we will consider the parameterized complexity classes FPT (``fixed-parameter tractability'') and \W (the parameterized analogue of \NP) and \Whness with respect to the \emph{polynomial-time parameterized reducibility}, again referring to the standard textbook~\cite{cyg-fom-kow-lok-mar-pil-pil-sau:b:parameterized-algorithms} for more background and formal definitions.


\section{Two Voters}

We study two problems:
We shall ask whether a candidate~$\distc \in C$ is a
winner of some reduced election (i.e., a possible president), or if $\distc$ is a winner of all reduced elections where $\distc$ has been nominated (i.e., a necessary president).
For simplicity, we assume that the party containing candidate~$\distc$ is a singleton.

%

If~$\election$ is an election with two voters, then it suffices to consider the unweighted version of the majority graph~$\majG$, as each of its arcs has weight equal to~$2$.
Furthermore, we make the following observation.

\begin{observation}\label{obs:Schulze:twoVoters:transitiveMajorityGraph}
    Let~$\electionI$ be an election with~$\numVoters=2$. Then~$\majG$ is transitive.
\end{observation}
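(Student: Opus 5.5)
With two voters $v_1$ and $v_2$, the plan is to identify the arcs of $\majG$ exactly and then read off transitivity. Take two distinct candidates $a,b\in C$. Each voter ranks one of them above the other, so the margin $|\{v: a\succ_v b\}|-|\{v: b\succ_v a\}|$ can only be $2$, $0$, or $-2$. An arc $(a,b)$ therefore exists exactly when the margin is positive, which means it equals $2$, and this happens precisely when both voters prefer $a$ to $b$. This characterization also justifies the earlier remark that every arc has weight $2$. So $(a,b)\in H$ if and only if $a\succ_{v_1} b$ and $a\succ_{v_2} b$; in other words, $\majG$ is the intersection of the two strict linear orders $\succ_{v_1}$ and $\succ_{v_2}$.

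Transitivity then follows from the transitivity of each linear order. Suppose $(a,b)$ and $(b,c)$ are arcs. Then $a\succ_{v_1} b\succ_{v_1} c$, so $a\succ_{v_1} c$. Likewise $a\succ_{v_2} c$. By the characterization above, $(a,c)$ is an arc.

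None of these steps is really an obstacle. The only point needing care is the first step: the margin can never be $1$, so an arc requires unanimity. This is exactly what breaks for three or more voters, where majority cycles are possible. As a by-product, $\majG$ has no $2$-cycles because each order is antisymmetric, so by the remark in the preliminaries it is also acyclic.
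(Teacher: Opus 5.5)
Your proof is correct and follows the same route as the paper: with two voters an arc $(a,b)$ exists exactly when both voters rank $a$ above $b$, so transitivity of the majority graph is inherited from the transitivity of each voter's linear order. Your check that the margin can only be $-2$, $0$, or $2$ makes explicit the unanimity step that the paper states without justification.
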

\begin{proof}
  For two voters,~$(u,v)$ is an arc in~$\majG$ if both voters prefer candidate~$u$ to candidate~$v$.
  Hence, if~$(u,v)$ and~$(v,z)$ are both present in~$\majG$, then both voters prefer~$u$ to~$v$ and, simultaneously, both of them prefer~$v$ to~$z$.
  Consequently, both voters prefer~$u$ to~$z$, and the assertion follows.
\end{proof}

Next, we prove a necessary and satisfying condition for a candidate to be a Schulze winner in an election with two voters.

\begin{lemma}\label{thm:Schulze:twoVoters:characterisation}
    Let~$\electionI$ be an election with~$\numVoters=2$. Then~$c \in \candidates$ is a Schulze winner of~$\election$ if and only if~$\inneigh_{\majG}(c)=\emptyset$.
\end{lemma}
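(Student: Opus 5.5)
We prove the two directions separately. Throughout we use that with two voters every arc of $\majG$ has weight exactly~$2$. Hence $\beats(a,b)=2$ if $b$ is reachable from $a$ in $\majG$, and $\beats(a,b)=0$ otherwise. We also use \Cref{obs:Schulze:twoVoters:transitiveMajorityGraph}: $\majG$ is transitive. Since $\majG$ never contains both $(u,v)$ and $(v,u)$ (a majority cannot go both ways), it has no cycles of length two, so by the remark in the preliminaries $\majG$ is acyclic. Transitivity further gives that $b$ is reachable from $a$ if and only if $(a,b)$ is an arc.

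\emph{($\Leftarrow$)} Suppose $\inneigh_{\majG}(c)=\emptyset$. Fix any $b\neq c$. If $\beats(b,c)>0$, there would be a path from $b$ to $c$. Its last arc enters $c$, which contradicts that $c$ has no in-neighbours. So $\beats(b,c)=0\le\beats(c,b)$, and $c$ is a Schulze winner. This direction does not even need transitivity.

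\emph{($\Rightarrow$)} Suppose $c$ has an in-neighbour $b$, i.e., $(b,c)\in H$. Then $\beats(b,c)=2$. To show that $c$ is not a winner, it suffices to show $\beats(c,b)=0$, i.e., that $b$ is not reachable from $c$. If it were, the path from $c$ to $b$ together with the arc $(b,c)$ would form a directed cycle, contradicting acyclicity. Equivalently, by transitivity the arc $(c,b)$ would be present together with $(b,c)$, which is impossible. Hence $\beats(c,b)=0<2=\beats(b,c)$, and $c$ is not a Schulze winner.

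The only step requiring care is the acyclicity argument in ($\Rightarrow$). One must note explicitly that opposite arcs cannot coexist in a majority graph, so that transitivity from \Cref{obs:Schulze:twoVoters:transitiveMajorityGraph} upgrades to acyclicity. Everything else follows directly from the uniform weight~$2$.
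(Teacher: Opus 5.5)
Your proof is correct and follows essentially the same route as the paper: both use that every beatpath strength is $0$ or $2$, rule out a path from $c$ back to an in-neighbour via transitivity plus the absence of opposite arcs, and note that any path into $c$ must end with an arc entering $c$. Your explicit remark that transitivity together with no $2$-cycles gives acyclicity only spells out a step the paper leaves implicit.
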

\begin{proof}
    Note that~$\majG$ contains neither loops nor directed~$2$-cycles by definition.
    Thus, if~$\majG$ contains an arc~$(u,c)$ then, due to transitivity of~$\majG$, there exists no path from~$c$ to~$u$.
    Thus~$\beats(c,u)=0$ while~$\beats(u,c)=2$, and so~$c$ is not a Schulze winner.

    Conversely, assume that~$\majG$ contains no arc~$(u,c)$.
    For~$c$ not to be a Schulze winner, there must exist some candidate~$z$ such that~$\beats(z,c)>\beats(c,z)$.
    Since the beatpath strength of any candidate over another is either~$0$ or~$2$, it follows that~$\beats(z,c)=2$.
    In particular, there must exist some path from~$z$ to~$c$, contradicting our assumption.
\end{proof}

And with \Cref{obs:Schulze:twoVoters:transitiveMajorityGraph} and \Cref{thm:Schulze:twoVoters:characterisation} in hand, we can prove the main result of this section, which is a linear time algorithm for both problems we study in this work.

\begin{theorem}\label{thm:PP:Schulze:twoVoters:poly}\label{thm:NP:Schulze:twoVoters:poly}
  Given an election~$\electionI$ with~$\numVoters=2$,
  both \SchPP and \SchNP can be solved in linear time.
\end{theorem}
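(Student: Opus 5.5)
The whole proof rests on \Cref{thm:Schulze:twoVoters:characterisation}. The key point is that arcs of the majority graph of a reduced election are inherited: for any nomination $N$, the graph $\majGN$ is exactly the subgraph of $\majG$ induced by $N$. This holds because whether both voters prefer $u$ to $v$ does not depend on which other candidates are present. Hence the nominee $\distc$ is a Schulze winner of $(N,V)$ if and only if no $u \in N\setminus\{\distc\}$ satisfies $(u,\distc) \in \majG$, that is, if and only if $N \cap \inneigh_{\majG}(\distc) = \emptyset$. We keep the convention that the party of $\distc$ is $\{\distc\}$. For the general case the distinguished party chooses one of its members as $\distc$, and we run the test below for each member in turn. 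The total cost is still linear, because the in-neighbourhoods of distinct members of $P_i$ can be handled together, as described next.

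\textbf{Possible President.} A nomination in which $\distc$ wins exists if and only if every party $P_j \neq P_i$ contains a member $c$ with $(c,\distc)\notin\majG$; we then simply nominate such a member from each party. For a fixed nominee $\distc$ of $P_i$, the test is whether some party $P_j$ is entirely contained in $\inneigh_{\majG}(\distc)$. For the necessary president we instead test whether some party $P_j$ intersects $\inneigh_{\majG}(\distc)$.

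\textbf{Necessary President.} The nominee $\distc$ is a necessary president if and only if no other party can nominate an in-neighbour of $\distc$, i.e., if and only if $\inneigh_{\majG}(\distc) \cap P_i' = \emptyset$ for all other parties $P_i'$. Within the non-singleton setting, $P_i$ must be able to pick some $\distc \in P_i$ such that $\inneigh_{\majG}(\distc) \subseteq P_i$.

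\textbf{Linear time.} We never build the full graph $\majG$. Let $\mathrm{pos}_1(c)$ and $\mathrm{pos}_2(c)$ be the positions of $c$ in the two votes; these are computed in $\Oh{|C|}$ time. Then $u$ is an in-neighbour of $\distc$ exactly when $\mathrm{pos}_1(u)<\mathrm{pos}_1(\distc)$ and $\mathrm{pos}_2(u)<\mathrm{pos}_2(\distc)$. For one fixed $\distc$, each membership test takes $O(1)$ time, and one pass over all candidates, with counters per party, decides both problems in $\Oh{|C|}$ time.

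\textbf{Main obstacle.} The difficulty arises only when $P_i$ is not a singleton. Running $|P_i|$ separate passes costs $|P_i|\cdot|C|$ time, which is not linear. I would first reduce to the singleton case, as the paper assumes. If that is not allowed, I would use a dominance/2D argument. For Necessary President, for each party $P_j\neq P_i$ take the minima of $\mathrm{pos}_2$ over prefixes in vote 1, i.e., compute for each $\distc$ whether any candidate outside $P_i$ dominates it. This can be done in one sweep along vote 1, maintaining the smallest $\mathrm{pos}_2$ among candidates outside $P_i$ seen so far. Possible President needs, for each other party $P_j$, the member $u$ that is least dominating. Here $\distc$ fails exactly when every member of $P_j$ dominates it, which is a staircase condition. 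I expect this step to be the hardest to make strictly linear, and I would fall back on the singleton assumption stated in the paper.
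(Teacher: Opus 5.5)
Your proposal is correct and is essentially the paper's proof: for a singleton distinguished party $\{\distc\}$, both use \Cref{thm:Schulze:twoVoters:characterisation} to reduce to checking that every other party has a member outside $\inneigh_{\majG}(\distc)$ (\SchPP) or that no other party meets $\inneigh_{\majG}(\distc)$ (\SchNP), and your position-based $\Oh{|C|}$ test spells out the linear-time bound more carefully than the paper does. The non-singleton \probName{Possible President} case you left open is not hard either, and you do not need the fallback: every member of $P_j$ defeats $\distc$ exactly when $\mathrm{pos}_1(\distc)>\max_{u\in P_j}\mathrm{pos}_1(u)$ and $\mathrm{pos}_2(\distc)>\max_{u\in P_j}\mathrm{pos}_2(u)$, which is a single-corner condition rather than a staircase one, so collapsing each party to this corner point lets your prefix-minimum sweep for \SchNP handle it as well, and this actually justifies your early claim that the total cost stays linear.
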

\begin{proof}
    To check whether~$\distc$ is a possible president in~$\election$, we simply delete all candidates~$u$ such that~$(u,\distc)$ is an arc in~$\majG$, as by \Cref{thm:Schulze:twoVoters:characterisation} such candidates cannot be nominated in a reduced election where~$\distc$ is a Schulze winner.
    If all parties remain nonempty after this deletion, $\distc$ is a possible president, as all possible nominations
    leave~$\distc$ without incoming arcs. By contrast, $\distc$ is not a possible president in~$\election$ if some party becomes empty as a result of this deletion.

    By \Cref{thm:Schulze:twoVoters:characterisation}, the
    candidate~$\distc$ from the distinguished party is a necessary president in~$\election$ if and only if~$\inneigh_{\majG}(\distc) = \emptyset$:
    First, if
    $\inneigh_{\majG}(\distc) = \emptyset$, then
    $\inneigh_{\majGN}(\distc) = \emptyset$ is true for any set~$\nomination$ of nominated candidates.
    Conversely, if~$(u,\distc) \in \inneigh_{\majG}(\distc)$, then nominating~$u$ already guarantees that~$\distc$ is \emph{not} a winner of the resulting reduced election (regardless of the remaining nominations), so~$\distc$ is not a necessary president in~$\election$.
    Hence, both problems can be solved in linear time.
\end{proof}


\section{Three or More Voters}

In our hardness proofs, we shall use (polynomial-time)
reductions from the \NPc problem \BalancedSat \cite{ber-kar-sco:tr:balanced-sat}:
Given a Boolean formula~$\varphi$ over variables~$x_1,\dots, x_n$ such that each clause contains exactly three literals, and each variable occurs in~$\varphi$ exactly twice unnegated and exactly twice negated, is $\varphi$ satisfiable?
Let us introduce some notation.
We shall denote the two occurrences of~$x_i$ and of~$\bar{x}_i$ as~$x_i^1$, $x_i^2$ and as~$\bar{x}_i^1$, $\bar{x}_i^2$, respectively.
Furthermore, if~$\varphi$ contains clauses~$\clause_1,\dots,\clause_m$, then we denote by~$\ell^1_j$, $\ell^2_j$, and $\ell^3_j$ the three literal occurrences appearing in clause~$\clause_j$ (in some arbitrary order).
For example, if~$\clause_3 = (x_1 \vee \ol{x}_2 \vee x_4)$ with the clause~$\clause_3$ containing the first occurrence of the literal~$x_1$, the first occurrence of the literal~$\ol{x}_2$, and the second occurrence of the literal~$x_4$, then~$\ell_3^1=x_1^1$, $\ell_3^2=\ol{x}_2^1$, and~$\ell_3^3=x_4^2$.


\subsection{Possible President}


We start with studying the complexity of \SchPP.

\subsubsection{Intractability for a Constant Number of Voters and Small Parties}

\begin{theorem}
\label{thm:PP:Schulze:threeVoters:NPh}
  \SchPP is \NPc, even if no party contains more than two candidates and the number
  of voters is an odd constant at least three.
\end{theorem}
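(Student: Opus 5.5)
Membership in \NP is immediate: we guess a nomination $N$, compute $\majGN$ and all beatpath strengths $\beats_N$ with a Floyd--Warshall-type widest-path computation, and check that $\distc$ is a Schulze winner. For hardness we reduce from \BalancedSat. The plan is to use exactly three voters, so that every arc of the majority graph has weight~$1$ and every beatpath strength is $0$ or~$1$. Then $\distc$ is a Schulze winner of $(N,V)$ if and only if every nominee that reaches $\distc$ in $\majGN$ is also reachable from $\distc$. By the Observation on adding the pair of voters $\ola{C}$ and $\ora{C}$, the result then extends to every odd constant number of voters at least three, since adding such pairs leaves the weighted majority graph unchanged.

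\emph{Construction.} For every literal occurrence we introduce candidates. For each variable $x_i$ we create two parties of size two, $\{x_i^1,\bar{x}_i^1\}$ and $\{x_i^2,\bar{x}_i^2\}$. We also add consistency gadgets: small two-member parties whose members have arcs to $\distc$ unless certain nominees are present. These force both parties of a variable to nominate the same polarity. For each clause $\clause_j$ we create a clause candidate $c_j$ forming a singleton party. The votes are chosen so that $c_j$ beats $\distc$ (arc $(c_j,\distc)$), which means $\distc$ must reach $c_j$. We intend nominating a literal occurrence to mean ``this literal is false''. Then $\distc$ should reach $c_j$ through a path $\distc \to \ell \to c_j$ using a nominated \emph{opposite} occurrence, which is a nominee representing a true literal of $\clause_j$.

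\emph{Designing three votes.} The main technical work is to realise a prescribed digraph as the majority graph of three linear orders, with all arcs of weight~$1$ and the non-arcs handled correctly. With three voters every pair of candidates is comparable, so $\majG$ is a tournament. The real difficulty is therefore controlling unwanted paths: $\distc$ must not accidentally reach everything, and unintended candidates must not reach $\distc$. My first attempt is the standard approach of two voters placing candidates in ``blocks'' in mirrored order, with the third voter deciding each arc. Concretely, voter~1 orders $\ora{D}$ with $\distc$ placed near the end, voter~2 orders $\ola{D}$, and voter~3 breaks ties. I would choose the orders so that each clause candidate sits directly after its literal candidates in the orders that matter. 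The goal is that $\distc$ beats every literal occurrence candidate, each literal occurrence beats the clause candidates it belongs to, every clause candidate beats $\distc$, and the intended gadget arcs exist.

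\emph{Correctness.} Given a satisfying assignment, we nominate in each variable party the occurrence candidate of the true literal, in such a way that every clause $c_j$ becomes reachable from $\distc$. We then verify that every nominee reaching $\distc$ is reachable from it. Conversely, from a valid winning nomination we read off a consistent assignment, using the gadgets and the balanced structure with two occurrences per polarity, and every $c_j$ being reachable certifies that every clause is satisfied.

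\emph{Main obstacle.} The hard step is the tournament constraint: every pair gets an arc, so extra arcs create shortcuts. With odd voters we cannot simply omit arcs to block paths. I expect the difficulty to lie in arranging that, among nominees, only the intended paths reach clause candidates from $\distc$, even though $\distc$ must have some out-arcs to start paths. I would first try placing each clause candidate above all literal candidates not in its clause in two of the three votes, so that only its own literals beat it.
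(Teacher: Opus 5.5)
There is a genuine gap, and it is exactly the tournament issue you flag at the end but leave unresolved. Your architecture cannot resolve it. All clause candidates $c_1,\dots,c_m$ are singleton parties, so they are always nominated, and with three voters any two of them are joined by an arc. Every tournament has a Hamiltonian path, so some $c^\ast$ reaches all other clause candidates inside $\{c_1,\dots,c_m\}$. Once $\distc$ reaches $c^\ast$ (via a single true literal of that one clause), it reaches every $c_j$. Hence ``$\distc$ reaches every $c_j$'' does not encode ``every clause is satisfied'', and the reduction is unsound however the votes are chosen. Having $\distc$ beat every literal occurrence directly would also let it jump straight to any clause, so that requirement must go too.

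The missing idea is to make the clause tests \emph{serial} rather than parallel. The paper starts from an explicit three-vote profile whose majority graph is a tournament based on a path: only the path arcs point forward, and all other arcs point backward. It chains gadgets $G_1,\dots,G_n,H_1,\dots,H_m$ through dummies $a_1,\dots,a_{n+1}=b_1,\dots,b_{m+1}$, so that candidates of a later gadget beat those of an earlier one. With $\distc=a_1$ beaten by $b_{m+1}$, the only arcs into later gadgets run through the dummies. So a path from $a_1$ to $b_{m+1}$ must cross each $H_j$ as $b_j\to\ell_j^h\to b_{j+1}$, through a nominated occurrence of a literal of $\clause_j$. The remaining inter-clause arcs point backward and give no shortcut.

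The second missing piece is consistency. Your description (``members have arcs to $\distc$ unless certain nominees are present'') is not realizable: the arc between two nominees is fixed by the votes and does not depend on who else is nominated. Consistency is also indispensable. Without it, each party $\{x_i^h,\ol{x}_i^h\}$ independently covers one of two clauses, which is a polynomial-time orientation problem.

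The paper obtains consistency within the same chain. Gadget $G_i$ offers two routes, $a_i\to y_i^1\to y_i^2\to a_{i+1}$ and $a_i\to\ol{y}_i^1\to\ol{y}_i^2\to a_{i+1}$, and its diagonal arcs point backward ($\ol{y}_i^2\to y_i^1$ and $y_i^2\to\ol{y}_i^1$). So the path must nominate both $y_i^h$ or both $\ol{y}_i^h$. The parties $\{x_i^h,y_i^h\}$ and $\{\ol{x}_i^h,\ol{y}_i^h\}$ then block both occurrences of the literal set to false, so every nominated literal occurrence is true.

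Smaller issues:
\begin{itemize}
\item With three voters an arc has weight $1$ or $3$; the paper's clause gadgets contain weight-$3$ arcs. Your winner characterization therefore needs that no weight-$3$ arc enters $\distc$, which the paper ensures by ranking $a_1$ first in $v_2$.
\item Your encoding switches between ``a nominated occurrence means false'' in the construction and nominating true occurrences in the correctness argument.
\item The three votes, which you rightly call the main technical work, are never given. The paper obtains them from the path profile plus two local insertion operations.
\end{itemize}
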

\begin{proof}
    That \SchPP is in \NP is obvious~\cite{rot-woi:c:possible-and-necessary-president-problem-in-schulze-voting}: Given an instance of this problem, nondeterministically guess a valid set of nominees and for each set guessed, verify in polynomial time whether the nominee of the distinguished party is a Schulze winner of the resulting reduced election.\footnote{Recall that Schulze winners can be determined in polynomial time~\cite{sch:j:schulze-voting,sor-vas-xu:c:fine-grained-complexity-schulze}.}

    To present our \NPhness reduction, we first build a machinery with which we will construct our election in a way that its
    majority graph is of a certain desired form.
    We will construct the votes $v_1$, $v_2$, and $v_3$ of our election as follows.

    \proofsubparagraph{Building a Path as the Majority Graph.}
    To begin with, we construct an election over candidate set~$\{\pathc_1,q_2,\dots,q_t\}$ whose majority graph is the following digraph~$G_{\Prfl}$: It contains all arcs along the directed path~$Q(q_1,q_2,\dots,q_t)$, as well as all arcs pointing ``backward'' on this path
    except to the direct predecessor, i.e., all arcs~$(q_i,q_j)$
    with~$i>j+1$.
    We call such a digraph the \emph{tournament based on path~$Q$}, and we denote it by~$T_Q$.
    A preference profile~$\Prfl_Q$ yielding such a graph~$T_Q$ as its majority graph can be given as follows; for simplicity, we assume that~$t$ is odd\footnote{For even~$t$, we can obtain the appropriate profile by a taking the construction for~$t+1$ and simply deleting the candidate~$q_{t+1}$.}
    (see \Cref{fig:PP:Schulze:threeVoters:NPh:tournament_path} for an illustration):
    \begin{align}
        \notag
        v_1\colon&\quad \pathc_t\ \pathc_{t-2}\ \pathc_{t-1}\ \pathc_{t-4}\ \pathc_{t-3}\ \cdots\ \pathc_3\ \pathc_4\ \pathc_1\ \pathc_2; \\
        \label{eq:PP:Schulze:threeVoters:NPh:path_profile}
        v_2\colon&\quad \pathc_1\ \pathc_2\ \pathc_3\  \cdots\ \pathc_{t-1}\ \pathc_t; \\
        \notag
        v_3\colon&\quad \pathc_{t-1}\ \pathc_t\ \pathc_{t-3}\ \pathc_{t-2}\ \cdots\ \pathc_4\ \pathc_5\ \pathc_2\ \pathc_3\ \pathc_1.
    \end{align}

    \begin{figure}[bt!]
        \centering
        \scalebox{0.95}{
        \begin{tikzpicture}[
            every node/.style={draw,circle,inner sep=0pt, minimum width=15pt}]
            \node (q1) at (0,0) {$q_1$};
            \node (q2) at (2,0) {$q_2$};
            \node (q3) at (4,0) {$q_3$};
            \node (q4) at (6,0) {$q_4$};
            \node (q5) at (8,0) {$q_5$};
            \node (q6) at (10,0) {$q_6$};
            \node (q7) at (12,0) {$q_7$};

            \draw[myarc,myGreen,line width=1pt]    (q1) to (q2);
            \draw[myarc,myRed,line width=1pt] (q2) to (q3);
            \draw[myarc,myGreen,line width=1pt]    (q3) to (q4);
            \draw[myarc,myRed,line width=1pt] (q4) to (q5);
            \draw[myarc,myGreen,line width=1pt]    (q5) to (q6);
            \draw[myarc,myRed,line width=1pt] (q6) to (q7);

            \draw[myarc,myBlue, out=155, in=25]    (q3) to (q1);
            \draw[myarc,myBlue, out=150, in=30]    (q4) to (q1);
            \draw[myarc,myBlue, out=145, in=35]    (q5) to (q1);
            \draw[myarc,myBlue, out=140, in=40]    (q6) to (q1);
            \draw[myarc,myBlue, out=135, in=45]    (q7) to (q1);

            \draw[myarc,myBlue, out=155, in=25]    (q4) to (q2);
            \draw[myarc,myBlue, out=150, in=30]    (q5) to (q2);
            \draw[myarc,myBlue, out=145, in=35]    (q6) to (q2);
            \draw[myarc,myBlue, out=140, in=40]    (q7) to (q2);

            \draw[myarc,myBlue, out=155, in=25]    (q5) to (q3);
            \draw[myarc,myBlue, out=150, in=30]    (q6) to (q3);
            \draw[myarc,myBlue, out=145, in=35]    (q7) to (q3);

            \draw[myarc,myBlue, out=155, in=25]    (q6) to (q4);
            \draw[myarc,myBlue, out=150, in=30]    (q7) to (q4);

            \draw[myarc,myBlue, out=155, in=25]    (q7) to (q5);
        \end{tikzpicture}
        }
        \caption{The tournament based on a path of
          length~$6$ that is the majority graph of the election given by \eqref{eq:PP:Schulze:threeVoters:NPh:path_profile} for~$t=7$.
        An arc~$(q_i,q_j)$ in \textcolor{myGreen}{\bfseries green} means that votes~$v_1$ and~$v_2$ (but not~$v_3$) prefer~$q_i$ to~$q_j$;
        an arc~$(q_i,q_j)$ in \textcolor{myRed}{\bfseries red} means that~$v_2$ and~$v_3$ (but not~$v_1$) prefer~$q_i$ to~$q_j$; and finally,
        an arc~$(q_i,q_j)$ in \textcolor{myBlue}{\bfseries blue} means that~$v_1$ and~$v_3$ (but not~$v_2$) prefer~$q_i$ to~$q_j$.}
        \label{fig:PP:Schulze:threeVoters:NPh:tournament_path}
    \end{figure}
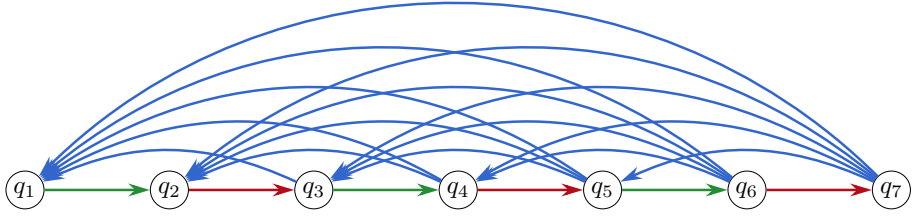

    \proofsubparagraph{Adding an Alternative Path of Length Three.}
    Next, we show how to realize certain series-parallel graphs as majority graphs based on~$T_Q$.
    More precisely, we will show how to modify the preference profile~$\Prfl_Q$ in order to attach a path~$(\pathc_i,\pathc'_{i+1},\pathc'_{i+2},\pathc_{i+3})$ to~$T_Q$ using newly introduced candidates~$\pathc'_{i+1}$ and~$\pathc'_{i+2}$.
    We can think of such an operation as introducing an ``alternative path'' from~$\pathc_i$ to~$\pathc_{i+3}$.
    Notice that~$\Prfl_Q$ contains exactly one vote in~$\{v_1,v_3\}$ whose preference list contains~$\pathc_{i+1}$ and $\pathc_{i+2}$ consecutively (depending on the parity of~$i$), while the other vote in~$\{v_1,v_3\}$ does \emph{not} contain these two candidates consecutively and, in fact, prefers~$\pathc_{i+2}$ to~$\pathc_{i+1}$.
    We perform the following modification on~$\Prfl_Q$:
    \begin{itemize}
    \item For~$v \in \{v_1,v_3\}$,
      if the preferences of~$v$ contain~$\pathc_{i+1}\ \pathc_{i+2}$ consecutively, then we replace them with the series~$\pathc'_{i+1}\ \pathc'_{i+2}\ \pathc_{i+1}\ \pathc_{i+2}$;
        otherwise, we simply insert~$\pathc'_{i+1}$ right after~$\pathc_{i+1}$ and insert~$\pathc'_{i+2}$ right after~$\pathc_{i+2}$.
      \item In the preferences of~$v_2$, we replace~$\pathc_{i+1}\ \pathc_{i+2}$ with
        $\pathc_{i+1}\ \pathc_{i+2}\ \pathc'_{i+1}\ \pathc'_{i+2}$.
    \end{itemize}
    Notice that~$\pathc_i$ defeats both~$\pathc_{i+1}$ and~$\pathc'_{i+1}$, and~$\pathc_{i+3}$ gets defeated by both~$\pathc_{i+2}$ and~$\pathc'_{i+2}$.
    Moreover,~$\pathc'_{i+1}$ defeats~$\pathc'_{i+2}$ because~$\pathc_{i+1}$ defeats~$\pathc_{i+2}$; however, $\pathc_{i+1}$ does \emph{not} defeat~$\pathc'_{i+2}$, and neither does~$\pathc'_{i+1}$ defeat~$\pathc_{i+2}$.
    Hence, all paths from~$\pathc_i$ to~$\pathc_{i+3}$ must either use both~$\pathc_{i+1}$ and~$\pathc_{i+2}$, or they must use both~$\pathc'_{i+1}$ and~$\pathc'_{i+2}$---this will be the key property of this type of modification step.
    See \Cref{fig:PP:Schulze:threeVoters:NPh:adding_path-l3} for an illustration.

    \begin{figure}[bt!]
        \centering
        \begin{tikzpicture}[every node/.style={draw,circle,inner sep=0pt, minimum width=15pt}]
            \node (q1) at (0,0) {$q_1$};
            \node (q2) at (2,1) {$q_2$};
            \node (q3) at (4,1) {$q_3$};
            \node (q4) at (6,0) {$q_4$};
            \node (q'2) at (2,-1) {$q'_2$};
            \node (q'3) at (4,-1) {$q'_3$};

            \draw[myarc,myGreen,line width=1pt]    (q1) to (q2);
            \draw[myarc,myRed,line width=1pt] (q2) to (q3);
            \draw[myarc,myGreen,line width=1pt]    (q3) to (q4);

            \draw[myarc,myGreen,line width=1pt]    (q1) to (q'2);
            \draw[myarc,myRed,line width=1pt] (q'2) to (q'3);
            \draw[myarc,myGreen,line width=1pt]    (q'3) to (q4);

            \draw[myarc,myGreen,line width=1pt]    (q2) to (q'2);
            \draw[myarc,myGreen,line width=1pt]    (q3) to (q'3);
            \draw[myarc,myBlue,line width=1pt]     (q3) to (q'2);
            \draw[myarc,myBlue,line width=1pt]     (q'3) to (q2);

            \draw[myarc,myBlue,line width=1pt]     (q3) to (q'2);
            \draw[myarc,myBlue,line width=1pt]     (q'3) to (q2);

            \draw[myarc,myBlue,line width=1pt,bend right=50] (q3) to (q1);
            \draw[myarc,myBlue,line width=1pt,bend left=50]  (q'3) to (q1);
            \draw[myarc,myBlue,line width=1pt,bend right=50] (q4) to (q2);
            \draw[myarc,myBlue,line width=1pt,bend left=50]  (q4) to (q'2);
            \draw[myBlue,line width=1pt,in=0,out=110]  (q4) to (3,2);
            \draw[myarc,myBlue,line width=1pt,in=70,out=180]  (3,2) to (q1);

            \node[draw=none] (prf) at (9,0) {
                \begin{tabular}{ll}
                  ~$v_1$ :&~$\pathc_3\ \pathc'_3\ \pathc_4\ \pathc_1\ \pathc_2\ \pathc'_2$ \\[4pt]
                  ~$v_2$ :&~$\pathc_1\ \pathc_2\ \pathc_3\ \pathc'_2\ \pathc'_3\ \pathc_4$ \\[4pt]
                  ~$v_3$ :&~$\pathc_4\  \pathc'_2\  \pathc'_3\  \pathc_2\  \pathc_3\  \pathc_1$
                \end{tabular}
            };
        \end{tikzpicture}
        \caption{The majority graph of an election obtained by adding an alternative path~$(\pathc_1,\pathc'_2,\pathc'_3,\pathc_4)$ to a tournament based on the path~$(\pathc_1,\pathc_2,\pathc_3,\pathc_4)$. The meaning of colored arcs is the same as in \Cref{fig:PP:Schulze:threeVoters:NPh:tournament_path}.}
        \label{fig:PP:Schulze:threeVoters:NPh:adding_path-l3}
    \end{figure}
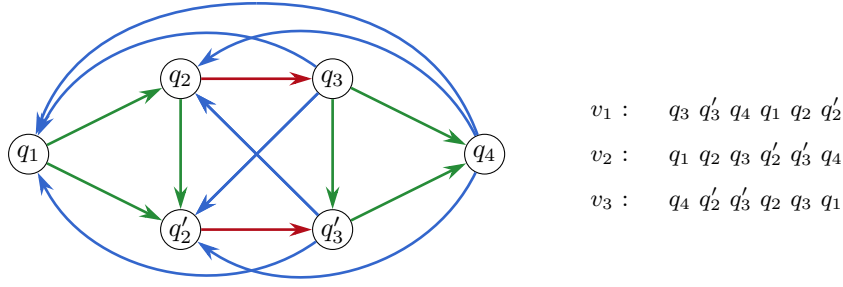

    \proofsubparagraph{Adding Two Alternative Paths of Length Two.}
    We will also add another type of modification where we add not one but two alternative paths at once; however, we will only do this for subpaths of~$Q$ of length two (instead of three as above). More precisely, we now show how to modify the preference profile~$\Prfl_Q$ so that paths~$(\pathc_i,\pathc'_{i+1},\pathc_{i+2})$ and~$(\pathc_i,\pathc''_{i+1},\pathc_{i+2})$ are attached to~$T_Q$ using newly introduced candidates~$\pathc'_{i+1}$ and~$\pathc''_{i+1}$.
    To achieve this, we simply insert~$\pathc'_{i+1}$ and~$\pathc''_{i+1}$ right after~$\pathc_{i+1}$ in any order; then~$\pathc_i$ defeats each of~$\pathc_{i+1},\pathc'_{i+1}$, and~$\pathc''_{i+1}$, and in return, each of these candidates defeat~$\pathc_{i+2}$.
    In particular, all paths from~$\pathc_i$ to~$\pathc_{i+2}$ contain (at least) one of the candidates~$\pathc_{i+1},\pathc'_{i+1}$, and~$\pathc''_{i+1}$.
    See \Cref{fig:PP:Schulze:threeVoters:NPh:adding_paths-l2} for an illustration.

    \begin{figure}[bt!]
        \centering
        \begin{tikzpicture}[
            every node/.style={draw,circle,inner sep=0pt, minimum width=15pt}]
            \node[draw=none] (qdummy) at (0,0) {\phantom{$q_1$}};
            \node (q1) at (1,0) {$q_1$};
            \node (q2) at (3,1) {$q_2$};
            \node (q3) at (5,0) {$q_3$};
            \node (q'2) at (3,0) {$q'_2$};
            \node (q''2) at (3,-1) {$q''_2$};

            \draw[myarc,myGreen,line width=1pt] (q1) to (q2);
            \draw[myarc,myRed,line width=1pt] (q2) to (q3);

            \draw[myarc,myGreen,line width=1pt]    (q1) to (q'2);
            \draw[myarc,myGreen,line width=1pt]    (q1) to (q''2);
            \draw[myarc,myRed,line width=1pt] (q'2) to (q3);
            \draw[myarc,myRed,line width=1pt] (q''2) to (q3);

            \draw[myarc,line width=2pt]    (q2) to (q'2);
            \draw[myarc,line width=2pt]    (q'2) to (q''2);
            \draw[myarc,line width=2pt,bend left=40]    (q2) to (q''2);

            \draw[myBlue,line width=1pt,in=0, out=120] (q3) to (3,1.6);
            \draw[myarc,myBlue,line width=1pt,out=180, in=60] (3,1.6) to (q1);

            \node[draw=none] (prf) at (9,0) {
                \begin{tabular}{ll}
                   $v_1$ :&~$\pathc_3\ \pathc_1\ \pathc_2\ \pathc'_2\ \pathc''_2$\phantom{\ $\pathc_x$} \\[4pt]
                   $v_2$ :&~$\pathc_1\ \pathc_2\ \pathc'_2\ \pathc''_2\ \pathc_3$ \\[4pt]
                   $v_3$ :&~$\pathc_2\  \pathc'_2\  \pathc''_2\  \pathc_3\  \pathc_1$
                \end{tabular}
            };
        \end{tikzpicture}
        \caption{The majority graph of an election obtained by adding alternative paths~$(\pathc_1,\pathc'_2,\pathc_3)$ and~$(\pathc_1,\pathc''_2,\pathc_3)$ to a tournament based on the path~$(\pathc_1,\pathc_2,\pathc_3)$.
        The meaning of colored arcs is the same as in \Cref{fig:PP:Schulze:threeVoters:NPh:tournament_path}, while \textbf{bold, black} arcs represent arcs with weight~$3$.}
        \label{fig:PP:Schulze:threeVoters:NPh:adding_paths-l2}
    \end{figure}
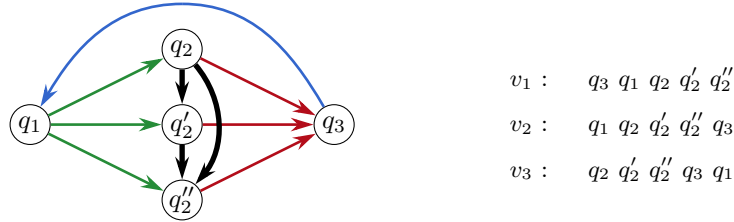

    \proofsubparagraph{Constructing the Reduction.}
    Let~$\varphi$ be our input formula for \BalancedSat, defined over variable set~$X=\{x_1,\dots,x_n\}$, and containing clauses
    $\clause_1,\dots,\clause_m$.
    We are going to construct an election~$\election$.
    We first create~$n$ singleton \emph{variable dummies}~$a_1,\dots,a_n$
    and~$m+1$ singleton \emph{clause dummies}~$b_1,\dots,b_{m+1}$; for ease of notation,  we will write~$a_{n+1}=b_1$.
    Without loss of generality, we may assume that~$n$ is even (as otherwise we can just create a copy~$\varphi'$ of~$\varphi$ on new variables and take~$\varphi \wedge \varphi'$).

    Next, for each variable~$x_i \in X$, we introduce a \emph{variable gadget}~$G_i$ that involves candidates~$a_i$ and~$a_{i+1}$ together with four \emph{variable selection candidates}~$y_i^1,y_i^2,\ol{y}_i^1,\ol{y}_i^2$. We will ensure that the majority graph induced by these candidates will be as shown in the left part of \Cref{fig:PP:Schulze:threeVoters:NPh:gadgets} (which except for the vertex names is the same as \Cref{fig:PP:Schulze:threeVoters:NPh:adding_path-l3}); that is, within gadget~$G_i$ we have that
    \begin{itemize}
        \item~$a_i$ defeats only~$y_i^1$ and~$\ol{y}_i^1$,
        \item~$y_i^1$ defeats only~$y_i^2$ and~$\ol{y}_i^1$,
        \item~$\ol{y}_i^1$ defeats only~$\ol{y}_i^2$,
        \item~$y_i^2$ defeats exactly~$\ol{y}_i^1$,~$\ol{y}_i^2$,~$a_{i+1}$, and~$a_i$
        \item~$\ol{y}_i^2$ defeats exactly~$a_{i+1}$,~$a_i,$ and~$y_i^1$, and finally,
        \item~$a_{i+1}$ defeats exactly~$a_i$,~$y_i^1$, and~$\ol{y}_i^1$.
    \end{itemize}
    In particular, each path within~$G_i$ from~$a_i$ to~$a_{i+1}$  uses either both~$y_i^1$ and~$y_i^2$, or both~$\ol{y}_i^1$ and~$\ol{y}_i^2$.

    \begin{figure}[bt!]
        \centering
        \scalebox{0.87}{
        \begin{tikzpicture}[
            every node/.style={draw,circle,inner sep=0pt, minimum width=20pt}]
            \node[draw=none]  at (-0.2,2) {$G_i\colon$};

            \node (q1) at (0,0) {$a_i$};
            \node (q2) at (2,1) {$y_i^1$};
            \node (q3) at (4,1) {$y_i^2$};
            \node (q4) at (6,0) {$a_{i+1}$};
            \node (q'2) at (2,-1) {$\ol{y}_i^1$};
            \node (q'3) at (4,-1) {$\ol{y}_i^2$};

            \draw[myarc,myGreen,line width=1pt]    (q1) to (q2);
            \draw[myarc,myRed,line width=1pt] (q2) to (q3);
            \draw[myarc,myGreen,line width=1pt]    (q3) to (q4);

            \draw[myarc,myGreen,line width=1pt]    (q1) to (q'2);
            \draw[myarc,myRed,line width=1pt] (q'2) to (q'3);
            \draw[myarc,myGreen,line width=1pt]    (q'3) to (q4);

            \draw[myarc,myGreen,line width=1pt]    (q2) to (q'2);
            \draw[myarc,myGreen,line width=1pt]    (q3) to (q'3);
            \draw[myarc,myBlue,line width=1pt]     (q3) to (q'2);
            \draw[myarc,myBlue,line width=1pt]     (q'3) to (q2);

            \draw[myarc,myBlue,line width=1pt]     (q3) to (q'2);
            \draw[myarc,myBlue,line width=1pt]     (q'3) to (q2);

            \draw[myarc,myBlue,line width=1pt,bend right=50] (q3) to (q1);
            \draw[myarc,myBlue,line width=1pt,bend left=50]  (q'3) to (q1);
            \draw[myarc,myBlue,line width=1pt,bend right=50] (q4) to (q2);
            \draw[myarc,myBlue,line width=1pt,bend left=50]  (q4) to (q'2);
            \draw[myBlue,line width=1pt,in=0,out=110]  (q4) to (3,2);
            \draw[myarc,myBlue,line width=1pt,in=70,out=180]  (3,2) to (q1);

            \node[draw=none]  at (7.8,2) {$H_j\colon$};

            \node (q1) at (8,0) {$b_j$};
            \node (q2) at (10.5,1.3) {$\ell_j^1$};
            \node (q3) at (13,0) {$b_{j+1}$};
            \node (q'2) at (10.5,0) {$\ell_j^2$};
            \node (q''2) at (10.5,-1.3) {$\ell_j^3$};

            \draw[myarc,myGreen,line width=1pt]    (q1) to (q2);
            \draw[myarc,myRed,line width=1pt] (q2) to (q3);

            \draw[myarc,myGreen,line width=1pt]    (q1) to (q'2);
            \draw[myarc,myGreen,line width=1pt]    (q1) to (q''2);
            \draw[myarc,myRed,line width=1pt] (q'2) to (q3);
            \draw[myarc,myRed,line width=1pt] (q''2) to (q3);

            \draw[myarc,line width=2pt]    (q2) to (q'2);
            \draw[myarc,line width=2pt]    (q'2) to (q''2);
            \draw[myarc,line width=2pt,bend left=40]    (q2) to (q''2);

            \draw[myBlue,line width=1pt,in=0, out=120] (q3) to (10.5,2);
            \draw[myarc,myBlue,line width=1pt,out=180, in=60] (10.5,2) to (q1);
        \end{tikzpicture}
        }
        \caption{Variable and clause gadgets in the proof of \Cref{thm:PP:Schulze:threeVoters:NPh}.
          Notation is the same as in \Cref{fig:PP:Schulze:threeVoters:NPh:tournament_path,fig:PP:Schulze:threeVoters:NPh:adding_paths-l2}.
        }
        \label{fig:PP:Schulze:threeVoters:NPh:gadgets}
    \end{figure}
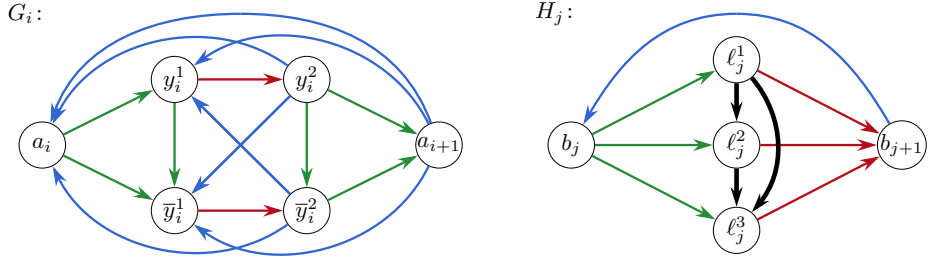

    Next, for each clause~$\clause_j$, we introduce a \emph{clause gadget}~$H_j$ that involves candidates~$b_j$ and~$b_{j+1}$, as well as three \emph{literal occurrence candidates}, $\ell_j^1$,~$\ell_j^2$, and~$\ell_j^3$, corresponding to the three literals occurring in~$\clause_j$; for simplicity, we will identify these three candidates with the corresponding literal occurrences.
    Hence, the set of all literal occurrence candidates will be~$\{x^1_i,x_i^2,\ol{x}_i^1,\ol{x}_i^2:i \in [n]\}$.
    We will ensure that the majority graph induced by the five candidates in the clause gadget~$H_j$ will be as shown in the right part of \Cref{fig:PP:Schulze:threeVoters:NPh:gadgets} (which except for the vertex names is the same as \Cref{fig:PP:Schulze:threeVoters:NPh:adding_paths-l2}); that is, within gadget~$H_i$ we have that
    \begin{itemize}
        \item~$b_j$ defeats each of~$\ell_j^1$,~$\ell_j^2$, and~$\ell_j^3$,
        \item~$\ell_j^1$ defeats~$\ell_j^2$,~$\ell_j^3$, and~$b_{j+1}$,
        \item~$\ell_j^2$ defeats~$\ell_j^3$ and~$b_{j+1}$,
        \item~$\ell_j^3$ defeats only~$b_{j+1}$, and finally,
        \item~$b_{j+1}$ defeats~$b_j$.
    \end{itemize}
    In particular, each path from~$b_j$ to~$b_{j+1}$ uses
    at least
    one of~$\ell_j^1$,~$\ell_j^2$, and~$\ell_j^1$.

    The gadgets~$G_1,\dots,G_n,H_1,\dots,H_{m}$ will be chained one after the other, connected via the variable and clause dummies.
    We will ensure the following property:
    \begin{itemize}
        \item[$(\mypropsymbol)$] \raisebox{\ht\strutbox}{\hypertarget{prop:chain}{}}
        For any two (distinct)  candidates~$c$ and~$c'$ contained in different gadgets, if the gadget containing~$c$ precedes the one containing~$c'$ in the above ordering, then candidate~$c'$ will defeat candidate~$c$.
    \end{itemize}
    This property completes the definition of our majority graph~$G$, since we have completely characterized all relations between candidates within gadgets and between candidates in different gadgets.

    Instead of directly defining the preference profile that yields~$G$ as its a majority graph, we will construct this profile using the machinery developed at the beginning of the proof. To begin with, we construct a profile whose majority graph is the tournament based on the path
    \begin{eqnarray*}
        Q_0(a_1,y_1^1,y_1^2,a_2,y_2^1,y_2^2,a_3,\dots,a_n,y_n^1,y_n^2,a_{n+1}= b_1,\ell_1^1,b_2,\ell_2^1,\dots,b_m,\ell_m^1,b_{m+1}).
    \end{eqnarray*}
    Then, repeatedly applying our method for adding an alternative path of length three, we attach the paths~$(a_i,\ol{y}_i^1,\ol{y}_i^2,a_{i+1})$ for each~$i \in [n]$.
    Next, repeatedly applying our method for adding two alternative paths of length two, we attach the paths~$(b_j,\ell_j^2,b_{j+1})$ and~$(b_j,\ell_j^3,b_{j+1})$ for each~$j \in [m]$.
    Notice that since the sets~$\{y_i^1,y_i^2\}$ for~$i \in [n]$ and the sets~$\{\ell_j^1\}$ for~$j \in [m]$ are pairwise disjoint, these operations can be applied in any order, and will result in a graph that will ensure the relations within each variable and clause gadget as required and will also ensure property~\propchain.

    This completes the definition of our election~$\election$. For convenience, we also give the preference profile for~$\election$:
    \begin{align*}
        v_1\colon&\quad b_{m+1}\ b_m\  \ell_m^1\ \ell_m^2\ \ell_m^3\
        b_{m-1}\  \ell_{m-1}^1\ \ell_{m-1}^2\ \ell_{m-1}^3\
        \cdots\
        b_1=a_{n+1}\  \ell_1^1\ \ell_1^2\ \ell_1^3\
        \\
        & \phantom{m} \ol{y}_n^1\ \ol{y}_n^2\ y_n^1\ y_n^2\ y_{n-1}^2\ \ol{y}_{n-1}^2\  a_n\
        \cdots\  a_3\  y_3^1\ \ol{y}_3^1\ \ol{y}_2^1\ \ol{y}_2^2\ y_2^1\ y_2^2\ y_1^2\ \ol{y}_1^2\  a_2\ a_1\ y_1^1\ \ol{y}_1^1;
        \\
        v_2\colon&\quad a_1\  y_1^1\ y_1^2\ \ol{y}_1^1\ \ol{y}_1^2\  a_2\  y_2^1\ y_2^2\ \ol{y}_2^1\ \ol{y}_2^2\  a_3\ \cdots\  a_n\  y_n^1\ y_n^2\ \ol{y}_n^1\ \ol{y}_n^2\  \\
        & \phantom{m} b_1=a_{n+1}\  \ell_1^1\ \ell_1^2\ \ell_1^3\
        b_2\  \ell_2^1\ \ell_2^2\ \ell_2^3\ \cdots\
        b_m\  \ell_m^1\ \ell_m^2\ \ell_m^3\  b_{m+1};
        \\
        v_3\colon&\quad
        \ell_m^1\ \ell_m^2\ \ell_m^3\  b_{m+1}\
        \ell_{m-1}^1\ \ell_{m-1}^2\ \ell_{m-1}^3\  b_m\  \cdots\
         \ell_1^1\ \ell_1^2\ \ell_1^3\  b_2\  \\
        & \phantom{m}
        y_n^2\  \ol{y}_n^2\ a_{n+1}=b_1\ a_n\ y_n^1\ \ol{y}_n^1\
        \cdots\  y_2^2\  \ol{y}_2^2\ a_3\ a_2\ y_2^1\ \ol{y}_2^1\ \ol{y}_1^1\ \ol{y}_1^2\ y_1^1\ y_1^2\ a_1.
    \end{align*}

    Completing the definition of our instance of \SchPP,  we set~$\{a_1\}$ as the distinguished party and define the party structure as follows.
    As already mentioned, all variable and clause dummies will form their own singleton party.
    Hence, it remains to partition the variable selection candidates and the literal occurrence candidates into parties.
    To do so, for each variable~$x_i \in X$ we form four parties:
    $\{x_i^1,y_i^1\}$,
    $\{x_i^2,y_i^2\}$,
    $\{\ol{x}_i^1,\ol{y}_i^1\}$, and
    $\{\ol{x}_i^2,\ol{y}_i^2\}$.
    This completes the definition of our instance; note that the maximum party size is two.

    \proofsubparagraph{Correctness.}
    First, assume that~$\varphi$ admits a satisfying truth assignment. Let~$T$ denote the set of true literals, and let~$Z_T=\{z^1,z^2\}$ denote the set of the corresponding literal occurrence candidates.
    Construct the set~$\nomination$ of nominees as follows: Besides all singletons,~$\nomination$ contains all candidates in~$Z_T$, as well as the variable selection candidate from each party of size two whose literal occurrence candidate is \emph{not} in~$Z_T$.
    That is, if~$x_i$ is a true literal, then~$\nomination$ contains~$x_i^1$,~$x_i^2$,~$\ol{y}_i^1$, and~$\ol{y}_i^2$, and otherwise it contains the candidates~$\ol{x}_i^1$,~$\ol{x}_i^2$,~$y_i^1$, and~$y_i^2$.

    Observe that the majority subgraph of the reduced election~$\election(N)$ contains a path from~$a_i$ to~$a_{i+1}$ for each~$i \in [n]$ through the two nominated variable selection candidates associated with~$x_i$, and moreover, it contains a path from~$b_j$ to~$b_{j+1}$ for each~$j \in [m]$ through the literal occurrence candidate associated with some true literal that occurs in the clause~$\clause_j$ (if there are two or three true literals in~$\clause_j$, there is such a path through each of them).
    This means that there is a path in~$\majGN$ from~$a_1$ to every nominee in~$\nomination$, and therefore the beatpath strength~$\beats(a_1,c)$  of~$a_1$ over each nominee~$c \in N$ is exactly~$1$ (note that~$a_1$ has no outgoing arcs of weight~$3$ in~$\majG$).
    Since no candidate is preferred to~$a_1$ by all voters (as~$a_1$ is the top choice of~$v_2)$, it follows that~$\beats(c,a_1)\leq 1$ for each nominee~$c \in N$.
    Hence,~$a_1$ is a winner in~$\election(N)$.

    For the other direction, note that~$a_1$ is defeated by~$b_{m+1}$, so for~$a_1$ to be a winner in a reduced election over a set of nominees, the majority graph~$\majGN$ must contain a path from~$a_1$ to~$b_{m+1}$.
    This path must go through all variable and clause dummies, and in particular, must subsume a path from~$a_i$ to~$a_{i+1}$ for each~$i \in [n]$ and a path from~$b_j$ to~$b_{j+1}$ for each~$j \in [m]$.
    Due to the properties of variable gadgets, this means that~$\nomination$ must either contain both~$\ol{y}_i^1$ and~$\ol{y}_i^2$, or it must contain both~$y_i^1$ and~$y_i^2$.
    In the former case holds, we set~$x_i$ to \true, otherwise we set~$x_i$ to \false.
    We claim that the resulting truth assignment~$\alpha$ satisfies~$\varphi$.
    Indeed, if~$\alpha(x_i)=\true$, then (as both~$\ol{y}_i^1$ and~$\ol{y}_i^2$ are nominated) neither~$\ol{x}_i^1$ nor~$\ol{x}_i^2$ is nominated in~$\election(N)$ due to the party structure; similarly, if~$\alpha(x_i)=\false$, then (as both~$y_i^1$ and~$y_i^2$ are nominated) neither~$x_i^1$ nor~$x_i^2$ is nominated.
    Therefore, all nominated literal occurrence candidates correspond to literals that evaluate to \true{} under~$\alpha$.
    Now, since for each~$j \in [m]$, all paths from~$b_j$ to~$b_{j+1}$ must contain at least one literal occurrence candidate that corresponds to a literal contained in the clause~$\clause_j$, it follows that each clause~$\clause_j$ contains at least one literal that is set to \true\ under~$\alpha$.
    Hence,~$\varphi$ is satisfied by~$\alpha$, proving the correctness of our reduction for the case of three voters.
    %
\end{proof}

\begin{theorem}
\label{thm:PP:Schulze:fourVoters:NPh}
  \SchPP is \NPc, even if no party contains more than two candidates and the number
  of voters is an even constant at least four.
\end{theorem}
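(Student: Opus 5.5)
Membership in \NP follows as in \Cref{thm:PP:Schulze:threeVoters:NPh}. For hardness I will again reduce from \BalancedSat and reuse the gadget structure: variable gadgets $G_i$, clause gadgets $H_j$, chained via the dummies $a_1,\dots,a_{n+1}=b_1,\dots,b_{m+1}$ and satisfying property~\propchain. The parties are the same, with distinguished party $\{a_1\}$. By the observation on adding a pair of mutually reversed votes, it suffices to handle exactly four voters. With an even number of voters, arcs have weight $2$ or $4$, and ties give no arc. So the plan is to design four votes whose majority graph has the same \emph{reachability} structure as the graph $G$ from the three-voter proof. This need not be the same graph, since unresolved pairs may now simply be missing arcs.

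The key idea is that for Schulze with possible president, only the following matter. First, every candidate that beats $a_1$ must be reachable from $a_1$ in $\majGN$. Second, every path from $a_1$ to $b_{m+1}$ must pass through the gadgets in order, using a consistent choice ($y$-pair or $\ol{y}$-pair) in each $G_i$ and some literal occurrence candidate in each $H_j$. Third, there must be no strong arcs into $a_1$ that could outweigh its beatpaths.

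I would build the four votes as follows. Take $v_1,v_2,v_3$ from the three-voter profile and add a fourth vote $v_4$ equal to $v_2$, i.e.\ the order along $Q_0$ with the alternative-path candidates inserted as in $v_2$. Then every arc that $v_2$ supports in $G$ (green and red arcs, the forward arcs) becomes an arc of weight $2$, possibly $4$. The blue ``backward'' arcs supported only by $v_1,v_3$ become ties, and the bold weight-$3$ arcs become weight-$4$ arcs. In particular $a_1$, which is first in $v_2=v_4$, has no incoming arcs at all. Hence $a_1$ wins iff $\beats_N(a_1,c)\ge \beats_N(c,a_1)=0$, which always holds.

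That trivializes the instance, so the fourth vote must be chosen more carefully. I would instead set $v_4$ to be $v_2$ with $b_{m+1}$ moved to the top. This keeps all the forward-path arcs of weight at least $2$, since $v_2,v_4$ agree and $v_1,v_3$ split on them. It also creates an arc $(b_{m+1},a_1)$, because $v_1,v_3,v_4$ all rank $b_{m+1}$ above $a_1$, giving weight $2$. The within-gadget arcs that $v_1$ and $v_3$ both oppose, notably $(y_i^1,\ol{y}_i^1)$-type cross arcs, must be checked to be absent or irrelevant. In particular I must verify that no arc from $y_i^1$ to $\ol{y}_i^2$ or from $\ol{y}_i^1$ to $y_i^2$ appears, which holds because $v_1,v_3$ still oppose $v_2$ on those pairs.

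Correctness then mirrors the three-voter proof. A satisfying assignment yields a nominated path $a_1\to b_{m+1}$ of strength $2$, so $\beats_N(a_1,c)\ge 2$ for all nominees $c$. The arcs into $a_1$ have weight at most $2$, since $a_1$ is top of $v_2$, so $a_1$ wins. Conversely, $a_1$ needs a path to $b_{m+1}$, which forces a consistent assignment satisfying every clause.

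The main obstacle is the precise bookkeeping of which pairs become ties versus arcs under four votes. I would write out the full four-vote profile explicitly and recheck the $G_i$ and $H_j$ pair tables. If a spurious shortcut arc appears, I would repair it by locally reordering the inserted candidates in $v_4$.
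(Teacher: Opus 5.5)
Your concrete fourth vote turns every constructed instance into a no-instance. You claim the forward arcs survive because $v_2$ and $v_4$ agree, but that fails at the candidate you moved. Consider each pair $\{\ell_m^h,b_{m+1}\}$, $h\in[3]$. The votes $v_2$ and $v_3$ prefer $\ell_m^h$, while $v_1$ (which starts with $b_{m+1}$) and your $v_4$ prefer $b_{m+1}$, so these pairs become ties. Against every other candidate $c$, the votes $v_1,v_3,v_4$ all rank $b_{m+1}$ higher, giving an arc $(b_{m+1},c)$ of weight~$2$. So $b_{m+1}$ has no incoming arc at all, yet it beats $a_1$. Both are singleton parties, so every reduced election has $\beats_N(b_{m+1},a_1)=2>0=\beats_N(a_1,b_{m+1})$. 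Hence $a_1$ is never a winner, whatever $\varphi$ is.

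Local reordering of $v_4$ cannot repair this, as long as you keep $v_1,v_2,v_3$. Adding one vote shifts every pairwise margin by exactly~$1$. So the new majority graph is the three-voter tournament $G$ with all weight-$3$ arcs kept, and each weight-$1$ arc kept (now with weight~$2$) if and only if $v_4$ agrees with it. No arc can be created or reversed.

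Your correctness argument, for a satisfying nomination, uses a path from $a_1$ through every variable gadget and through one literal occurrence candidate per clause gadget to $b_{m+1}$, closed by an arc back into $a_1$. Every arc on this cycle has weight~$1$ in $G$:
\begin{itemize}
\item the only weight-$3$ arcs of $G$ are the intra-clause arcs $(\ell_j^h,\ell_j^{h'})$ with $h<h'$, which this path does not use;
\item nothing beats $a_1$ unanimously, since $a_1$ tops $v_2$.
\end{itemize}
A linear order $v_4$ cannot agree with every arc of a directed cycle. So whatever $v_4$ you choose, at least one arc your argument relies on becomes a tie.

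You therefore need a genuinely new four-voter profile. The paper builds one from scratch:
\begin{itemize}
\item candidates are a singleton $p$, variable candidates in parties $\{x_i,\ol{x}_i\}$, singleton clause candidates $\clause_j$, and literal occurrence candidates in parties $\{x_i^h,\ol{x}_i^h\}$;
\item the votes make the only arcs $(p,z)$, $(z,z^h)$, $(\ell_j^h,\clause_j)$, and $(\clause_j,p)$, all of weight~$2$, with every other pair tied.
\end{itemize}
Then $p$ wins if and only if each $\clause_j$ is reachable from $p$ via a nominated literal $z$ and a nominated occurrence $z^h$ lying in $\clause_j$. This is exactly satisfiability.
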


\begin{proof}
    We again present a reduction from \BalancedSat.
    Let~$\varphi=\bigwedge_{j \in[m]} \clause_j$ be the input 3-CNF formula defined over a set~$X=\{x_1,\dots,x_n\}$ of variables.
    We let~$\clauses = \{\clause_1,\dots,\clause_m\}$ denote the set of clauses in~$\varphi$.

    \proofsubparagraph{Construction.}
    Let us create an election~$\election$ with four voters.
    We let~$\{p\}$ be the distinguished party.
    For each variable~$x_i \in X$, we create a party~$\{x_i,\overline{x}_i\}$ containing two \emph{variable candidates} corresponding to the positive and negative form of~$x_i$.
    For each clause~$\clause_j \in \clauses$, we create a singleton party containing the \emph{clause candidate}~$\clause_j$.
    We further create four \emph{literal occurrence candidates},~$x_i^1$,~$x_i^2$,~$\ol{x}_i^1$, and~$\ol{x}_i^2$, corresponding to the four occurrences of each variable~$x_i \in X$; these candidates are partitioned into parties~$\{x_i^1,\ol{x}_i^1\}$ and~$\{x_i^2,\ol{x}_i^2\}$.
    This completes the definition of the candidate set and the party structure in~$\election$.
    Note that, as promised, each party has at most two candidates.

    We will use the following notation.
    Let~$\ol{X}=\{\ol{x}_1,\dots,\ol{x}_n\}$ denote the set of all negated variables,
    so that~$X \cup \ol{X}$ is the set of all variable candidates.\footnote{With a slight abuse of notation, we identify variables and their negated forms, as well as clauses, with the corresponding candidates.}
    Let~$L=\{x_i^h,\ol{x}_i^h:i \in [n],\ h \in [2]\}$ denote the set of all literal occurrence candidates.

    We are now ready to present the preferences of our four votes,~$v_1,\dots,v_4$:
    \begin{align*}
        v_1\colon&\quad \ora{\clauses}\   p\
        x_1\ x_1^1\ x_1^2\  \cdots\
        x_n\ x_n^1\ x_n^2\
        \ol{x}_1\ \ol{x}_1^1\ \ol{x}_1^2\ \cdots\
        \ol{x}_n\ \ol{x}_n^1\ \ol{x}_n^2; \\[4pt]
        v_2\colon&\quad p\
        \ol{x}_n\ \ol{x}_n^2\ \ol{x}_n^1\
        \cdots\
        \ol{x}_1\ \ol{x}_1^2\ \ol{x}_1^1\
        x_n\ x_n^2\ x_n^1\  \cdots\
        x_1\ x_1^2\ x_1^1\
        \ola{\clauses};\\[4pt]
        v_3\colon&\quad \ora{X \cup \ol{X}}\
        \ell_1^1\ \ell_1^2\ \ell_1^3\  \clause_1\
        \cdots\
        \ell_m^1\ \ell_m^2\ \ell_m^3\  \clause_m\  p;
        \\[4pt]
        v_4\colon&\quad \ell_m^3\ \ell_m^2\ \ell_m^1\  \clause_m\
        \cdots\
        \ell_1^3\ \ell_1^2\ \ell_1^1\  \clause_1\
        p\  \ola{X \cup \ol{X}}.
    \end{align*}

    Before we construct the weighted majority graph~$\majG$ of the election~$\election$ containing all candidates, we make several observations.
    First, note that the preference profile ensures that no two candidates of the same type have an arc between them in~$\majG$.
    There are also no arcs between~$p$ and any literal occurrence candidate, or between any variable candidate and any clause candidate.
    Additionally, since a variable candidate~$x_i$ and a literal occurrence candidate~$x_j^h$ for~$j \neq i$ are ordered oppositely in the votes~$v_1$ and~$v_2$ (as well as in~$v_3$ and~$v_4$), there is no arc between them in~$\majG$.
    Similarly, since any clause candidate~$\clause_j$ and any literal occurrence candidate~$\ell \in L \setminus \{\ell_j^1,\ell_j^2,\ell_j^3\}$ (not occurring in the clause~$\clause_j$) are ordered oppositely in the votes~$v_3$ and~$v_4$ (as well as in~$v_1$ and~$v_2$), there is no arc between them either in~$\majG$.
    Therefore, the only arcs present in~$\majG$ are those listed in \Cref{tab:PP:Schulze:fourVoters:NPh:weights}.

    \begin{table}[bt!]
        \caption{Weights of the arcs in~$\majG$ defined in the proof of \Cref{thm:PP:Schulze:fourVoters:NPh}.}
        \label{tab:PP:Schulze:fourVoters:NPh:weights}
        \centering
        \renewcommand{\arraystretch}{1.3}
        \begin{tabular}{l @{\quad} c @{\quad} c}
          \toprule
             arc~$(a,b)$ & $\weight(a,b)$ & Who prefers~$a$ to~$b$? \\
             \midrule
             $(p,z)$,~$z \in X \cup \ol{X}$ & $2$ &$v_1,v_2,v_4$ \\
             $(z,\ell)$,~$z \in X \cup \ol{X}$,~$\ell=z^h$ for some~$h \in [2]$
             & $2$ &
             $v_1, v_2, v_3$
             \\
             $(\ell_j^h,\clause_j)$,~$\clause_j \in \clauses$,~$h \in [3]$
             & $2$ & $v_2,v_3,v_4$ \\
             $(\clause_j,p)$,~$\clause_j \in \clauses$ & $2$ & $v_1,v_3,v_4$\\
             \bottomrule
        \end{tabular}
    \end{table}

    \proofsubparagraph{Correctness.}
    First assume that~$p$ is a winner in some reduced election~$\election(N)$ over a set~$\nomination$ of nominees.
    Notice that~$\clauses \cup \{p\} \subseteq N$, and therefore,~$\beats_N(\clause_j,p)=2$ due to the arc~$(\clause_j,p)$.
    Therefore,~$p$ can only be a winner in~$\election(N)$ if there is a path from~$p$ to each clause candidate~$\clause_j$ in~$\majG$, consisting of arcs of weight~$2$.
    Due to the structure of~$\majGN$, such a path must take the form~$(p,z,\ell,\clause_j)$ for some~$z \in X \cup \ol{X}$ and some~$\ell \in L$ where, additionally,
    we have that~$\ell$ is an occurrence of the literal~$z$ and~$z$ is a literal
    occurring in clause~$\clause_j$.
    Therefore, setting all nominated literals, i.e., the nominated candidate from the party~$\{x_i,\ol{x}_i\}$ for each~$x_i \in X$
    to \true\ yields a truth assignment that satisfies each clause of~$\varphi$.

    Conversely, assume now that there exists a satisfying truth assignment for~$\varphi$.
    Let~$T \subseteq X \cup \ol{X}$ denote the set of true literals.
    Let us then nominate all variable candidates from~$T$, as well as their two occurrences from~$L$, that is,~$z^1$ and~$z^2$ for each true literal~$z \in T$, along with all candidates in~$\clauses \cup \{p\}$.
    Then~$p$ is a winner of the resulting election:
    There is an arc in~$\majG$ from~$p$ to each variable candidate in~$T$, so by~$\weight(z,z^1)=\weight(z,z^2)=2$, there is a path of strength~$2$ from~$z$ to each nominated literal occurrence candidate in~$\{z^1,z^2:z \in T\}$, and from there to each clause candidate~$\clause_j$ via some true literal occurring in~$\clause_j$.
    Thus~$\beats(p,a)=2$ for each nominee~$a$ (other than itself) in the resulting election, showing the correctness of our reduction for four voters.
\end{proof}


\subsubsection{Parameterization by the Number of Parties}

Now we turn to studying the parameterized complexity of \SchPP, starting with an easy observation.

\begin{observation}\label{thm:PP:Schulze:numParties:XP}\label{thm:PP:Schulze:numParties:partySize:FPT}
    \SchPP is
    \begin{itemize}
        \item in \XP when parameterized by the number~$\numParties$ of parties, and
        \item fixed-parameter tractable when parameterized by the number~$\numParties$ of parties and the maximum party size~$\maxPartySize$, combined.
    \end{itemize}
\end{observation}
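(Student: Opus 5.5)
The plan is brute-force enumeration of all valid nominations, followed by a polynomial-time winner check for each one. A valid nomination $\nomination$ chooses exactly one candidate from each of the $\numParties$ parties. The number of valid nominations is therefore $\prod_{i=1}^{\numParties}|P_i|$. This product is at most $\maxPartySize^{\numParties}$, and since $\maxPartySize\le|\candidates|$, it is also at most $|\candidates|^{\numParties}$.

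For each nomination $\nomination$, we form the reduced election $(\nomination,\voters)$. Its weighted majority graph $\majGN$ can be built in $\Oh{\numParties^2\cdot\numVoters}$ time. We then compute all beatpath strengths $\beats_N(a,b)$ by the widest-path variant of Floyd--Warshall in $\Oh{\numParties^3}$ time. Finally, we check whether the distinguished party's nominee $\distc$ satisfies $\beats_N(\distc,b)\ge\beats_N(b,\distc)$ for every other nominee $b$; this is polynomial, as already recalled in the proof of \Cref{thm:PP:Schulze:threeVoters:NPh}. We accept if and only if some nomination passes the test. Correctness is immediate from the definition of \SchPP.

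For the running times, the total is $\maxPartySize^{\numParties}\cdot\mathrm{poly}(|\candidates|,\numVoters)$. Bounding $\maxPartySize^{\numParties}$ by $|\candidates|^{\numParties}$ gives $|\candidates|^{\numParties}\cdot\mathrm{poly}$, which is an \XP running time in $\numParties$. Keeping the bound $\maxPartySize^{\numParties}$ instead gives $f(\numParties,\maxPartySize)\cdot\mathrm{poly}$ with $f(\numParties,\maxPartySize)=\maxPartySize^{\numParties}$, which is fixed-parameter tractable for the combined parameter.

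There is no real obstacle here; the only care needed is that the per-nomination Schulze winner check is polynomial in the input size, which follows from the cited results.
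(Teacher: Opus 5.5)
Your proposal is correct and is the same argument as the paper's: enumerate the at most $\sigma^{k}$ valid nominations, and for each one check in polynomial time whether the distinguished party's nominee is a Schulze winner. Your explicit use of $\sigma \le |C|$ for the \XP{} bound and your widest-path (Floyd--Warshall) cost estimate only spell out details that the paper leaves implicit.
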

\begin{proof}
    We give a simple brute-force algorithm.
    For each of $\numParties$ parties, there are at most $\maxPartySize$ possible nominations.
    Therefore, there are at most~$\maxPartySize^\numParties$ possible nominations $\nomination$, and for each of them we can verify in polynomial time whether the
    candidate from the distinguished party is a winner.
    This algorithm runs in $\maxPartySize^\Oh{\numParties}$ time.
\end{proof}

The above brute-force approach becomes quickly infeasible as $k$ grows. For three voters, we offer a more efficient algorithm. Although the case~$|V|=3$ is still \NPh, as established by \Cref{thm:PP:Schulze:threeVoters:NPh}, we propose an algorithm that solves it efficiently if the number of parties is relatively small.

Our approach for three voters relies on the following observations. First, if the number~$|V|$ of voters in some election~$\electionI$ is odd, then the weighted majority graph $\majG$ of $\election$ is a tournament; i.e., a complete directed graph that contains exactly one of the two arcs $(x,y)$ and $(y,x)$ for each pair $x,y\in C$.
More specifically, for $|V|=3$ voters, the weight of each arc is either~3 or~1.
In addition, we have a similar assertion as in \Cref{obs:Schulze:twoVoters:transitiveMajorityGraph}:

\begin{observation}
    \label{obs:Schulze:threeVoters:transitiveMajorityGraph}
    Let $\electionI$ be an election with $|V|=3$. Then $G_3(\election)$ is transitive.
\end{observation}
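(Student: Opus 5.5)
The plan is to mirror the proof of \Cref{obs:Schulze:twoVoters:transitiveMajorityGraph}. The key step is to translate arcs of weight~$3$ into statements about individual votes. With $|V|=3$, for any two distinct candidates $u,w$ the weight $\weight(u,w)$ equals the number of voters preferring $u$ to $w$ minus the number preferring $w$ to $u$. These two counts sum to~$3$, so the weight is $3$ exactly when all three voters prefer $u$ to $w$. Hence $(u,w)$ is an arc of $G_3(\election)$ if and only if $u \succ_v w$ for every $v \in V$.

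With this characterization in hand, the argument is immediate. Suppose $(u,w)$ and $(w,z)$ are both arcs of $G_3(\election)$. Then every voter $v$ satisfies $u \succ_v w$ and $w \succ_v z$. Since each $\succ_v$ is a linear order, transitivity of $\succ_v$ gives $u \succ_v z$ for every voter. In particular $u \neq z$, so $(u,z)$ is an arc of weight~$3$ and therefore lies in $G_3(\election)$.

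There is no real obstacle here. The only point needing care is the claim that weight~$3$ forces unanimity. This holds because with an odd number of voters there are no ties, and the maximum possible difference between the two counts is $|V| = 3$.
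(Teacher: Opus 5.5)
Your proof is correct and is essentially the argument the paper intends. The paper gives no separate proof here and only calls the statement ``a similar assertion'' to \Cref{obs:Schulze:twoVoters:transitiveMajorityGraph}, whose proof uses the same two steps: characterize arcs as unanimous preferences, then apply transitivity of each voter's linear order.
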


\begin{theorem}
    \label{thm:PP:Schulze:threeVoters:numParties:FPT}
    Given an election~$\electionI$ with~$\numVoters=3$,
    \SchPP is fixed-parameter tractable when parameterized by the number~$\numParties$ of parties.
\end{theorem}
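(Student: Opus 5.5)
The plan rests on the fact that with $\numVoters=3$ the majority graph $\majG$ is a tournament with arc weights in $\{1,3\}$. This lets us reduce ``$\distc$ is a Schulze winner of $\election(\nomination)$'' to two purely structural conditions. We then solve the resulting selection problem by guessing a tree on the $\numParties$ parties and running a polynomial dynamic program.

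\emph{Step 1: characterization.} Fix a nomination $\nomination\ni\distc$. For $c\in\nomination\setminus\{\distc\}$ we always have $\beats_\nomination(c,\distc),\beats_\nomination(\distc,c)\in\{0,1,3\}$, and the two values cannot both be $3$. If both were $3$, there would be weight-$3$ paths $\distc\to c$ and $c\to\distc$. Since $G_3(\election)$ is transitive by \Cref{obs:Schulze:threeVoters:transitiveMajorityGraph}, and this transitivity is inherited by induced subgraphs, both arcs $(\distc,c)$ and $(c,\distc)$ would lie in $G_3$. This is impossible in a tournament.

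For the same reason, $\beats_\nomination(c,\distc)=3$ holds iff $(c,\distc)$ is an arc of weight $3$. Also, $\beats_\nomination(c,\distc)\ge 1$ always holds in a tournament, because $c$ either beats $\distc$ directly or $\distc$ beats $c$. (If $\distc$ beats $c$, then $\beats_\nomination(\distc,c)\ge 1$ as well.) Putting these together, I would prove that $\distc$ wins $\election(\nomination)$ iff both of the following hold:
\begin{itemize}
  \item[(i)] no $c\in\nomination$ satisfies $\weight(c,\distc)=3$;
  \item[(ii)] every nominee is reachable from $\distc$ in $\majGN$.
\end{itemize}
Condition (ii) gives $\beats_\nomination(\distc,c)\ge1$, which suffices once (i) rules out $\beats_\nomination(c,\distc)=3$. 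Conversely, (i) and (ii) are clearly necessary.

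\emph{Step 2: reduce to a tree-selection problem.} First, delete every candidate $c$ with $\weight(c,\distc)=3$. If some party becomes empty, reject. It remains to decide whether one candidate per (remaining) party can be chosen so that all chosen candidates are reachable from $\distc$ in the induced tournament. This holds iff there exists an out-arborescence rooted at $\distc$ spanning the $\numParties$ chosen candidates and using arcs of $\majG$ (of any weight). Any arcs among chosen vertices that are not in the arborescence are irrelevant, since reachability only needs the tree.

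So we enumerate all rooted trees $T$ on the $\numParties$ parties with the distinguished party as the root. There are $\numParties^{\Oh{\numParties}}$ such trees. For each $T$, we decide by a bottom-up dynamic program whether there is a choice function selecting one candidate per party such that every tree edge from parent party $P$ to child party $P'$ is realized by an arc from the chosen candidate of $P$ to the chosen candidate of $P'$. For each party $P$ and each candidate $c\in P$, we store whether the subtree rooted at $P$ can be realized with $c$ chosen. Then $c$ is feasible iff, for every child $P'$, some feasible $c'\in P'$ has $(c,c')\in\majG$. This runs in $\Oh{|\candidates|^2}$ time per tree. Since the parties partition $\candidates$ and $T$ is a tree on them, the choices are automatically consistent. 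The total running time is therefore $\numParties^{\Oh{\numParties}}\cdot|\candidates|^{\Oh{1}}$.

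An alternative with the same bound uses a known fact about tournaments: $\distc$ reaches all vertices iff there is a Hamiltonian path starting at $\distc$. With this, one guesses an ordering of the parties ($\numParties!$ choices) and runs a DP along the path. I prefer the tree version because it needs no external theorem.

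The main obstacle is Step 1, particularly making sure that transitivity of $G_3$ really forces weight-$3$ beatpaths to collapse to direct arcs within the reduced election, and that ties in the nonunique-winner model are handled correctly. This is exactly where $\numVoters=3$ is used. With five or more voters, intermediate weights break the argument, consistent with the \Whness results for larger constant numbers of voters.
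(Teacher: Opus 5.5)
Your proposal is correct and follows essentially the same route as the paper: delete every candidate with a weight-$3$ arc into $\distc$ (justified by transitivity of $G_3$), characterize winning as reachability of every nominee from $\distc$ in $\majGN$, guess the party-level out-tree ($\numParties^{\Oh{\numParties}}$ choices), and run the bottom-up compatibility DP. One small slip: the aside ``$\beats_\nomination(c,\distc)\ge 1$ always holds'' is false as written, since it fails when $\distc$ beats $c$ and $c$ cannot reach $\distc$; however, the necessity of (ii) only needs that an unreachable $c$ gives the arc $(c,\distc)$, which is immediate in a tournament.
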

\begin{proof}
    Let $(\election,\P,P)$ be our input instance with election~$\electionI$.
    Without loss of generality, we may assume that $P=\{\distc\}$, as otherwise we can guess a candidate in~$P$ that is a possible president and delete all other candidates from~$P$.

    We start by constructing the weighted majority graph~$G=\majG$.
    If $G$ contains an arc $(c,\distc)$ with weight~3, then due to transitivity of $G_3$ as established in \Cref{obs:Schulze:threeVoters:transitiveMajorityGraph}, we obtain $\beats(\distc,c)<3$. Hence, candidate~$\distc$ cannot be a winner in any reduced election whose candidate set contains~$c$. Therefore, we can remove any such candidate~$c$ safely.

    To proceed, we will rely on the following claim:
    \begin{claim}
    \label{clm:PP:Schulze:threeVoters:paths-to-all}
    Suppose that no arc of weight~$3$ enters~$\distc$ in the weighted majority graph~$G$. Then $\distc$ is a winner in the election~$\election(\nomination)$ resulting from a valid set~$\nomination$ of nominations
    if and only if
    there exists a path from~$\distc$ to every other nominee in~$\majGN$.
    \end{claim}
    \begin{claimproof}
      Suppose first that $\distc$ is a winner in a valid set~$N$ of nominees. Then, since $\distc$ is a winner in~$\election(N)$, every other nominee $c \in N \setminus \{\distc\}$ is either defeated by~$\distc$, in which case the arc~$(\distc,c)$ is the desired path, or defeats~$\distc$, in which case
      $\weight(c,\distc)=1$ yields $\beats(\distc,c) = 1$ (as $\distc$ is a winner), implying the existence of the desired path in~$\majGN$.

        For the other direction it suffices to observe that if there is a path from~$\distc$ to every other nominee~$c$ in~$N$, then $\beats(\distc,c)\geq 1$. However, as there is no arc of weight~$3$ entering~$\distc$ in~$G$, we get $\beats(c,\distc)\leq 1$. Hence, $\distc$ is indeed a winner in~$\election(N)$.
    \end{claimproof}

    \def\comp{\mathsf{comp}}

    Let us assume that $\nomination$ is a valid set of nominations in which $\distc$ is a winner. By \Cref{clm:PP:Schulze:threeVoters:paths-to-all}, there exists a path in~$G[N]$ from~$\distc$ to each nominee in~$\nomination$. Then there exists a directed tree~$T$ over~$N$ rooted at~$\distc$ in~$G[N]$ in which all nominees in~$N\setminus \{\distc\}$ have in-degree exactly one. Let us fix such a directed tree~$T^N$; note that the vertices of~$T^N$ are nominees.
    We will say that a rooted directed tree~$T$ defined over~$\P$ is the \emph{party ideal} of~$T^N$ if the function that maps each candidate in~$N$ to its party is an isomorphism from~$T^N$ to~$T$.

    As our next step, we guess the party ideal~$T$ of~$T^N$; note that there are at most~$\numParties^\numParties$ for~$T$, because $T$ must be a directed out-tree rooted at~$P$.
    We will use the following notation:
    For any party~$Q$, we let $T_Q$ denote the subtree of~$T$ rooted at~$Q$, and we let $\P_Q$ denote the parties occurring in~$T_Q$.

    Next, we will use bottom-up dynamic programming on~$T$ in order to find a valid set~$\nomination'$ of nominations that allow $\distc$ to become a winner.
    In particular, for each party~$Q \in \P$, we compute the set~$\comp_T(Q)$ of candidates in~$Q$ that are \emph{compatible with $T$}, meaning the following: There exists a set~$N_Q$ of nominations by the parties in~$T_Q$ such that the subgraph $G[N_Q]$ of the majority graph induced by these nominees contains a directed tree whose party ideal is~$T$.
    To compute $\comp_T(Q)$, we distinguish between the following two cases:
\begin{description}
    \item[Leaf node.] If $Q$ is a leaf in~$T$, then by definition $\comp_T(Q)=Q$, since each candidate is compatible with~$T$.

    \item[Internal node.] If $Q$ is a node in~$T$ with children $Q_1,\dots,Q_\ell$, then let $S_1,\dots,S_\ell$ denote the corresponding sets of candidates compatible with~$T$, i.e., $S_i=\comp_T(Q_i)$ for each $i \in [\ell]$.
      Then a candidate~$q \in Q$ is compatible with~$T$ if and only if for each $i \in [\ell]$, there exists a candidate~$s_i \in S_i$ such that $(q,s_i)$ is an arc in~$G$.
      This allows us to compute $\comp_T(Q)$ in time that is linear in the total number of arcs leaving some candidate from~$Q$ in~$G$.
\end{description}

Using bottom-up dynamic programming, we can thus compute in~$\Oh{|\candidates|^2}$ time the set $\comp_T(P)$.
Observe that all candidates in~$\comp_T(P)$ are possible presidents; hence, if $\comp_T(P) \neq \emptyset$, then we answer ``yes''; otherwise, we proceed with our next guess for the tree~$T$.
If all possible choices are exhausted without outputting ``yes,'' we output ``no.''
    The total running time of our algorithm is $\numParties^\numParties \cdot \Oh{|C|^2}$, which is FPT with respect to parameter~$\numParties$.
\end{proof}

Having shown fixed-parameter tractability for few parties and three voters, we turn to proving parameterized hardness with respect to the same parameter, the number of parties, for inputs with a large enough even number of voters.

\begin{theorem}
\label{thm:PP:Schulze:numParties:Wh}
  When parameterized by the number~$\numParties$ of parties, it is \Wh to decide \SchPP, even if the number
  of voters is any even constant at least eight.
\end{theorem}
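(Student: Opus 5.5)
I would reduce from \probName{Multicolored Clique}, parameterized by the clique size~$h$. The input is a graph whose vertex set is partitioned into color classes $U_1,\dots,U_h$, and we ask for a clique with exactly one vertex from each class. The idea is to encode every selection as a nomination and to arrange that all selections are coherent exactly when $\distc$ reaches every nominee with enough strength. The number of parties will be $\Oh{h^2}$, which gives a parameterized reduction.

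\textbf{Parties.} There will be a singleton distinguished party $\{\distc\}$. For every pair $i<j$ there is an \emph{edge party} $E_{ij}$ consisting of one candidate per edge between $U_i$ and $U_j$. For every ordered pair $(i,j)$ with $i\neq j$ there is a \emph{vertex-copy party} $U_i^{(j)}=\{v^{(j)}: v\in U_i\}$.

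\textbf{Target weighted majority graph.}
\begin{itemize}
\item There are arcs $(\distc,e)$ of weight~$2$ for every edge candidate~$e$.
\item For an edge $e=uw$ with $u\in U_i$ and $w\in U_j$, there are arcs $(e,u^{(j)})$ and $(e,w^{(i)})$ of weight~$2$.
\item For consistency, take the copies $v^{(j)}$ and $u^{(j')}$ of two \emph{different} vertices $v\neq u$ of the same class $U_i$. I would create a directed path of weight~$4$ from one of these copies, through the other, into~$\distc$. The intent is that $\beats_N(\cdot,\distc)=4$ for some nominee exactly when two copies of class~$i$ disagree.
\item No path of weight~$4$ leaves~$\distc$, and all other pairs are tied (no arc).
\end{itemize}

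\textbf{Correctness.} The argument follows the pattern of the four-voter proof (\Cref{thm:PP:Schulze:fourVoters:NPh}).

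Suppose $\distc$ wins. Since $\beats_N(\distc,\cdot)\le 2$, no weight-$4$ path into $\distc$ may exist, so all copies of each class~$i$ are the same vertex~$v_i$. Moreover, every copy nominee $v_i^{(j)}$ must be reachable from~$\distc$. The only route to it is $\distc\to e\to v_i^{(j)}$ with $e$ the nominee of $E_{ij}$, so that nominated edge is incident to $v_i$ and to $v_j$. Hence $\{v_1,\dots,v_h\}$ is a multicolored clique.

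Conversely, given a clique, nominate its vertices in all copy parties and its edges in all edge parties. Every nominee is then reached with strength~$2$. Every arc into~$\distc$ has weight at most~$2$, and there is no inconsistent pair, so no weight-$4$ path reaches~$\distc$. Hence $\distc$ wins.

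\textbf{Realizing the graph with eight voters.} Here I would reuse the trick from the proof of \Cref{thm:PP:Schulze:fourVoters:NPh}: use pairs of votes that are reverses of each other on most candidates, so that they cancel everywhere except on a designed set of pairs. Two blocks of four voters each can produce the weight-$2$ layered structure, in the same way the four-voter proof does. The difference from that structure is the ordering of the blocks. Ordering the copy parties of class~$i$ suitably in all eight votes should make the inconsistent pairs receive an extra agreement of two voters, and hence weight~$4$. Equal copies coincide in the ordering and stay tied.

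\textbf{Main obstacle.} I expect this last step to be the main obstacle. The difficulty is choosing linear orders for only eight voters so that exactly the cross-vertex pairs within a class get weight~$4$ (and chain correctly into~$\distc$), while no unintended weight-$4$ or weight-$2$ arcs appear between layers. Unintended arcs could create spurious paths from $\distc$, or spurious defeats of $\distc$.

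If the weight-$4$ chain into $\distc$ proves too hard to realize, I would first try a fallback. In it, each class uses a single vertex party $U_i$ instead of copies. Reachability of the edge nominee of $E_{ij}$ is then forced through a two-stage AND, similar to the path gadgets in \Cref{thm:PP:Schulze:threeVoters:NPh}, in which an edge candidate is reached only through a candidate certifying both endpoints.

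Finally, for any larger even number of voters, the observation that adding a vote together with its reverse leaves $\majG$ unchanged extends the hardness from eight voters to every even constant at least eight.
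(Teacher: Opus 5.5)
\textbf{There is a genuine gap.} It lies in the target graph itself, not only in the missing eight-vote realization. The graph you specify cannot support your correctness argument.

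\textbf{The consistency gadget cannot work.} A strength-4 path into $p$ must end in an arc $(y,p)$ of weight at least $4$. Here $y$ is a copy candidate, since edge candidates are beaten by $p$. That single arc is present in $\majGN$ for \emph{every} nomination $N$ containing $y$, regardless of what else is nominated. So $\beats_N(y,p)\ge 4>2\ge \beats_N(p,y)$, and $p$ loses whenever $y$ is nominated, even in a perfectly consistent nomination. Consequently, any clique using the vertex of $y$ yields a no-instance. Your completeness claim that ``every arc into $p$ has weight at most $2$'' directly contradicts the gadget you described. More generally, under your constraint that no strength-4 path leaves $p$, no gadget can make such a path into $p$ depend on two nominees being present jointly.

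\textbf{Soundness does not force reachability.} Apart from this gadget, nothing in your graph beats $p$, since copies are tied with $p$. So in any consistent nomination $p$ has no incoming path at all and wins trivially, whichever edges are nominated. The step ``every copy nominee must be reachable from $p$'' is therefore unjustified. In the four-voter proof that role is played by the arcs $(\clause_j,p)$, which your graph lacks.

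\textbf{The voter bound is not established.} The eight-vote realization, which is what the theorem's voter bound is about, is left open, and the fallback is only a sketch.

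\textbf{The missing idea.} Encode the clique condition as a \emph{non}-reachability requirement from a candidate that $p$ can never reach. This needs neither vertex copies nor a consistency gadget. The paper uses one party $U_i$ per colour class, edge parties $F_{i,j}$, and singletons $a,b,p$, with all arcs of weight $2$:
\begin{itemize}
\item $b$ has no in-arcs and beats all vertex candidates;
\item a vertex candidate beats every non-incident edge candidate and is tied with incident ones;
\item edge candidates beat $a$, and $a$ beats $p$;
\item $p$ beats all vertex and edge candidates.
\end{itemize}
Thus $p$ reaches every nominee except $b$ with strength $2$. Since $\beats_N(p,b)=0$, $p$ wins if and only if $\majGN$ contains no path $b\to u\to f\to a\to p$. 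That holds exactly when each nominated $f\in F_{i,j}$ is incident to both nominated vertices of $U_i$ and $U_j$, i.e., when the nominated vertices form a multicoloured clique.

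The eight votes realize this as follows. Four base votes produce the layered structure, with vertex-over-edge arcs everywhere. Four incidence votes are built from $A_{i\to},\wt{A}_{i\to}$ and $B_{\to i},\wt{B}_{\to i}$. They rank $p$ above all vertex and edge candidates and otherwise cancel out, except on incident vertex--edge pairs, where they exactly neutralize the base arc.
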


\begin{proof}
  We reduce from the \probName{Multicolored Clique} problem.
  Here, we are given a~$q$-partite graph~$H=(U, F)$ with its vertex set partitioned as~$U=U_1 \cup \cdots \cup U_q$, and the goal is to decide whether a subset~$K\subseteq U$ of vertices of size~$q$ exists so that the vertices of~$K$ induce a complete graph in~$H$.
  The \probName{Multicolored Clique} problem is known to be \Wh when parameterized by~$q$~\cite{fel-her-ros-via:j:multicolored-clique,pie:j:multicolored-clique}.
  We call each~$U_i$,~$i\in[q]$, a \emph{color class}, and we assume without loss of generality that all color classes are of the same size,~$x$, so we can use the
notation~$U_i=\{u_i^1,\dots,u_i^\colorSize\}$ for each
  $i\in[q]$.
  For each~$i,j\in [q]$ with~$i<j$, we denote by~$F_{i,j}$ the set of all edges between the color classes~$U_i$ and~$U_j$, that is,~$F_{i,j} = \{ \{u,w\} : u \in U_i \text{ and } w \in U_j \}$.
  Again, by the construction of Fellows \emph{et al.}~\cite{fel-her-ros-via:j:multicolored-clique}, we can assume that the size of each~$F_{i,j}$ is the same, and we will denote it by~$\edgeSetSize$.
  We will also use the following notation for each
  $i \in [q]$ and each vertex~$u \in U_i$:
    \begin{align*}
      F_{i \to}(u)& =\{f:
      f = \{u,v\} \in F_{i,j} \text{ for some } v \in U_j \text{ with } j>i\}; \\
      F_{\to i}(u)& =\{f:
      f = \{u,v\} \in F_{i,j} \text{ for some } v \in U_j \text{ with } j<i\}.
    \end{align*}

    \proofsubparagraph{Construction.}
    Given an instance~$\mathcal{I} = (H,q)$
    of \probName{Multicolored Clique}, we construct an equivalent instance~$\mathcal{J}$ of \SchPP with election~$\election$ as follows.
    We have three \emph{core candidates},~$a$,~$b$, and~$p$; each of them forms a singleton party,
    and~$\{p\}$ is the distinguished party.
    Next, for each vertex in~$U$, we introduce a \emph{vertex candidate}, also denoted as~$u$, and for each
    $i \in [q]$, we add the color class~$U_i$ as a \emph{color party}.
    Next, for each edge~$f \in F$, we introduce an \emph{edge candidate}, also denoted as~$f$, and
    for each
    $i,j \in [q]$ with~$i<j$, we add~$F_{i,j}$ as an \emph{edge party}.
    This completes the construction of the candidate set~$\candidates$ and
    its partition into parties.

    $\voters$ consists of exactly eight votes: Four \emph{base votes} and four \emph{incidence votes}. We start with a description of the former group.
    These votes create the base of the weighted majority graph and are defined as follows:
    \begin{align*}
        v_1 \colon&\quad
        b\  \ora{U_1\cup\cdots\cup U_q}\  \ora{F_{1,2} \cup \cdots \cup F_{q-1,q}}\  a\  p; \\[4pt]
        v_2 \colon&\quad
        p\  \ola{F_{1,2} \cup \cdots \cup F_{q-1,q}}\  a\  b\  \ola{U_1\cup\cdots\cup U_q}; \\[4pt]
        v_3 \colon&\quad
        \ora{U_1\cup\cdots\cup U_q}\  \ora{F_{1,2} \cup \cdots \cup F_{q-1,q}}\  a\  p\  b;  \\[4pt]
        v_4 \colon&\quad
        b\  a\  \ola{U_1\cup\cdots\cup U_q}\  \ola{F_{1,2} \cup \cdots \cup F_{q-1,q}}\  p.
    \end{align*}

    \begin{figure}[bt!]
        \centering
        \scalebox{0.9}{
        \begin{tikzpicture}[every node/.style={draw,circle,inner sep=0,minimum width=0.8cm}]
            \node (b) at (0,1) {$b$};

            \node[draw=none,font=\footnotesize] at (3,0) {vertex candidates};
            \node (u11) at (-3.5,-1) {$u_1^1$};
            \node[draw=none] (v1dots) at (-2.5,-1) {$\cdots$};
            \node (v1x) at (-1.5,-1) {$u_1^\colorSize$};
            \draw[double,dotted,rounded corners] (-4.1,-0.4) rectangle (-0.9,-1.6);
            \node[draw=none,font=\footnotesize,myGray] at (-0.65,-1.35) {$U_1$};

            \node[draw=none] (vdots) at (0.05,-1) {\huge$\cdots$};

            \node (vq1) at (1.5,-1) {$u_q^1$};
            \node[draw=none] (vqdots) at (2.5,-1) {$\cdots$};
            \node (vqx) at (3.5,-1) {$u_q^\colorSize$};
            \draw[double,dotted,rounded corners] (0.9,-0.4) rectangle (4.1,-1.6);
            \node[draw=none,font=\footnotesize,myGray] at (4.35,-1.4) {$U_q$};

            \draw[dashed,rounded corners] (-4.6,-0.25) rectangle (4.6,-1.75);

            \draw[myarc,myGreen,double] (b) to (0,-0.25);

            \def\ey{-3.5}
            \node[draw=none,font=\footnotesize] at (4.9,-2.5) {edge candidates};
            \node (e121) at (-5,\ey) {$f_{1,2}^1$};
            \node (e121) at (-4,\ey) {$f_{1,2}^2$};
            \node[draw=none] (e12dots) at (-3,\ey) {$\cdots$};
            \node (e12y) at (-2,\ey) {$f_{1,2}^\edgeSetSize$};
            \draw[double,dotted,rounded corners] (-5.5,-3) rectangle (-1.5,-4);
            \node[draw=none,font=\footnotesize,myGray] at (-1.15,-3.85) {$F_{1,2}$};

            \node[draw=none] (edots) at (0.05,\ey) {\huge$\cdots$};

            \node (eqq1) at (2,\ey) {$f_{q',q}^1$};
            \node (eqq1) at (3,\ey) {$f_{q',q}^2$};
            \node[draw=none] (e12dots) at (4,\ey) {$\cdots$};
            \node (eqqy) at (5,\ey) {$f_{q',q}^\edgeSetSize$};
            \draw[double,dotted,rounded corners] (1.5,-3) rectangle (5.5,-4);
            \node[draw=none,font=\footnotesize,myGray] at (5.85,-3.85) {$F_{q',q}$};

            \draw[dashed,rounded corners] (-6.25,-2.75) rectangle (6.25,-4.25);

            \draw[myarc,myRed,double] (0,-1.75) to (0,-2.75);

            \node (a) at (0,-5.5) {$a$};
            \draw[myarc,myBlue,double] (0,-4.25) to (a);

            \node (p) at (0,-7) {$p$};
            \draw[myarc,myRed] (a) to (p);

            \draw[myarc,myRed,double,out=-90,in=120] (-1,-4.25) to (p);
            \draw[line width=7pt,white,out=-90,in=60] (1,-1.75) to (p);
            \draw[myarc,myRed,double,out=-90,in=60] (1,-1.75) to (p);
        \end{tikzpicture}
        }
        \caption{A schematic illustration of the weighted majority graph induced by the base votes~$v_1$,~$v_2$,~$v_3$, and~$v_4$ in the construction of \Cref{thm:PP:Schulze:numParties:Wh}. A double arrow represents that there is a complete bipartite graph between agents of the source and the target group.
          An arc~$(s,t)$ is colored in \textcolor{myGreen}{\bfseries green} if the source candidate~$s$ is preferred in~$v_1$,~$v_2$, and~$v_4$ (but not in~$v_3$); in \textcolor{myRed}{\bfseries red} if~$v_1$,~$v_3$, and~$v_4$ (but not~$v_2$) prefer~$s$ over~$t$; and finally, \textcolor{myBlue}{\bfseries blue} means that~$v_1$,~$v_2$, and~$v_3$ (but not~$v_4$) prefer~$s$ over~$t$.
          We use~$q' = q-1$.
        }
        \label{fig:PP:Schulze:numParties:Wh:initialMajorityGraph}
    \end{figure}
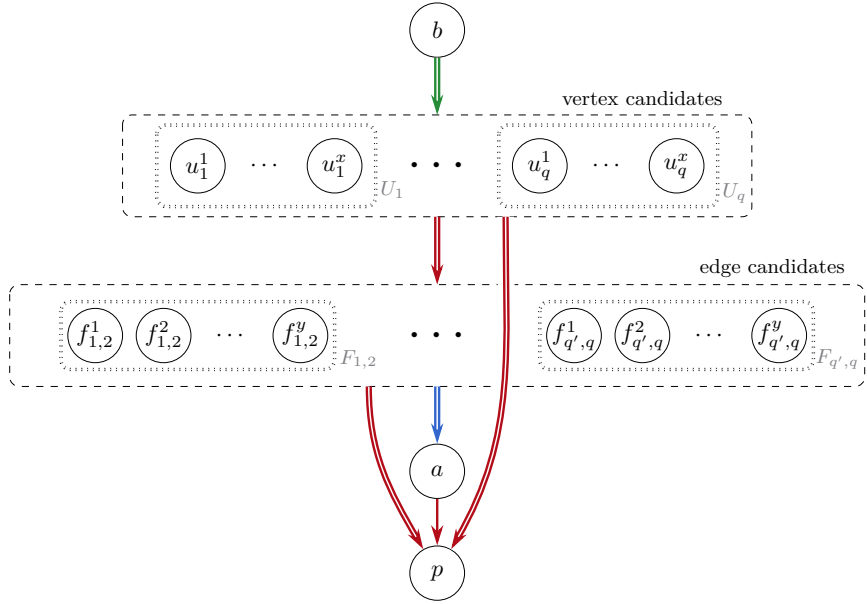

    The weighted majority graph of the election induced by these votes is as follows (see \Cref{fig:PP:Schulze:numParties:Wh:initialMajorityGraph} for an illustration):
    \begin{itemize}
        \item[$\bullet$] candidate~$b$ defeats all vertex candidates;
        \item[$\circ$]  every vertex candidate defeats all edge candidates;
        \item[$\bullet$] every edge candidate defeats candidate~$a$;
        \item[$\bullet$] candidate~$a$ defeats candidate~$p$;
        \item[$\circ$] all vertex and edge candidates defeat candidate~$p$.
    \end{itemize}

    All other combinations of candidates are tied.
    It is also easy to see that the weight of each arc in~$\majG$ is exactly~$2$.
    Observe that, currently, candidate~$b$ has a beatpath of strength~$2$ to all other candidates, which cannot be changed by any nominations.
    Hence,~$b$ is the only winner of this subprofile.
    To remedy this, we add four additional \emph{incidence votes}.
    The preferences of these votes will leave intact the arcs of the majority graph implied by the observations above marked with a bullet~$\bullet$; by contrast, they will change the one marked with~$\circ$.
    To define the preferences of the incidence votes, we will use the following candidate series as building blocks:
    \begin{align*}
        A_{i\to} &:
        \ora{F_{i \to}(u_i^1)}\  u_i^1\
        \ora{F_{i \to}(u_i^2)}\  u_i^2\
        \cdots\
        \ora{F_{i \to}(u_i^\colorSize)}\  u_1^\colorSize;
        \\
        \wt{A}_{i\to} &:
        \ola{F_{i \to}(u_i^\colorSize)}\  u_i^\colorSize\
        \ola{F_{i \to}(u_i^{n-1})}\  u_i^{n-1}\
        \cdots\
        \ola{F_{i \to}(u_i^1)}\  u_1^1;
        \\
        B_{\to i} &: \ora{F_{\to i}(u_i^1)}\  u_i^1\
        \ora{F_{\to i}(u_i^2)}\  u_i^2\
        \cdots\
        \ora{F_{\to i}(u_i^\colorSize)}\  u_1^\colorSize;
        \\
        \wt{B}_{\to i} &:
        \ola{F_{\to i}(u_i^\colorSize)}\  u_i^\colorSize\
        \ola{F_{\to i}(u_i^{n-1})}\  u_i^{n-1}\
        \cdots\
        \ola{F_{\to i}(u_i^1)}\  u_1^1.
    \end{align*}

    Note that~$A_{i \to}$ and~$\wt{A}_{i \to}$ contain the same set of candidates. Moreover,  any two candidates they contain are ordered oppositely by~$A_{i \to}$ and by~$\wt{A}_{i \to}$, \emph{unless they are a vertex candidate~$u \in U_i$ and an incident edge candidate~$f$ that is contained in~$F_{i \to }(u)$}, i.e., $f$ connects~$u \in U_i$  with some vertex in a color class~$U_j$ with a larger index~$j>i$; we refer to this fact as ($\dagger$).

    Similarly,~$B_{\to i}$ and~$\wt{B}_{\to i}$ contain the same set of candidates, they order any two of these candidates in the opposite way,  \emph{unless they are a vertex candidate ${u \in U_i}$ and an incident edge candidate~$f$ that is contained in~$F_{\to i}(u)$}, i.e., $f$ connects~$u \in U_i$ with some vertex in a color class~$U_j$ with a smaller index~$j<i$; we refer to this fact as ($\ddagger$).
    These observations are crucial for our construction.

    We are now ready to give the preferences of our incidence votes:
    \begin{align*}
        \hat{v}_1 \colon& \quad
        p\  A_{1 \to}\  A_{2 \to}\ \cdots\  A_{(q-1) \to}\
        A_{q \to}\  a\  b;
        \\
        \hat{v}_2 \colon& \quad
        b\  a\  p\  \wt{A}_{q \to}\  \wt{A}_{(q-1) \to}\  \cdots\  \wt{A}_{2 \to}\  \wt{A}_{1 \to};
        \\
        \hat{v}_3 \colon& \quad
        p\
        B_{1 \to}\  B_{2 \to}\ \cdots\  B_{(q-1) \to}\
        B_{q \to}\  a\  b;
        \\
        \hat{v}_4 \colon& \quad
        b\  a\  p\  \wt{B}_{q \to}\  \wt{B}_{(q-1) \to}\  \cdots\  \wt{B}_{2 \to}\  \wt{B}_{1 \to}.
    \end{align*}

    Observe that these votes do not change the relations between a core candidate~$c \in \{a,b\}$ and any other candidate~$d$, because $\hat{v}_1$ and~$\hat{v}_2$, as well as~$\hat{v}_3$ and~$\hat{v}_4$, rank~$c$ and~$d$ in the opposite order; hence, the observations marked with~$\bullet$ above remain true.
    Moreover,~$p$ is preferred to all vertex and edge candidates by all incidence votes; hence we have that
    \begin{itemize}
        \item[$\bullet$] $(p,d)$ is an arc of weight~$2$ in~$\majG$ for each vertex or edge candidate~$d$.
    \end{itemize}

    Let us now turn our attention to the relations between a vertex candidate~$u$ and some edge candidate~$f$.
    First,~$\hat{v}_1$ and~$\hat{v}_2$ clearly rank~$u$ and~$f$ in the opposite order unless they are both contained in the same series~$A_{i \to }$ for some
    $i \in [q]$.
    By our observation~($\dagger$), the same holds even if both~$u$ and~$f$ are contained in~$A_{i \to}$ for some
    $i \in [q]$, unless~$f$ is an edge incident to~$u$
    that connects~$u$ with some vertex in a later color class (i.e., in some~$U_j$ with~$j>i$).
    Similarly,~$\hat{v}_3$ and~$\hat{v}_4$ rank~$u$ and~$f$ in the opposite order unless they are both contained in the same series~$B_{\to i}$ for some
    $i \in [q]$.
    By our observation~($\ddagger$), the same holds even if both~$u$ and~$f$ are contained in~$B_{\to i}$ for some
    $i \in [q]$, unless~$f$ is an edge incident to~$u$ that connects~$u$ with some vertex in an earlier color class (i.e., in some~$U_j$ with~$j<i$).

    Taking into account the preferences of the four base votes (out of whom exactly three prefer edge candidates to vertex candidates), we can conclude that
    \begin{enumerate}[(i)]
        \item if~$f$ is \emph{not} incident to~$u$, then~$(u,f)$ is an arc of weight~$2$ in~$\majG$, and \label{prop:PP:Schulze:numParties:Wh:notIncident}
        \item if~$f$ is incident to~$u$, then neither~$(u,f)$ nor~$(f,u)$ is an arc in~$\majG$. \label{prop:PP:Schulze:numParties:Wh:incident}
    \end{enumerate}
    This completes the description of our construction. An overview of the final majority graph is shown in \Cref{fig:PP:Schulze:numParties:Wh:finalMajorityGraph}.

    \begin{figure}[bt!]
        \centering
        \begin{tikzpicture}[every node/.style={draw,circle,inner sep=0,minimum width=0.8cm}]
            \node (b) at (0,1) {$b$};

            \node (u11) at (-3.5,-1) {$u_i^1$};
            \node (u12) at (-2.5,-1) {$u_i^2$};
            \node (u13) at (-1.5,-1) {$u_i^3$};
            \draw[double,dotted,rounded corners] (-4.1,-0.4) rectangle (-0.9,-1.6);
            \node[draw=none,font=\footnotesize,myGray] at (-4.3,-0.7) {$U_i$};

            \node (u21) at (1.5,-1) {$u_j^1$};
            \node (u22) at (2.5,-1) {$u_j^1$};
            \node (u23) at (3.5,-1) {$u_j^3$};
            \draw[double,dotted,rounded corners] (0.9,-0.4) rectangle (4.1,-1.6);
            \node[draw=none,font=\footnotesize,myGray] at (4.35,-0.7) {$U_j$};

            \foreach \i in {1,2} {
                \foreach \j in {1,2,3} {
                    \draw[myarc,myGreen] (b) to (u\i\j.north);
                }
            }

            \def\ey{-3.5}
            \node (e121) at (-2.25,\ey) {$f_{i,j}^1$};
            \node (e122) at (-0.75,\ey) {$f_{i,j}^2$};
            \node (e123) at (0.75,\ey) {$f_{i,j}^3$};
            \node (e124) at (2.25,\ey) {$f_{i,j}^4$};
            \draw[double,dotted,rounded corners] (-2.85,-2.9) rectangle (2.85,-4.1);
            \node[draw=none,font=\footnotesize,myGray] at (3.2,-3.85) {$F_{i,j}$};

            \foreach \i in {11,12,22,23} {
                \draw[myarc,myRed] (u\i.south) to (e121.north);
            }
            \draw[dashed,myGray] (e121) edge (u13) edge (u21);
            \foreach \i in {12,13,21,23} {
                \draw[myarc,myRed] (u\i.south) to (e122.north);
            }
            \draw[dashed,myGray] (e122) edge (u11) edge (u22);
            \foreach \i in {11,13,21,22} {
                \draw[myarc,myRed] (u\i.south) to (e123.north);
            }
            \draw[dashed,myGray] (e123) edge (u12) edge (u23);
            \foreach \i in {11,12,21,23} {
                \draw[myarc,myRed] (u\i.south) to (e124.north);
            }
            \draw[dashed,myGray] (e124) edge (u13) edge (u22);

            \node (a) at (0,-5.5) {$a$};
            \foreach \i in {1,2,3,4} {
                \draw[myarc,myBlue] (e12\i) to (a);
            }

            \node (p) at (0,-7) {$p$};
            \draw[myarc,myRed] (a) to (p);
            \draw[myarc,double,myYellow,in=-90,out=120] (p) to (-2.5,-4.1);
            \draw[myarc,double,myYellow,in=-90,out=180] (p) to (-3.95,-1.6);
            \draw[myarc,double,myYellow,in=-90,out=0] (p) to (3.95,-1.6);
        \end{tikzpicture}
        \caption{An illustration of the final weighted majority graph~$\majG$ for the election~$\election$ constructed in the proof of \Cref{thm:PP:Schulze:numParties:Wh}. Here, we focus on exactly two color parties~$U_i$ and~$U_j$. The \textcolor{myYellow}{\bfseries yellow} arcs are due to new incidence votes, who all prefer~$p$ over any edge or vertex candidate. The coloring of the rest of the edges follows the same logic as in \Cref{fig:PP:Schulze:numParties:Wh:initialMajorityGraph}, since they are not affected by the incidence votes. Moreover, each dashed edge between a vertex candidate and an edge candidate represents that these two candidates are newly tied due to the incidence votes.}
        \label{fig:PP:Schulze:numParties:Wh:finalMajorityGraph}
    \end{figure}
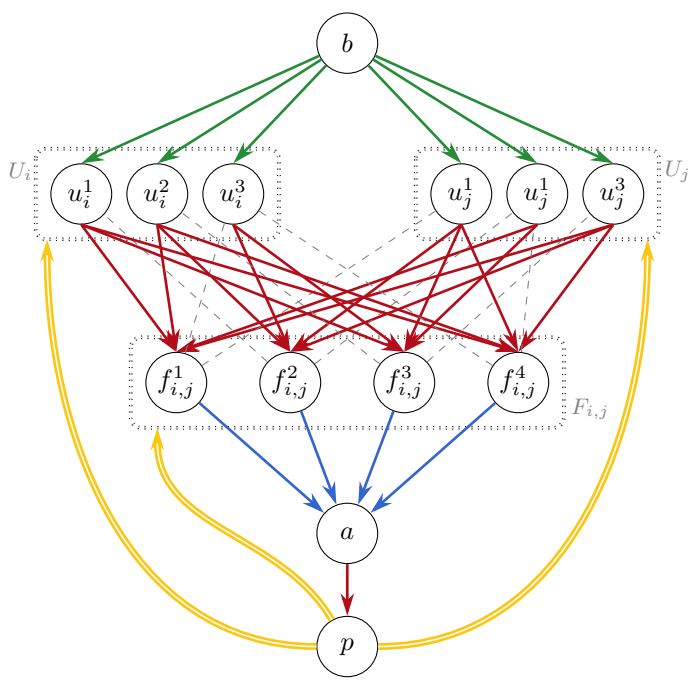

    \proofsubparagraph{Correctness.}
    First, assume that~$\mathcal{I}$ is a yes-instance and~$K$ is a multicolored clique in~$H$. We construct a solution for~$\mathcal{J}$ as follows. Each color party~$U_i$ nominates the vertex candidate corresponding to the vertex of color~$i$ in~$K$. Additionally, each edge party~$F_{i,j}$ nominates the edge candidate corresponding to the edge~$\{u_i^\ell,u_j^{\ell'}\} \in F$, where~$\{u_i^\ell,u_j^{\ell'}\}\subseteq K$. Observe that such an edge candidate always exists, since~$K$ induces a complete graph in~$K$. Now, consider the beatpaths in~$\majG$.
    Observe that there is no arc between any vertex candidate and any edge candidate due to property \eqref{prop:PP:Schulze:numParties:Wh:incident} and the fact that~$K$ is a clique. Since~$b$ is tied with both~$a$ and~$p$, we get that there is no path from~$b$ to any candidate in~$\{a,p\} \cup F$. Next, observe that 
    each edge candidate is on a cycle with candidates~$a$ and~$p$, and moreover,~$p$ has an arc of weight~$2$ to any vertex candidate. Therefore, there is a beatpath of strength~$2$ from~$p$ to each candidate other than~$b$.
    In fact, all the core candidates~$b$,~$a$,~$p$, and all edge candidates are winners in~$\election$. In particular,~$p$ is a possible president.

    In the opposite direction, assume that~$\nomination$ is a nomination under which candidate~$p$ is in the set of winners
    of the reduced election.
    We claim that there is no path of strength in~$\majG$ from~$b$ to~$p$: Indeed, assuming otherwise, we know that~$\beats(b,p) = 2$ but~$\beats(p,b) = 0$, 
    contradicting that~$p$ is a winner under nomination~$\nomination$.
    Therefore, we construct a vertex set~$K$ that, for every~$i\in[q]$, contains the vertex~$u_i^\ell$ corresponding to the vertex candidate nominated by party~$U_i$.
    Again, for the sake of contradiction, assume that~$K$ does not induce a complete graph in~$H$, that is, there is a pair of candidates~$u_i,u_j\in K$ such that~$\{u_i,u_j\}\not\in F$.
    Consequently, by \eqref{prop:PP:Schulze:numParties:Wh:notIncident}, there has to be a pair of a nominated vertex candidate~$u$ (either~$u_i$ or~$u_j$) and edge candidate~$f \in F_{i,j}\cap \nomination$ with an arc~$(u,f)$ of weight~$2$.
    This, however, yields  a path from~$b$ to~$p$, which is not possible.
    Therefore,~$K$ induces a complete graph in~$H$, showing that~$\mathcal{I}$ is a yes-instance.

    \smallskip
    To wrap up, the reduction can be clearly done in polynomial time. Moreover, the number of parties is~$q + \binom{q}{2} + 3 \in \Oh{q^2}$. That is, the presented reduction is indeed a parameterized reduction, completing the proof.
\end{proof}


\subsection{Necessary President}

Next, we turn to studying the complexity of \SchNP, again starting with its classical complexity.


\subsubsection{Intractability for a Constant Number of Voters and Small Parties}

\begin{theorem}
\label{thm:NP:Schulze:threeVoters:coNPh}
  \SchNP is \coNPc, even if no party contains more than three candidates and the number
  of voters is an odd constant at least three.
\end{theorem}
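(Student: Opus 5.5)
Membership in \coNP is immediate: a certificate that $\distc$ is \emph{not} a necessary president is a valid nomination of the other parties under which $\distc$ is not a Schulze winner, and Schulze winners are computable in polynomial time. For hardness, I reduce from \BalancedSat to the complement problem. Given $\varphi$, I build an election with three voters and a singleton distinguished party $\{\distc\}$. The aim is that some nomination \emph{defeats} $\distc$ exactly when $\varphi$ is satisfiable. By the claim in the proof of \Cref{thm:PP:Schulze:threeVoters:numParties:FPT}, if no weight-$3$ arc enters $\distc$, then $\distc$ wins $\election(N)$ if and only if every nominee is reachable from $\distc$ in $\majGN$. So the adversary's goal becomes: choose a nomination under which some nominee is \emph{unreachable} from $\distc$.

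I will reuse the machinery from the proof of \Cref{thm:PP:Schulze:threeVoters:NPh}: tournaments based on a path, alternative paths of length three, and pairs of alternative paths of length two. In that construction, reachability from $a_1$ to $b_{m+1}$ along the chain encodes satisfiability. Here I need the reverse logic, namely that \emph{blocking} all paths encodes satisfiability. I plan to let $\distc=a_1$ with a candidate $t$ at the end of the chain that defeats $\distc$ with weight $1$, and to design the gadgets so that every path from $\distc$ to $t$ must pass through every gadget. Paths backwards are harmless, since the chain property \propchain only gives arcs from later gadgets to earlier ones.

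Each variable gadget must force a choice of truth value. Each clause gadget should be "cut" (have no path through it) exactly when its clause is satisfied, and the whole chain is cut if \emph{some} clause gadget is cut. That is the wrong quantifier, so I flip the roles. The adversary blocks the chain at a single point, but a conjunction is checked by requiring that \emph{all} clause gadgets be blocked. To get this, I place the clause gadgets in parallel rather than in series: there will be several targets $t_j$, one per clause, each reachable only through its clause gadget. Then $\distc$ loses iff some $t_j$ is unreachable, which is still an existential condition. So instead I use the party structure to force a choice: parties of size three $\{\ell_j^1,\ell_j^2,\ell_j^3\}$-like groups mixed with variable selections. A literal occurrence candidate is nominated only when it is "true", and it provides the unique path onward. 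I will design it so that reachability of $t$ fails iff the nominations encode a satisfying assignment. The concrete plan is a single chain where each clause gadget offers a bypass path unless all its literals' candidates are forced false. Size-three parties are needed so that a single party can hold a literal occurrence together with two variable-consistency candidates.

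The main obstacle is exactly this quantifier inversion: arranging that a consistent nomination by all parties, rather than by $\distc$, simultaneously blocks every route from $\distc$ to some fixed target exactly when every clause is satisfied. I expect to handle it with weight-$3$ arcs, which are transitive by \Cref{obs:Schulze:threeVoters:transitiveMajorityGraph}, to create unavoidable bypasses that only nominations of false literals can remove. Finally, I will pad the voter count by adding pairs of reversed votes, as in the first observation, to cover all odd numbers of voters at least three.
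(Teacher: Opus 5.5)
Your membership argument, the reduction of ``$\distc$ wins'' to reachability via the claim in the proof of \Cref{thm:PP:Schulze:threeVoters:numParties:FPT}, and the padding with pairs of reversed votes are all fine. The hardness part, however, has a genuine gap. You never specify a construction, and the quantifier inversion that you correctly identify as the main obstacle is left unresolved.

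The designs you do sketch encode the wrong condition:
\begin{itemize}
\item Clause gadgets in series on a chain ending in $t$ make $t$ unreachable iff \emph{some} gadget is blocked.
\item Separate targets $t_j$ in parallel again give the existential condition ``some $t_j$ is unreachable''.
\item Your final plan, where a gadget is blocked only when all its literal candidates are false, read literally makes $\distc$ lose iff the nomination falsifies \emph{some} clause. That holds for essentially every formula, so the reduction would be trivial.
\end{itemize}
The remarks about size-three parties holding a literal occurrence ``together with two variable-consistency candidates'' and about weight-$3$ bypasses do not determine any arcs, so there is nothing to verify.

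The missing idea is to keep your parallel routes but merge them into \emph{one common} target. Then unreachability of that target is a universal condition over all nominated witnesses. Meanwhile, the existential choice of the nomination supplies both the truth assignment and one witness per clause.

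The paper's proof does exactly this with all arcs of weight~$1$ and no chain machinery:
\begin{itemize}
\item $\distc$ defeats only the variable candidates.
\item Each literal occurrence candidate $z^h$ is defeated by exactly one variable candidate, namely $\ol{z}$.
\item Every literal occurrence candidate defeats $a$, and $a$ defeats $b$.
\item $b$ defeats every candidate except $a$, including $\distc$.
\end{itemize}
Hence $\distc$ wins iff $a$ is reachable from $\distc$. Moreover, $a$ is unreachable iff no nominated $z^h$ has $\ol{z}$ nominated, that is, iff every nominated literal occurrence is true under the assignment read off from the parties $\{x_i,\ol{x}_i\}$.

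The size-three parties are simply the clause parties $P_j=\{\ell_j^1,\ell_j^2,\ell_j^3\}$, which force exactly one witness per clause; this is where the bound three comes from. You would still need to give three explicit votes realizing this tournament. You must also check that no arc other than the arcs $(\ol{z},z^h)$ leads from $\distc$ or a variable candidate to a literal occurrence candidate, $a$, or $b$.
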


\begin{proof}
  Containment of \SchNP in~\coNP is clear~\cite{rot-woi:c:possible-and-necessary-president-problem-in-schulze-voting}, since
  given an instance of the complement of this problem, for each member of the distinguished party (one after the other), we can nondeterministically guess a valid nomination of the other parties, and for such a nomination guessed, we can verify in polynomial time whether the nominee of the distinguished party does not win the resulting reduced election.

  We prove \coNPhness
  of \SchNP by a reduction from \BalancedSat
  to the complement of \SchNP.
    Let~$\varphi=\bigwedge_{j \in[m]} \clause_j$ be the input 3-CNF formula defined over a set~$X=\{x_1,\dots,x_n\}$ of variables.
    We let~$\clauses = \{\clause_1,\dots,\clause_m\}$ denote the set of clauses in~$\varphi$.
    We will use the notation~$\ol{X}=\{\ol{x}_1, \dots, \ol{x}_n\}$.

    \proofsubparagraph{Construction.}
    Let us create an election~$\election$ with three voters.
    We let~$\{p\}$ be the distinguished party, and we introduce two further singleton parties formed by candidates~$a$ and~$b$.
    For each variable~$x_i \in X$, we create a party~$\{x_i,\overline{x}_i\}$ containing two \emph{variable candidates} corresponding to the positive and negative form of~$x_i$.
    We further create four \emph{literal occurrence candidates},~$x_i^1$,~$x_i^2$,~$\ol{x}_i^1$, and~$\ol{x}_i^2$, corresponding to the four occurrences of each variable~$x_i \in X$.
    These candidates are grouped together into parties according to the clauses in~$\varphi$: For each~$j \in [m]$, we create a party~$P_j=\{\ell_j^1,\ell_j^2,\ell_j^3\}$; note that this is indeed a partitioning of all literal occurrence candidates.
    This completes the definition of the candidate set and the party structure in~$\election$.
    Note that, as promised, each party has at most three candidates.
    The preferences of our three voters are as follows:
    \begin{align*}
    v_1 \colon&\quad b\  \ol{x}_1^1\  \ol{x}_1^2\  \cdots\
    \ol{x}_n^1\  \ol{x}_n^2\
    x_1^1\  x_1^2\ \cdots\  x_n^1\  x_n^2\
    a\  p\  x_1\  \cdots\  x_n\  \ol{x}_1\  \cdots\  \ol{x}_n;
    \\
    v_2 \colon&\quad p\
    x_1\  \ol{x}_1^1\  \ol{x}_1^2\  \cdots\
    x_n\  \ol{x}_n^1\  \ol{x}_n^2\
    \ol{x}_1\  x_1^1\  x_1^2\  \cdots\
    \ol{x}_n\  x_n^1\  x_n^2\  a\ b;
    \\
    v_3 \colon&\quad a\ b\
    \ol{x}_n\  x_n^2\  x_n^1\  \cdots\
    \ol{x}_1\  x_1^2\  x_1^1\
    x_n\  \ol{x}_n^2\  \ol{x}_n^1\  \cdots\
    x_1\  \ol{x}_n^2\  \ol{x}_n^1\  p.
    \end{align*}

    \begin{figure}[bt!]
        \centering
        \begin{tikzpicture}
            \node[inner sep=3pt] (p) at (0,0) {$p$};
            \node[inner sep=2pt] (xn) at (-0.6,-1) {$x_n$};
            \node[inner sep=2pt] at (-1.3,-1) {$\cdots$};
            \node[inner sep=2pt] (xi) at (-2,-1) {$x_i$};
            \node[inner sep=2pt] at (-2.7,-1) {$\cdots$};
            \node[inner sep=2pt] (x1) at (-3.4,-1) {$x_1$};

            \node[inner sep=2pt] (ox1) at (0.6,-1) {$\ol{x}_1$};
            \node[inner sep=2pt] at (1.3,-1) {$\cdots$};
            \node[inner sep=2pt] (oxi) at (2,-1) {$\ol{x}_i$};
            \node[inner sep=2pt] at (2.7,-1) {$\cdots$};
            \node[inner sep=2pt] (oxn) at (3.4,-1) {$\ol{x}_n$};

            \node[inner sep=2pt] (ol11) at (-3.9,-2) {$\ol{x}_1^1$};
            \node[inner sep=2pt] at (-3.2,-2) {$\cdots$};
            \node[inner sep=2pt] (oli1) at (-2.5,-2) {$\ol{x}_i^1$};
            \node[inner sep=2pt] (oli2) at (-1.5,-2) {$\ol{x}_i^2$};
            \node[inner sep=2pt] at (0,-2) {$\cdots$};
            \node[inner sep=2pt] (li1) at (2.5,-2) {$x_i^1$};
            \node[inner sep=2pt] (li2) at (1.5,-2) {$x_i^2$};
            \node[inner sep=2pt] at (3.2,-2) {$\cdots$};
            \node[inner sep=2pt] (ln2) at (3.9,-2) {$x_n^2$};

            \node[inner sep=3pt] (a) at (0,-3) {$a$};
            \node[inner sep=3pt] (b) at (0,-4) {$b$};

            \draw[rounded corners,dashed] (-4.5,0.8) rectangle (4.5,-2.5);

            \draw[rounded corners,double,dotted] (-4,0.5) rectangle (4,-1.4);

            \draw[myarc,double,myBlue,line width=1pt,in=-90,out=70] (a) to (0.5,-1.4);

            \draw[myarc,double,myBlue,line width=1pt,in=-90,out=0] (b) to (3,-2.5);

            \draw (p) circle[radius=8pt,inner sep=1pt];
            \draw (a) circle[radius=8pt,inner sep=1pt];
            \draw (b) circle[radius=8pt,inner sep=1pt];
            \draw (x1) circle[radius=8pt,inner sep=1pt];
            \draw (xi) circle[radius=8pt,inner sep=1pt];
            \draw (xn) circle[radius=8pt,inner sep=1pt];
            \draw (ox1) circle[radius=8pt,inner sep=1pt];
            \draw (oxi) circle[radius=8pt,inner sep=1pt];
            \draw (oxn) circle[radius=8pt,inner sep=1pt];
            \draw (ol11) circle[radius=8pt,inner sep=1pt];
            \draw (oli1) circle[radius=8pt,inner sep=1pt];
            \draw (oli2) circle[radius=8pt,inner sep=1pt];
            \draw (li1) circle[radius=8pt,inner sep=1pt];
            \draw (li2) circle[radius=8pt,inner sep=1pt];
            \draw (ln2) circle[radius=8pt,inner sep=1pt];

            \draw[myarc,myGreen,line width=1pt]    (p) to (x1);
            \draw[myarc,myGreen,line width=1pt]    (p) to (xi);
            \draw[myarc,myGreen,line width=1pt]    (p) to (xn);
            \draw[myarc,myGreen,line width=1pt]    (p) to (ox1);
            \draw[myarc,myGreen,line width=1pt]    (p) to (oxi);
            \draw[myarc,myGreen,line width=1pt]    (p) to (oxn);

            \draw[myarc,myRed,line width=1pt] (x1) to (ol11);
            \draw[myarc,myRed,line width=1pt] (xi) to (oli1);
            \draw[myarc,myRed,line width=1pt] (xi) to (oli2);
            \draw[myarc,myRed,line width=1pt] (oxi) to (li1);
            \draw[myarc,myRed,line width=1pt] (oxi) to (li2);
            \draw[myarc,myRed,line width=1pt] (oxn) to (ln2);

            \draw[myarc,myGreen,line width=1pt] (ol11) to (a);
            \draw[myarc,myGreen,line width=1pt] (oli1) to (a);
            \draw[myarc,myGreen,line width=1pt] (oli2) to (a);
            \draw[myarc,myGreen,line width=1pt] (li1) to (a);
            \draw[myarc,myGreen,line width=1pt] (li2) to (a);
            \draw[myarc,myGreen,line width=1pt] (ln2) to (a);

            \draw[myarc,myRed,line width=1pt] (a) to (b);

            \draw[fill=white] (p) circle[radius=8pt,inner sep=1pt];
            \draw[fill=white] (a) circle[radius=8pt,inner sep=1pt];
            \draw[fill=white] (b) circle[radius=8pt,inner sep=1pt];
            \draw[fill=white] (x1) circle[radius=8pt,inner sep=1pt];
            \draw[fill=white] (xi) circle[radius=8pt,inner sep=1pt];
            \draw[fill=white] (xn) circle[radius=8pt,inner sep=1pt];
            \draw[fill=white] (ox1) circle[radius=8pt,inner sep=1pt];
            \draw[fill=white] (oxi) circle[radius=8pt,inner sep=1pt];
            \draw[fill=white] (oxn) circle[radius=8pt,inner sep=1pt];
            \draw[fill=white] (ol11) circle[radius=8pt,inner sep=1pt];
            \draw[fill=white] (oli1) circle[radius=8pt,inner sep=1pt];
            \draw[fill=white] (oli2) circle[radius=8pt,inner sep=1pt];
            \draw[fill=white] (li1) circle[radius=8pt,inner sep=1pt];
            \draw[fill=white] (li2) circle[radius=8pt,inner sep=1pt];
            \draw[fill=white] (ln2) circle[radius=8pt,inner sep=1pt];

            \node[inner sep=3pt] (p) at (0,0) {$p$};
            \node[inner sep=2pt] (xn) at (-0.6,-1) {$x_n$};
            \node[inner sep=2pt] at (-1.3,-1) {$\cdots$};
            \node[inner sep=2pt] (xi) at (-2,-1) {$x_i$};
            \node[inner sep=2pt] at (-2.7,-1) {$\cdots$};
            \node[inner sep=2pt] (x1) at (-3.4,-1) {$x_1$};

            \node[inner sep=2pt] (ox1) at (0.6,-1) {$\ol{x}_1$};
            \node[inner sep=2pt] at (1.3,-1) {$\cdots$};
            \node[inner sep=2pt] (oxi) at (2,-1) {$\ol{x}_i$};
            \node[inner sep=2pt] at (2.7,-1) {$\cdots$};
            \node[inner sep=2pt] (oxn) at (3.4,-1) {$\ol{x}_n$};

            \node[inner sep=2pt] (ol11) at (-3.9,-2) {$\ol{x}_1^1$};
            \node[inner sep=2pt] at (-3.2,-2) {$\cdots$};
            \node[inner sep=2pt] (oli1) at (-2.5,-2) {$\ol{x}_i^1$};
            \node[inner sep=2pt] (oli2) at (-1.5,-2) {$\ol{x}_i^2$};
            \node[inner sep=2pt] at (0,-2) {$\cdots$};
            \node[inner sep=2pt] (li1) at (2.5,-2) {$x_i^1$};
            \node[inner sep=2pt] (li2) at (1.5,-2) {$x_i^2$};
            \node[inner sep=2pt] at (3.2,-2) {$\cdots$};
            \node[inner sep=2pt] (ln2) at (3.9,-2) {$x_n^2$};

            \node[inner sep=3pt] (a) at (0,-3) {$a$};
            \node[inner sep=3pt] (b) at (0,-4) {$b$};
        \end{tikzpicture}
        \caption{An illustration of the majority graph of the election~$\election$ constructed in the proof of \Cref{thm:NP:Schulze:threeVoters:coNPh}.
        Note that certain arcs are missing from the picture, for reasons of visibility.
        The double blue arc represents the set of arcs from~$b$ to every candidate within the dashed rectangle;
        notation regarding the color of the arcs is the same as in \Cref{fig:PP:Schulze:threeVoters:NPh:tournament_path}.}
        \label{fig:NP:Schulze:threeVoters:coNPh:construction}
    \end{figure}
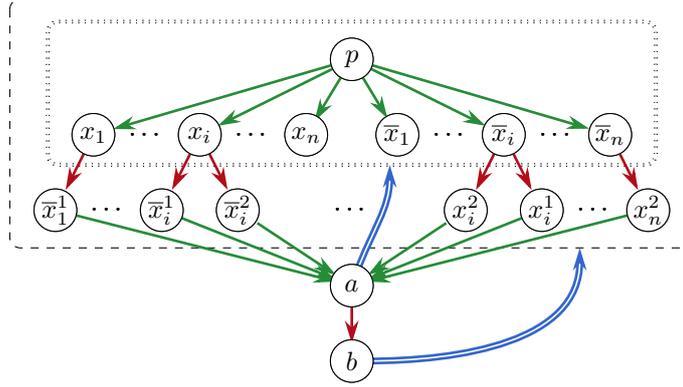

    An illustration of the majority graph of the constructed election~$\election$ is shown in \Cref{fig:NP:Schulze:threeVoters:coNPh:construction}.
    Note that all arcs in~$\majG$ have weight~$1$.

    \proofsubparagraph{Correctness.}
    We will use the following claim.

    \begin{claim}\label{clm:NP:Schulze:threeVoters:coNPh:necpress-path}
        Candidate~$p$ is a necessary president if and only if for every possible set~$\nomination$ of nominees containing exactly one candidate from each party, it holds that there is a path from~$p$ to~$a$ in~$\majGN$.
    \end{claim}
    \begin{claimproof}
        We observe the following facts:
        (i)~$b$ defeats all variable candidates, all literal occurrence candidates, and~$p$; and
        (ii)~$b$ is only defeated by~$a$.
        Now, if there were no path from~$p$ to~$a$ in~$\majGN$, then there would be no path in~$\majGN$ from~$p$ to~$b$ either, so by~$\weight(b,p)=1$ we have that~$p$ is not a winner in~$\election(N)$.
        Second, if there is a path from~$p$ to~$a$ in~$G$, then via~$b$ there is a path from~$p$ to every other candidate in~$N\setminus \{p\}$ as well.
    \end{claimproof}

    By \Cref{clm:NP:Schulze:threeVoters:coNPh:necpress-path}, the following property is enough to prove the correctness of the reduction.

    \begin{claim}\label{clm:NP:Schulze:threeVoters:coNPh:nec-path-sat}
        For every possible set~$\nomination$ of nominees containing exactly one candidate from each party,
        there exists a path from~$p$ to~$a$ in~$\majGN$ if and only if there exists \emph{no} satisfying truth assignment for~$\varphi$.
    \end{claim}
    \begin{claimproof}
        We are going to show the equivalence of the two statements by showing that their negated forms are equivalent.

        First, assume that~$\varphi$ admits a satisfying truth assignment~$\alpha$. We are going to show that there exists a set~$\nomination$ of nominees (containing a candidate from each party) for which there is no path from~$p$ to~$a$ in~$\majGN$.
        Let~$\nomination$ contain all three singletons (with $a$,~$b,$ and~$p$) as well as all variable candidates corresponding to true literals under~$\alpha$.
        Furthermore, for each clause~$\clause_j$, we select one literal occurrence candidate in~$P_j$ that corresponds to a true literal in~$\clause_j$ (such a candidate exists for each~$j \in [m]$, because~$\alpha$ satisfies~$\varphi$).

        We claim that there is no arc leaving the set
        \[
        R=N \cap (\{p\} \cup X \cup \ol{X})
        \]
        in~$\majGN$.
        First, both~$a$ and~$b$ have an arc pointing to every candidate in~$R$.
        Second, any literal occurrence candidate~$z^h$ defeats~$p$, and is defeated by exactly one variable candidate, namely, by~$\ol{z}$ (meaning~$\ol{x}_i$ if~$z^h \in \{x_i^1,x_i^2\}$, and~$x_i$ if~$z^h \in \{\ol{x}_i^1,\ol{x}_i^2\}$, since that is the only variable candidate that is preferred to~$z^h$ in both votes,~$v_2$ and~$v_3$).
        However, if~$z^h$ is a nominee in~$\nomination$, then it corresponds to a true literal, and thus,~$\ol{z}$ corresponds to a false literal and is therefore not in~$\nomination$.
        This shows that no candidate in~$N \setminus R$ is reachable from any candidate in~$R$, proving the first direction of our claim.

        Second, assume now that for a set~$\nomination$ of nominees (containing exactly one candidate from each party), there exists no path in~$\majGN$ from~$p$ to~$a$.
        We are going to construct a satisfying truth assignment~$\alpha$ for~$\varphi$ by simply setting those literals to \true\ for which the corresponding variable candidate is in~$\nomination$.
        It remains to show that this truth assignment satisfies each clause~$\clause_j$.
        To see this, observe that all nominated occurrence candidates~$z^h$ must correspond to a literal~$z$ with~$\alpha(z)=\true$: indeed, if~$\alpha(z)=\false$, then~$\ol{z} \in N$ and thus the path~$(p,\ol{z},z^h,a)$ is present in~$\majGN$; a contradiction.
        This proves that each clause has at least one true literal under~$\alpha$, namely the one that corresponds to its nominated literal occurrence candidate in~$P_j$.
    \end{claimproof}

    \Cref{clm:NP:Schulze:threeVoters:coNPh:necpress-path,clm:NP:Schulze:threeVoters:coNPh:nec-path-sat} show that the constructed instance of \SchNP is a yes-instance if and only if~$\varphi$ is not satisfiable. This proves the \coNPhness of the problem.
\end{proof}

\begin{theorem}\label{thm:NP:Schulze:fourVoters:coNPh}
  \SchNP is \coNPc, even if the maximum party size is two and the number
  of voters is an even constant at least four.
\end{theorem}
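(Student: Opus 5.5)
We reduce \BalancedSat to the complement of \SchNP, reusing the scheme of \Cref{thm:NP:Schulze:threeVoters:coNPh}. Containment in \coNP is as there. The three-candidate clause parties $P_j$ must be replaced by parties of size two, so we encode the clause check as a chain of ``switches''. Since the number of voters is even, we may use ties: every arc will have weight exactly~$2$, and non-arcs are ties. The majority graph is then built with pairs of votes that cancel on all but the intended comparisons, as in \Cref{thm:PP:Schulze:fourVoters:NPh}.

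\emph{Candidates and parties.} We have singletons $p$ (distinguished), $a$, and $b$. For each $x_i$ there is a variable party $\{x_i,\ol{x}_i\}$. For each clause $\clause_j$ there are singleton \emph{chain candidates} $g_j^0,g_j^1,g_j^2,g_j^3$. For each literal occurrence $\ell_j^h$ ($h\in[3]$) there is a party $\{\ell_j^h,e_j^h\}$, where $e_j^h$ is a new \emph{switch candidate}. All parties have size at most two.

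\emph{Intended majority graph.} The arcs are:
\begin{itemize}
\item $b\to c$ for every $c\notin\{a,b\}$, and $a\to b$;
\item $p\to z$ for every variable candidate $z\in X\cup\ol{X}$;
\item $\ol{z}\to z^h$ for every literal occurrence $z^h$;
\item $z^h\to a$ for every literal occurrence $z^h$;
\item $p\to g_j^0$ for every $j$, and $g_j^3\to a$;
\item $g_j^{h-1}\to e_j^h\to g_j^h$ for every $j$ and $h\in[3]$.
\end{itemize}
There are no other arcs.

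First, an analogue of \Cref{clm:NP:Schulze:threeVoters:coNPh:necpress-path} holds. Candidate $b$ beats $p$ and is beaten only by $a$, and $b$ reaches every other nominee. Hence, with all weights equal to~$2$, $p$ wins $\election(N)$ iff $\majGN$ has a path from $p$ to $a$.

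Second, we show that some nomination avoids such a path iff $\varphi$ is satisfiable.
\begin{itemize}
\item Suppose $\alpha$ satisfies $\varphi$. Nominate the true variable candidates. In each clause, nominate $\ell_j^h$ for one true occurrence, and the switch $e_j^{h'}$ in every other occurrence party of that clause. Then each chain $g_j^0\cdots g_j^3$ is broken at $h$. Every nominated occurrence is true, so its only in-neighbour $\ol{z}$ besides $b$ is not nominated. The set reachable from $p$ is therefore $\{p\}\cup\{g_j^k\}\cup\{\text{nominated switches}\}\cup(N\cap(X\cup\ol{X}))$, which excludes $a$.
\item Conversely, suppose no path from $p$ to $a$ exists. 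Every chain must be broken, so in each clause some occurrence $\ell_j^h$ is nominated. Each nominated occurrence $z^h$ must have $\ol{z}\notin N$, that is, $z\in N$. Setting nominated variable candidates to \true\ thus satisfies every clause.
\end{itemize}

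The main obstacle is realizing this graph exactly with four votes, with weight~$2$ on each arc and ties elsewhere. We plan votes of the form $v_1,v_2$ and $v_3,v_4$ as in \Cref{thm:PP:Schulze:fourVoters:NPh}. Each pair is nearly mutually reversed: reversal yields a tie, and agreeing on a comparison yields a $\pm 1$ contribution. An arc then arises exactly where three votes agree. For the four ``layers'' $p\to$ variables $\to$ occurrences $\to a$, we copy that theorem's layout; the chain $p\to g^0\to e^1\to g^1\to\cdots\to a$ is layered analogously. The delicate part is the incidence relation $\ol{z}\to z^h$. We handle it by listing each variable candidate adjacent to its own occurrences in both votes of one pair, in the same relative order, while reversing the blocks globally. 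This is the trick behind ($\dagger$) in \Cref{thm:PP:Schulze:numParties:Wh}. Layers are separated so that every cross-layer pair other than the intended ones is reversed in at least one pair and tied overall. The edges at $b$ (beating everything, beaten by $a$) are obtained by placing $b$ first in three votes and directly below $a$ in the fourth.

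Checking that no unintended arc appears, in particular between chain candidates of different clauses and between switches and occurrences, is the routine but error-prone verification. Using \Cref{obs:Schulze:twoVoters:transitiveMajorityGraph}-type reasoning on each vote pair keeps it manageable. The result then extends to every even number of voters at least four by the opening observation of the preliminaries.
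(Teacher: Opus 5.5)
\textbf{Gap: the majority graph you specify cannot come from four voters.} Your combinatorial reduction is sound for the graph as stated. The switch chains $p\to g_j^0\to e_j^1\to\cdots\to g_j^3\to a$ and the arcs $\ol{z}\to z^h\to a$ do give: some nomination admits no $p$--$a$ path iff $\varphi$ is satisfiable. There is one slip in the forward direction. The closed set should contain only the chain prefix $g_j^0,e_j^1,\dots,g_j^{h-1}$ before each break, not all $g_j^k$, since $g_j^3\to a$.

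The real problem is the step you defer. It is not merely unchecked; it is impossible. Your graph contains the directed triangles $a\to b\to z^h\to a$ and $a\to b\to g_j^3\to a$, because $b$ beats every candidate except $a$ while $a$ beats $b$. With four voters, every arc needs at least three supporters, so a directed triangle needs at least $9$ voter--arc agreements. But a linear order agrees with at most two arcs of a directed $3$-cycle, so four voters supply at most $8$. Hence no four-vote profile has your majority graph, however the vote pairs are arranged. Your concrete hint is also internally inconsistent: placing $b$ first in three votes makes $b$ beat $a$, contradicting $a\to b$.

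\textbf{What is missing.} The paper builds in a fix for exactly this. It adds a further singleton $c$ with $a\to b\to c$ and lets $c$, not $b$, beat every candidate other than $a$ and $b$. In the paper, $b$ beats only $c$, $p$, and the variable and literal occurrence candidates, none of which points to $a$. So every cycle through $a$ and $b$ has length at least four, e.g.\ $a\to b\to c\to s_j^h\to a$.

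Removing triangles is still not enough on its own. Every directed cycle must contain, for each of the four voters, an arc on which that voter is the single dissenter, and every tie must be exactly $2{:}2$. So the profile has to be written down explicitly and checked, which your ``nearly reversed pairs'' sketch does not do. The paper gives explicit votes in which literal occurrences reach $a$ through selection candidates $s_j^1,s_j^2$, with parties $\{\ell_j^1\}$, $\{\ell_j^2,\ell_j^3\}$, $\{s_j^1,s_j^2\}$, instead of a switch chain. Your switch-chain gadget might be salvageable along these lines, but as proposed the four-voter claim is unsupported.
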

\begin{proof}
    Similarly as in  the proof of \Cref{thm:NP:Schulze:threeVoters:coNPh} we provide a reduction from \BalancedSat. Given a 3-CNF formula~$\varphi=\bigwedge_{j \in[m]} \clause_j$ with a set~$X=\{x_1,\dots,x_n\}$ of variables and a set~$\clauses = \{\clause_1,\dots,\clause_m\}$ of clauses, we will use the notation defined in the proof of \Cref{thm:NP:Schulze:threeVoters:coNPh}.

    \proofsubparagraph{Construction.}
    We let~$\{p\}$ be the distinguished party, and we also add three more candidates,~$a$,~$b$, and~$c$, each forming singleton parties.
    For each variable~$x_i \in X$, we create a party~$\{x_i,\overline{x}_i\}$ and   four \emph{literal occurrence candidates},~$x_i^1$,~$x_i^2$,~$\ol{x}_i^1$, and~$\ol{x}_i^2$.
    For each clause~$\clause_j$, we also introduce two \emph{selection candidates} in a party~$S_j=\{s_j^1, s_j^2\}$.
    We let~$S$ denote the set of all selection candidates.
    The literal occurrence candidates are grouped together into parties according to the clauses in~$\varphi$: For each~$j \in [m]$, there is a singleton party~$\{\ell_j^1\}$ and a another party~$P_j=\{\ell_j^2,\ell_j^3\}$.
    This completes the definition of the candidate set and the party structure. Observe that each party has at most two candidates.

    The preferences of our four voters are as follows:
    \begin{align*}
    v_1 \colon&\quad
    p\  a\  b\  c\  \ora{X \cup \ol{X}}\
    \ell_1^1\  s_1^1\  \ell_1^2\  \ell_1^3\  s_1^2\
    \ell_2^1\  s_2^1\  \ell_2^2\  \ell_2^3\  s_2^2\
    \cdots\
    \ell_m^1\  s_m^1\  \ell_m^2\  \ell_m^3\  s_m^2;
    \\
    v_2 \colon&\quad
    \ora{S}\  a\  b\  c\  p\
    x_1\  \ol{x}_1^1\  \ol{x}_1^2\  \cdots\
    x_n\  \ol{x}_n^1\  \ol{x}_n^2\
    \ol{x}_1\  x_1^1\  x_1^2\  \cdots\
    \ol{x}_n\  x_n^1\  x_n^2;
    \\
    v_3 \colon&\quad
    b\  c\
    \ell_m^2\  \ell_m^3\  s_m^2\  \ell_m^1\  s_m^1\
    \cdots\
    \ell_1^2\  \ell_1^3\  s_1^2\  \ell_1^1\  s_1^1\
    a\  p\  \ola{X \cup \ol{X}};
    \\
    v_4 \colon&\quad
    \ol{x}_n\  x_n^2\  x_n^1\  \cdots\
    \ol{x}_1\  x_1^2\  x_1^1\
    x_n\  \ol{x}_n^2\  \ol{x}_n^1\  \cdots\
    x_1\  \ol{x}_n^2\  \ol{x}_n^1\
    c\  \ola{S}\  a\  b\  p.
    \end{align*}
    The weighted majority graph is depicted in \Cref{fig:NP:Schulze:fourVoters:coNPh:construction}; the weight of each arc is~$2$.
    \begin{figure}[bt!]
        \centering
        \begin{tikzpicture}
            \node[inner sep=3pt] (p) at (0,0.2) {$p$};
            \node[inner sep=2pt] (xn) at (-0.6,-1) {$x_n$};
            \node[inner sep=2pt] at (-1.3,-1) {$\cdots$};
            \node[inner sep=2pt] (xi) at (-2,-1) {$x_i$};
            \node[inner sep=2pt] at (-2.7,-1) {$\cdots$};
            \node[inner sep=2pt] (x1) at (-3.4,-1) {$x_1$};
            \node[inner sep=2pt] (ox1) at (0.6,-1) {$\ol{x}_1$};
            \node[inner sep=2pt] at (1.3,-1) {$\cdots$};
            \node[inner sep=2pt] (oxi) at (2,-1) {$\ol{x}_i$};
            \node[inner sep=2pt] at (2.7,-1) {$\cdots$};
            \node[inner sep=2pt] (oxn) at (3.4,-1) {$\ol{x}_n$};
            \node[inner sep=2pt] (ol11) at (-3.9,-2) {$\ol{x}_1^1$};
            \node[inner sep=2pt] at (-3.2,-2) {$\cdots$};
            \node[inner sep=2pt] (oli1) at (-2.5,-2) {$\ol{x}_i^1$};
            \node[inner sep=2pt] (oli2) at (-1.5,-2) {$\ol{x}_i^2$};
            \node[inner sep=2pt] at (0,-2) {$\cdots$};
            \node[inner sep=2pt] (li1) at (1.5,-2) {$x_i^1$};
            \node[inner sep=1.5pt] (li2) at (2.5,-2) {$x_i^2$};
            \node[inner sep=2pt] at (3.2,-2) {$\cdots$};
            \node[inner sep=2pt] (ln2) at (3.9,-2) {$x_n^2$};

            \node[inner sep=1pt] (s11) at (-3.5,-3.3) {$s^1_1$};
            \node[inner sep=1pt] (s12) at (-2.5,-3.3) {$s^2_1$};
            \node at (-1.5,-3.3) {$\cdots$};
            \node[inner sep=1pt] (sj1) at (-0.5,-3.3) {$s^1_j$};
            \node[inner sep=1pt] (sj2) at (0.5,-3.3) {$s^2_j$};
            \node at (1.5,-3.3) {$\cdots$};
            \node[inner sep=1pt] (sm1) at (2.5,-3.3) {$s^1_m$};
            \node[inner sep=1pt] (sm2) at (3.5,-3.3) {$s^2_m$};

            \node[inner sep=3pt] (a) at (0,-4.6) {$a$};
            \node[inner sep=3pt] (b) at (0,-5.6) {$b$};
            \node[inner sep=4pt] (c) at (0,-6.6) {$c$};

            \draw (s11) circle[radius=8pt,inner sep=1pt];
            \draw (s12) circle[radius=8pt,inner sep=1pt];
            \draw (sj1) circle[radius=8pt,inner sep=1pt];
            \draw (sj2) circle[radius=8pt,inner sep=1pt];
            \draw (sm1) circle[radius=8pt,inner sep=1pt];
            \draw (sm2) circle[radius=8pt,inner sep=1pt];
            \draw (p) circle[radius=8pt,inner sep=1pt];
            \draw (a) circle[radius=8pt,inner sep=1pt];
            \draw (b) circle[radius=8pt,inner sep=1pt];
            \draw (c) circle[radius=8pt,inner sep=1pt];
            \draw (x1) circle[radius=8pt,inner sep=1pt];
            \draw (xi) circle[radius=8pt,inner sep=1pt];
            \draw (xn) circle[radius=8pt,inner sep=1pt];
            \draw (ox1) circle[radius=8pt,inner sep=1pt];
            \draw (oxi) circle[radius=8pt,inner sep=1pt];
            \draw (oxn) circle[radius=8pt,inner sep=1pt];
            \draw (ol11) circle[radius=8pt,inner sep=1pt];
            \draw (oli1) circle[radius=8pt,inner sep=1pt];
            \draw (oli2) circle[radius=8pt,inner sep=1pt];
            \draw (li1) circle[radius=8pt,inner sep=1pt];
            \draw (li2) circle[radius=8pt,inner sep=1pt];
            \draw (ln2) circle[radius=8pt,inner sep=1pt];

            \draw[myarc,myGreen,line width=1pt]    (p) to (x1);
            \draw[myarc,myGreen,line width=1pt]    (p) to (xi);
            \draw[myarc,myGreen,line width=1pt]    (p) to (xn);
            \draw[myarc,myGreen,line width=1pt]    (p) to (ox1);
            \draw[myarc,myGreen,line width=1pt]    (p) to (oxi);
            \draw[myarc,myGreen,line width=1pt]    (p) to (oxn);

            \draw[myarc,myRed,line width=1pt] (x1) to (ol11);
            \draw[myarc,myRed,line width=1pt] (xi) to (oli1);
            \draw[myarc,myRed,line width=1pt] (xi) to (oli2);
            \draw[myarc,myRed,line width=1pt] (oxi) to (li1);
            \draw[myarc,myRed,line width=1pt] (oxi) to (li2);
            \draw[myarc,myRed,line width=1pt] (oxn) to (ln2);

            \draw[myarc,myBlue,line width=1pt] (ol11) to (sj1);
            \draw[myarc,myBlue,line width=1pt] (li2) to (sj2);
            \draw[myarc,myBlue,line width=1pt] (ln2) to (sj2);

            \draw[myarc,myYellow,line width=1pt] (s11) to (a);
            \draw[myarc,myYellow,line width=1pt] (s12) to (a);
            \draw[myarc,myYellow,line width=1pt] (sj1) to (a);
            \draw[myarc,myYellow,line width=1pt] (sj2) to (a);
            \draw[myarc,myYellow,line width=1pt] (sm1) to (a);
            \draw[myarc,myYellow,line width=1pt] (sm2) to (a);

            \draw[myarc,myRed,line width=1pt] (a) to (b);
            \draw[myarc,myGreen,line width=1pt] (b) to (c);

            \draw[rounded corners,dashed] (-4.5,-0.3) rectangle (4.5,-2.5);
            \draw[rounded corners,double,dotted] (-4,-0.5) rectangle (4,-1.4);
            \draw[rounded corners,double,dotted] (-4.3,-2.85) rectangle (4.3,-3.75);

            \draw[myarc,double,myGreen,line width=1pt,in=-90,out=80] (a) to (-0.5,-1.4);
            \draw[myarc,double,myBlue,line width=1pt,in=-100,out=20] (c) to (3.3,-3.75);
            \draw[myarc,double,myGreen,line width=1pt,in=-90,out=0] (c) to (4,-2.5);
            \draw[myarc,double,myGreen,line width=1pt,in=-90,out=0] (b) to (3,-2.5);

            \draw[myYellow,line width=0.8pt, out=180, in=-90] (c) to (-5.4,-1.5);
            \draw[myarc,myYellow,line width=0.8pt, out=90, in=160] (-5.4,-1.5) to (p);

            \draw[myYellow,line width=0.8pt, out=180, in=-90] (b) to (-5.2,-1.5);
            \draw[myarc,myYellow,line width=0.8pt, out=90, in=170] (-5.2,-1.5) to (p);

            \draw[myYellow,line width=0.8pt, out=180, in=-90] (a) to (-5,-1.5);
            \draw[myarc,myYellow,line width=0.8pt, out=90, in=180] (-5,-1.5) to (p);

            \draw[fill=white] (s11) circle[radius=8pt,inner sep=1pt];
            \draw[fill=white] (s12) circle[radius=8pt,inner sep=1pt];
            \draw[fill=white] (sj1) circle[radius=8pt,inner sep=1pt];
            \draw[fill=white] (sj2) circle[radius=8pt,inner sep=1pt];
            \draw[fill=white] (sm1) circle[radius=8pt,inner sep=1pt];
            \draw[fill=white] (sm2) circle[radius=8pt,inner sep=1pt];
            \draw[fill=white] (p) circle[radius=8pt,inner sep=1pt];
            \draw[fill=white] (a) circle[radius=8pt,inner sep=1pt];
            \draw[fill=white] (b) circle[radius=8pt,inner sep=1pt];
            \draw[fill=white] (c) circle[radius=8pt,inner sep=1pt];
            \draw[fill=white] (x1) circle[radius=8pt,inner sep=1pt];
            \draw[fill=white] (xi) circle[radius=8pt,inner sep=1pt];
            \draw[fill=white] (xn) circle[radius=8pt,inner sep=1pt];
            \draw[fill=white] (ox1) circle[radius=8pt,inner sep=1pt];
            \draw[fill=white] (oxi) circle[radius=8pt,inner sep=1pt];
            \draw[fill=white] (oxn) circle[radius=8pt,inner sep=1pt];
            \draw[fill=white] (ol11) circle[radius=8pt,inner sep=1pt];
            \draw[fill=white] (oli1) circle[radius=8pt,inner sep=1pt];
            \draw[fill=white] (oli2) circle[radius=8pt,inner sep=1pt];
            \draw[fill=white] (li1) circle[radius=8pt,inner sep=1pt];
            \draw[fill=white] (li2) circle[radius=8pt,inner sep=1pt];
            \draw[fill=white] (ln2) circle[radius=8pt,inner sep=1pt];

            \node[inner sep=3pt] (p) at (0,0.2) {$p$};
            \node[inner sep=2pt] (xn) at (-0.6,-1) {$x_n$};
            \node[inner sep=2pt] at (-1.3,-1) {$\cdots$};
            \node[inner sep=2pt] (xi) at (-2,-1) {$x_i$};
            \node[inner sep=2pt] at (-2.7,-1) {$\cdots$};
            \node[inner sep=2pt] (x1) at (-3.4,-1) {$x_1$};

            \node[inner sep=2pt] (ox1) at (0.6,-1) {$\ol{x}_1$};
            \node[inner sep=2pt] at (1.3,-1) {$\cdots$};
            \node[inner sep=2pt] (oxi) at (2,-1) {$\ol{x}_i$};
            \node[inner sep=2pt] at (2.7,-1) {$\cdots$};
            \node[inner sep=2pt] (oxn) at (3.4,-1) {$\ol{x}_n$};

            \node[inner sep=2pt] (ol11) at (-3.9,-2) {$\ol{x}_1^1$};
            \node[inner sep=2pt] at (-3.2,-2) {$\cdots$};
            \node[inner sep=2pt] (oli1) at (-2.5,-2) {$\ol{x}_i^1$};
            \node[inner sep=2pt] (oli2) at (-1.5,-2) {$\ol{x}_i^2$};
            \node[inner sep=2pt] at (0,-2) {$\cdots$};
            \node[inner sep=2pt] (li1) at (1.5,-2) {$x_i^1$};
            \node[inner sep=2pt] (li2) at (2.5,-2) {$x_i^2$};
            \node[inner sep=2pt] at (3.2,-2) {$\cdots$};
            \node[inner sep=2pt] (ln2) at (3.9,-2) {$x_n^2$};

            \node[inner sep=3pt] (s11) at (-3.5,-3.3) {$s^1_1$};
            \node[inner sep=3pt] (s12) at (-2.5,-3.3) {$s^2_1$};
            \node[inner sep=3pt] (sj1) at (-0.5,-3.3) {$s^1_j$};
            \node[inner sep=3pt] (sj2) at (0.5,-3.3) {$s^2_j$};
            \node[inner sep=3pt] (sm1) at (2.5,-3.3) {$s^1_m$};
            \node[inner sep=3pt] (sm2) at (3.5,-3.3) {$s^2_m$};

            \node[inner sep=3pt] (a) at (0,-4.6) {$a$};
            \node[inner sep=3pt] (b) at (0,-5.6) {$b$};
            \node[inner sep=3pt] (c) at (0,-6.6) {$c$};
        \end{tikzpicture}
        \caption{The majority graph of the election~$\election$ constructed in the proof of \Cref{thm:NP:Schulze:fourVoters:coNPh}.
        Arcs between two candidates of the same type (e.g., when both are variable, literal occurrence, or  selection candidates) are missing; otherwise all arcs of the graph are shown.
        The picture assumes~$\clause_j=(\ol{x}_1 \vee x_i \vee x_n)$ with~$\ell_j^1=\ol{x}_1^1$, $\ell_j^2=x_i^2$, and~$\ell_j^3=x_n^2$.
        Notation regarding the color of the arcs is as follows: Arcs in \textcolor{myGreen}{\textbf{green}} represent preference of a candidate over another by votes~$v_1$,~$v_2,$ and~$v_3$;
        arcs in \textcolor{myRed}{\textbf{red}} by
        votes~$v_1$,~$v_2$, and~$v_4$;
        arcs in \textcolor{myBlue}{\textbf{blue}} by
        votes~$v_1$,~$v_3$, and~$v_4$; and finally,
        arcs in \textcolor{myYellow}{\textbf{yellow}} by
        votes~$v_2$,~$v_3,$ and~$v_4$.
        }
        \label{fig:NP:Schulze:fourVoters:coNPh:construction}
    \end{figure}
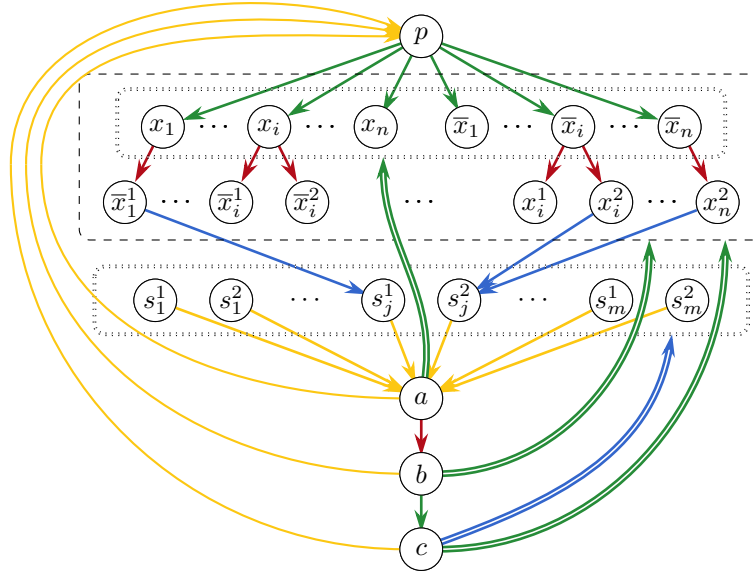

    \proofsubparagraph{Correctness.}
    Similarly as in the proof of \Cref{thm:NP:Schulze:threeVoters:coNPh}, we have the following claim.

    \begin{claim}\label{clm:NP:Schulze:fourVoters:coNPh:necpress-path}
        Candidate~$p$ is a necessary president if and only if for every possible set~$\nomination$ of nominees containing exactly one candidate from each party, it holds that there is a path from~$p$ to~$a$ in~$\majGN$.
    \end{claim}
    \begin{claimproof}
    Since~$a \in N$ and~$\weight(a,p)=2$, such a path is necessary for~$p$ to be a winner.
    Second, since~$b,c \in N$ and~$\weight(a,b)=\weight(b,c)=\weight(c,z)=2$ for each candidate~$z \notin \{a,b\}$, a path from~$p$ to~$a$ also yields a beatpath of strength~$2$ from~$a$ to every nominee.
    \end{claimproof}

    Next, we can prove the following:    \begin{claim}\label{clm:NP:Schulze:fourVoters:coNPh:nec-path-sat}
        For every possible set~$\nomination$ of nominees containing exactly one candidate from each party,
        there exists a path from~$p$ to~$a$ in~$\majGN$ if and only if there exists \emph{no} satisfying truth assignment for~$\varphi$.
    \end{claim}
    \begin{claimproof}
        We are going to show the equivalence of the two statements by showing that their negated forms are equivalent.

        First, assume that~$\varphi$ admits a satisfying truth assignment~$\alpha$. We are going to show that there exists a set~$\nomination$ of nominees (containing a candidate from each party) for which there is no path from~$p$ to~$a$ in~$\majGN$.
        Let~$\nomination$ contain all singletons ($a$,~$b$,~$c$,~$p$, and~$\ell_j^1$ for each~$j \in [m]$) as well as all variable candidates corresponding to true literals under~$\alpha$.
        Furthermore, for each clause~$\clause_j$, we fix a true literal~$\ell_j^\star$ occurring in~$\clause_j$ (which exists because~$\alpha$ satisfies~$\varphi$) and proceed as follows:
        If~$\ell_j^\star=\ell_j^1$, then we nominate the selection candidate~$s_j^1$ and an arbitrary candidate from party~$P_j=\{\ell_j^2,\ell_j^3\}$;
        if~$\ell_j^\star \in P_j$, then we nominate the selection candidate~$s_j^2$ and the literal occurrence candidate~$\ell_j^\star$.

        We claim that there is no arc entering the set
        \[T=\{a,b,c\} \cup S \cup \{\ell_j^\star:j \in [m]\}\]
        in~$\majGN$. Assume for the sake of contradiction that some arc~$e$ enters~$T$.
        We distinguish three cases:
        \begin{itemize}
            \item
        First, assume that~$e$ has some candidate~$s_j^1$ as its endpoint; then~$s_j^1 \in N$ implies that~$\ell_j^\star=\ell_j^1$, which in turn yields~$\ell_j^1 \in T$.
        Due to the preferences of votes~$v_1$ and~$v_3$, the only literal occurrence candidate that defeats~$s_j^1$ is~$\ell_j^1$, and so we must have~$e=(\ell_j^1,s_j^1)$. However,~$e$ then has both its endpoints in~$T$; a contradiction.
        \item
        Second, assume that~$e$ has some candidate~$s_j^2$ as its endpoint; then~$s_j^2 \in N$ implies that~$\ell_j^\star =\ell_j^h$ for some~$h \in \{2,3\}$, which in turn yields~$\ell_j^h \in T$.
        Again, due to the preferences of votes~$v_1$ and~$v_3$, the only literal occurrence candidates that defeat~$s_j^2$ are~$\ell_j^2$ and~$\ell_j^3$, forming party~$P_j$; therefore, we must have~$e=(\ell_j^h,s_j^2)$. However,~$e$ then has again both its endpoints in~$T$; a contradiction.
        \item
        Third, assume that~$e$ has some literal occurrence candidate~$\ell_j^\star$ as its endpoint.
        Considering the preferences of votes~$v_2$ and~$v_4$, the only candidate that defeats~$\ell_j^h$ is the variable candidate~$z$ that is the \emph{negated} form of the literal corresponding to~$\ell_j^h$, that is,~$z=\ol{x}_i$ if~$\ell_j^\star \in \{x_i^1,x_i^2\}$, and~$z=x_i$ if~$\ell_j^\star \in \{\ol{x}_i^1,\ol{x}_i^2\}$.
        However, since~$\ell_j^\star$ is a true literal under~$\alpha$,~$z$ is a false literal under~$\alpha$, and thus the set~$\nomination$ contains the variable candidate~$\ol{z}$ and not~$z$. Thus there exists no arc entering~$\ell_j^h$ in~$G(\election(N))$, contradicting our assumption on~$e$.
        \end{itemize}
        This proves that there is no path from~$p$ to~$a$ in~$G(\election(N))$, as promised.

        \smallskip
        Second, to show the converse implication, assume now that for a set~$\nomination$ of nominees (containing exactly one candidate from each party), there exists no path in~$\majGN$ from~$p$ to~$a$.
        We are going to construct a satisfying truth assignment~$\alpha$ for~$\varphi$ by simply setting those literals to \true\ for which the corresponding variable candidate is in~$\nomination$.
        Let us show that~$\alpha$ satisfies each clause~$\clause_j$.

        First, if~$s_j^1 \in N$, then we claim that~$\ell_j^1$ is a true literal.
        Let~$z$ denote the corresponding variable candidate, and assume for the sake of contradiction that~$z$ is a false literal, meaning that the literal candidate~$\ol{z}$ is a nominee in~$\nomination$.
        However, since~$\ol{z}$ defeats the literal occurrence candidates~$z^1$ and~$z^2$, by~$\ell_j^1 \in \{z^1,z^2\}$ we get that~$(\ol{z},\ell_j^1)$ is an arc in~$G(\election(N))$, giving rise to the path~$(p,\ol{z},\ell_j^1,s_j^1,a)$; a contradiction to our assumption on~$\nomination$.

        Second, if~$s_j^2 \in N$, then we claim that the unique literal candidate~$\ell_j^\star$ in~$P_j \cap N$ corresponds to a true literal.
        Let~$z$ denote the corresponding variable candidate, and assume for the sake of contradiction that~$z$ is a false literal, implying that~$\ol{z} \in N$.
        However,~$\ol{z}$ defeats both literal candidates,~$z^1$ and~$z^2$.
        Thus, as before, we get that~$(\ol{z},\ell_j^\star)$ is an arc in~$G(\election(N))$, giving rise to the path~$(p,\ol{z},\ell_j^\star,s_j^2,a)$; a contradiction to our assumption on~$\nomination$.
        This proves that each clause has at least one true literal under~$\alpha$.
    \end{claimproof}

    \Cref{clm:NP:Schulze:fourVoters:coNPh:necpress-path} and \Cref{clm:NP:Schulze:fourVoters:coNPh:nec-path-sat} together prove the \coNPhness of the problem.
\end{proof}


\subsubsection{Parameterized Hardness for Few Parties}

Finally, we study the parameterized complexity of \SchNP, again starting with an easy observation.

\begin{observation}\label{thm:NP:Schulze:numParties:XP}\label{thm:NP:Schulze:numParties:partySize:FPT}
    \SchNP is
    \begin{itemize}
        \item in \XP when parameterized by the number of parties, and
        \item fixed-parameter tractable when parameterized by the number of parties and the maximum party size, combined.
    \end{itemize}
\end{observation}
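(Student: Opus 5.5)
The plan mirrors the brute-force argument of \Cref{thm:PP:Schulze:numParties:XP}, now with one more level of enumeration for the universal quantifier. Let $P$ be the distinguished party, $\numParties$ the number of parties, and $\maxPartySize$ the maximum party size. By definition, \SchNP asks whether some candidate $\distc \in P$ exists such that, for every valid nomination $\nomination$ with $\nomination \cap P = \{\distc\}$, the candidate $\distc$ is a Schulze winner of $\election(\nomination)$. The algorithm follows this definition directly.

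First, iterate over the at most $\maxPartySize$ choices of $\distc \in P$. For each fixed $\distc$, enumerate every valid nomination of the remaining $\numParties-1$ parties. Each party has at most $\maxPartySize$ members, so there are at most $\maxPartySize^{\numParties-1}$ such nominations. For each nomination $\nomination$, build the weighted majority graph $\majGN$ and compute all beatpath strengths $\beats_N(\cdot,\cdot)$. This takes polynomial time (the Floyd--Warshall-style widest-path computation, as in the footnote citing polynomial-time winner determination for Schulze). Then check whether $\beats_N(\distc,c) \geq \beats_N(c,\distc)$ holds for all $c \in \nomination\setminus\{\distc\}$.

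Answer ``yes'' if for some $\distc$ every enumerated nomination passes the check; otherwise answer ``no''. Correctness is immediate from the definition, because the enumeration covers exactly the nominations quantified over.

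The total running time is $\maxPartySize \cdot \maxPartySize^{\numParties-1}\cdot |\voters|^{\Oh{1}}\cdot|\candidates|^{\Oh{1}} = \maxPartySize^{\numParties}\cdot (|\candidates|+|\voters|)^{\Oh{1}}$. Since $\maxPartySize \leq |\candidates|$, this is $|\candidates|^{\numParties}\cdot \mathrm{poly}$, which gives membership in \XP for parameter $\numParties$. It is also of the form $f(\numParties,\maxPartySize)\cdot\mathrm{poly}$, which gives \FPT for the combined parameter. There is no real obstacle; the only point to state explicitly is that the exponent depends only on $\numParties$, which settles the \XP claim.
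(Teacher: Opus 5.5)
Your proposal is correct and takes essentially the same approach as the paper. The paper proves the analogous \SchPP observation by enumerating the at most $\sigma^{k}$ nominations and checking each in polynomial time, and it states the \SchNP version without proof as the evident analogue; your outer loop over the distinguished party's members plus enumeration of the other parties' nominations is that argument, and it gives the same $\sigma^{k}\cdot\mathrm{poly}$ bound.
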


In the following, we show that the simple \XP algorithm for parameterization by the number of parties from the above observation is, under standard theoretical assumptions, essentially optimal.

\begin{theorem}\label{thm:NP:Schulze:numParties:coWh}
  When parameterized by the number~$\numParties$ of parties, it is \coWh to decide \SchNP, even if the number of voters
  is an odd constant at least seven.
\end{theorem}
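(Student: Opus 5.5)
We will give a parameterized reduction from \probName{Multicolored Clique} (parameter $q$) to the complement of \SchNP. The candidate structure mirrors \Cref{thm:PP:Schulze:numParties:Wh}. There are singleton core candidates $p$ (distinguished), $a$ and $b$; one color party $U_i$ per color class; and one edge party $F_{i,j}$ per pair $i<j$. This gives $q+\binom{q}{2}+3$ parties, so the parameter stays in $\Oh{q^2}$.

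The goal is the reverse of the possible-president reduction. There, a clique nomination \emph{removed} all vertex--edge arcs and thereby cut the beatpath from $b$ to $p$. Here, a clique nomination should \emph{cut the only route from $p$ to a candidate that beats $p$}, while every non-clique nomination should restore such a route. This follows the template of \Cref{clm:NP:Schulze:threeVoters:coNPh:necpress-path}.

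\textbf{Target majority graph.} We aim for the following arcs, all of a common weight (say $1$, since seven is odd and the graph is a tournament):
\begin{itemize}
\item[$\circ$] $p$ beats every vertex candidate;
\item[$\circ$] each vertex candidate $u$ beats each edge candidate $f$ \emph{not} incident to $u$, while $f$ beats $u$ when they are incident;
\item[$\circ$] every edge candidate beats $a$;
\item[$\circ$] $a$ beats $b$;
\item[$\circ$] $b$ beats $p$ and all vertex and edge candidates;
\item[$\circ$] $a$ beats $p$ and $p$ beats the edge candidates.
\end{itemize}
Arcs of weight~$3$ will be avoided everywhere relevant, so that \Cref{obs:Schulze:threeVoters:transitiveMajorityGraph}-type transitivity cannot force defeats.

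The analogue of \Cref{clm:NP:Schulze:threeVoters:coNPh:necpress-path} then follows. Suppose all arcs have equal weight and $b$ beats everything else. Then $p$ wins in $\election(N)$ if and only if there is a path from $p$ to $a$, since continuing via $b$ reaches every nominee. Such a path exists exactly when some nominated vertex candidate beats some nominated edge candidate, i.e., when the nominated $u_i\in U_i$ is not an endpoint of the nominated $f\in F_{i,j}$ for some $i<j$. Consequently, some nomination blocks all such paths if and only if the nominated vertices form a clique and each edge party nominates the clique edge. So $p$ is a necessary president if and only if $H$ has no multicolored clique.

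\textbf{Realizing the graph with seven voters.} I would reuse the base/incidence technique. The four incidence orders $A_{i\to},\wt{A}_{i\to},B_{\to i},\wt{B}_{\to i}$ have the property that, by $(\dagger)$ and $(\ddagger)$, the four incidence votes cancel on every pair except incident vertex--edge pairs. On those pairs they contribute a net $+2$ in favor of the edge candidate. This requires placing edge blocks before vertices in the series; the sign can be flipped by using reversed versions, e.g.\ $u_i^1\ \ora{F_{i\to}(u_i^1)}\cdots$.

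Three base votes then fix all other pairs with margin $\pm1$. They must be arranged so that vertex candidates beat edge candidates by a margin of exactly $1$ among the base votes, which turns into a net margin of $1$ in the reverse direction for incident pairs. The base votes would be chosen like the three-voter profiles of \Cref{thm:NP:Schulze:threeVoters:coNPh}, e.g.
\[ b\ U\ F\ a\ p,\quad p\ U\ F\ a\ b,\quad a\ b\ \ldots\ p \]
adjusted so that $p$ beats $U$ and $F$, $F$ beats $a$, $a$ beats $b$, and $b$ beats all other candidates.

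\textbf{Main obstacle.} The main obstacle is exactly this calibration: choosing the three base votes and the positions of $a$, $b$, $p$ inside the incidence votes. Pairs involving core candidates must cancel in the incidence votes (by placing core candidates symmetrically in the paired votes, as done for $a$, $b$ in \Cref{thm:PP:Schulze:numParties:Wh}), while the three base votes must produce the desired tournament with all weights equal to $1$. I would first fix the incidence votes with $a\ b\ p$ at opposite ends in each pair, then solve for the three base votes pairwise. Finally, I would double-check that no unintended path from $p$ to $a$ arises, for instance via edge candidates beating vertex candidates of other parties, which is harmless since $p$ does not beat edge candidates only through vertices.
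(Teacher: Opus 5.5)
Your proposal has a genuine gap: the tournament you aim for gives $p$ a path to $a$ in \emph{every} reduced election. The constructed instance is then always a yes-instance of \SchNP, and the reduction carries no information.

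The most immediate cause is your last bullet. You let $p$ beat the edge candidates while every edge candidate beats $a$. So $p\to f\to a\to b$ exists as soon as any edge party nominates anyone. The paper orients these arcs the other way: edge candidates beat $p$, and $a$ beats every vertex candidate. You leave $a$ versus the vertex candidates unspecified. With seven voters the majority graph is a tournament, so you must require $a\to u$; otherwise $p\to u\to a$ is always present.

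The deeper problem survives even after fixing those orientations. Your step ``some nominated vertex candidate beats some nominated edge candidate, i.e., the nominated $u_i\in U_i$ is not an endpoint of the nominated $f\in F_{i,j}$'' is false for your graph. Your rule ``$u$ beats every edge candidate not incident to $u$'' also covers unrelated pairs $u\in U_k$, $f\in F_{i,j}$ with $k\notin\{i,j\}$. In incidence votes built from the series $A_{i\to}$ and $B_{\to i}$, such a pair lies in different blocks and cancels, so the base votes orient it $u\to f$. For $q\ge 3$, every nomination contains such a pair, for instance the nominee of $U_1$ and the nominee of $F_{2,3}$. This yields $p\to u\to f\to a$, so the clique nomination does not separate $p$ from $a$, and $p$ is again always a necessary president.

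In your design, the set reachable from $p$ under the clique nomination must be $\{p\}$ together with the nominated vertex candidates. Because the reduced graph is a tournament, this forces every nominated edge candidate to beat every nominated vertex candidate. Hence unrelated vertex--edge pairs must point from the edge to the vertex, and only ``related but non-incident'' pairs may point from the vertex to the edge.

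The paper's written argument has the same defect. It asserts that every arc between a nominated vertex candidate and a nominated edge candidate points from the edge to the vertex, which its own rule for non-incident pairs contradicts once $q\ge 3$. So importing the paper's seven votes will not close this gap. You would need base and incidence votes that distinguish related from unrelated color/edge parties, and your ``calibration'' step, which you leave open anyway, would have to achieve exactly that.
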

\begin{proof}
  We again reduce from the \Wh \probName{Multicolored Clique} problem~\cite{fel-her-ros-via:j:multicolored-clique,pie:j:multicolored-clique},
  this time to the complement of \SchNP.
  Recall that in this problem, we are given a~$q$-partite graph~$H=(U = U_1 \cup \cdots U_q, F)$, and the goal is to decide whether a subset~$K\subseteq U$ of size~$q$ exists so that~$G[K]$ is a complete graph. The notation is the same as in \Cref{thm:PP:Schulze:numParties:Wh}, that is~$U_i$,~$i\in[q]$, is a \emph{color class} and~$F_{i,j}$ is the set of all edges between color classes~$U_i$ and~$U_j$.

    \proofsubparagraph{Construction.}
    Given an instance~$\mathcal{I} = (H,q)$, we construct an equivalent instance of \SchNP with election~$\election$ as follows.
    We have three \emph{core candidates}~$a$,~$b$, and~$p$; each of them forms a singleton party, and
    $\{p\}$ is the distinguished party.
    Next, for each vertex $u$ in~$U$, we introduce a \emph{vertex candidate}, also denoted as~$u$, and for each
    $i \in [q]$, we add the color class~$U_i$ as a \emph{color party}.
    Next, for each edge~$f \in F$, we introduce an \emph{edge candidate}, also denoted as~$f$, and for each
    $i,j \in [q]$ with~$i<j$, we add~$F_{i,j}$ as an \emph{edge party}.

    \begin{figure}[bt!]
        \centering
        \begin{tikzpicture}[
            every node/.style={draw,circle,inner sep=0,minimum width=0.8cm},
            v1v2/.style={myGreen},
            v2v3/.style={myRed},
            v1v3/.style={myBlue}
        ]
            \node (b) at (0,1) {$p$};

            \node (u11) at (-3.5,-1) {$u_i^1$};
            \node (u12) at (-2.5,-1) {$u_i^2$};
            \node (u13) at (-1.5,-1) {$u_i^3$};
            \draw[double,dotted,rounded corners] (-4.1,-0.4) rectangle (-0.9,-1.6);
            \node[draw=none,font=\footnotesize,myGray] at (-4.3,-0.7) {$U_i$};

            \node (u21) at (1.5,-1) {$u_j^1$};
            \node (u22) at (2.5,-1) {$u_j^1$};
            \node (u23) at (3.5,-1) {$u_j^3$};
            \draw[double,dotted,rounded corners] (0.9,-0.4) rectangle (4.1,-1.6);
            \node[draw=none,font=\footnotesize,myGray] at (4.35,-0.7) {$U_j$};

            \foreach \i in {1,2} {
                \foreach \j in {1,2,3} {
                    \draw[myarc,v1v2] (b) to (u\i\j.north);
                }
            }

            \def\ey{-3.5}
            \node (e121) at (-2.25,\ey) {$f_{i,j}^1$};
            \node (e122) at (-0.75,\ey) {$f_{i,j}^2$};
            \node (e123) at (0.75,\ey) {$f_{i,j}^3$};
            \node (e124) at (2.25,\ey) {$f_{i,j}^4$};
            \draw[double,dotted,rounded corners] (-2.85,-2.9) rectangle (2.85,-4.1);
            \node[draw=none,font=\footnotesize,myGray] at (3.2,-3.85) {$F_{i,j}$};

            \foreach \i in {11,12,22,23} {
                \draw[myarc,v2v3] (u\i.south) to (e121.north);
            }
            \draw[myarc,myYellow] (e121) edge (u13) edge (u21);
            \foreach \i in {12,13,21,23} {
                \draw[myarc,v2v3] (u\i.south) to (e122.north);
            }
            \draw[myarc,myYellow] (e122) edge (u11) edge (u22);
            \foreach \i in {11,13,21,22} {
                \draw[myarc,v2v3] (u\i.south) to (e123.north);
            }
            \draw[myarc,myYellow] (e123) edge (u12) edge (u23);
            \foreach \i in {11,12,21,23} {
                \draw[myarc,v2v3] (u\i.south) to (e124.north);
            }
            \draw[myarc,myYellow] (e124) edge (u13) edge (u22);

            \node (a) at (0,-5.5) {$a$};
            \foreach \i in {1,2,3,4} {
                \draw[myarc,v1v2] (e12\i) to (a);
            }
            \draw[line width=5pt,white,out=90,in=240] (a) to (0.85,-1);
            \draw[line width=5pt,white,out=90,in=-60] (a) to (-0.85,-1);
            \draw[myarc,double,v1v3,out=90,in=240] (a) to (0.85,-1);
            \draw[myarc,double,v1v3,out=90,in=-60] (a) to (-0.85,-1);

            \node (p) at (0,-7) {$b$};
            \draw[myarc,v2v3] (a) to (p);
            \draw[myarc,double,myYellow,in=-90,out=60] (p) to (2.5,-4.1);
            \draw[myarc,double,v1v3,in=-90,out=180] (p) to (-3.95,-1.6);
            \draw[myarc,double,v1v3,in=-90,out=0] (p) to (3.95,-1.6);
        \end{tikzpicture}
        \caption{An illustration of the final weighted majority graph~$\majG$ for the election~$\election$ constructed in the proof of \Cref{thm:NP:Schulze:numParties:coWh}. Here, we focus on exactly two color parties,~$U_i$ and~$U_j$.
          An arc~$(s,t)$ is colored in \textcolor{myGreen}{\textbf{green}} if an agent~$s$ is preferred in votes~$v_1$ and~$v_2$ (but not in~$v_3$); in \textcolor{myRed}{\textbf{red}} if~$s$ is preferred in~$v_2$ and~$v_3$ (but not in~$v_1$); and in \textcolor{myBlue}{\textbf{blue}} if~$s$ is preferred in~$v_1$ and~$v_3$ (but not in~$v_2$).
          \textcolor{myYellow}{\textbf{Yellow}} arcs are due to the incidence votes~$\hat{v}_1$,~$\hat{v}_2$,~$\hat{v}_3$, and~$\hat{v}_4$, and without these (i.e., if we focus just on the subprofile induced by the base votes), they go in the opposite direction.}
        \label{fig:NP:Schulze:numParties:Wh:construction}
    \end{figure}
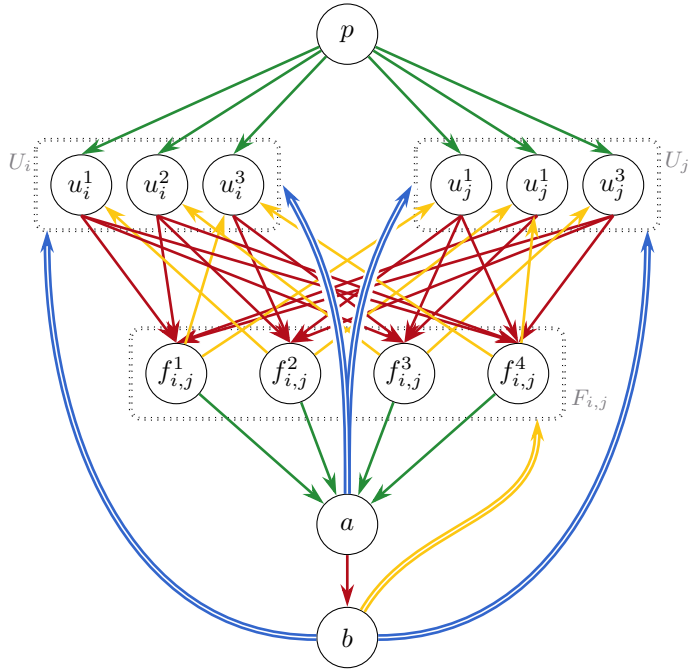

    Now, we define the votes. First, we have a group of three \emph{base votes},~$v_1$,~$v_2$, and~$v_3$. These votes create a base for the majority graph similarly to the weighted majority graph in the proof of
        \Cref{thm:PP:Schulze:numParties:Wh}.
    Their preferences are defined as follows:
    \begin{align*}
        v_1 \colon&\quad b\  \ora{F_{1,2} \cup \cdots \cup F_{q-1,q}}\  a\  p\  \ora{U_1\cup \cdots \cup U_q};\\
        v_2 \colon&\quad p\  \ora{U_1\cup \cdots \cup U_q}\  \ora{F_{1,2} \cup \cdots \cup F_{q-1,q}}\  a\  b;\\
        v_3 \colon&\quad a\  b\  \ola{U_1\cup \cdots \cup U_q}\  \ola{F_{1,2} \cup \cdots \cup F_{q-1,q}}\  p.
    \end{align*}

    Observe that the weighted majority graph of the election induced by these three votes is as follows:
    \begin{itemize}
        \item $b$ defeats all other candidates except for~$a$;
        \item $a$ beats~$b$,~$p$, and all vertex candidates;
        \item each edge candidate beats~$a$ and~$p$; and
        \item $p$ beats each vertex candidate.
    \end{itemize}
    Clearly, all arcs implied by these relations have weight one in the weighted majority graph.
    Also, observe that the distinguished candidate~$p$ is a winner for any nominations for this core subprofile. To avoid this situation, we add four additional \emph{incidence votes}.
    To define the preferences of these votes, we will use the notation~$A_{i\to}$,~$\wt{A}_{i\to}$,~$B_{\to i}$, and~$\wt{B}_{\to i}$ as defined in the proof of \Cref{thm:PP:Schulze:numParties:Wh}, and they still satisfy the properties ($\dagger$) and ($\ddagger$).

    We are now ready to give the preferences of our incidence votes:
    \begin{align*}
        \hat{v}_1 \colon& \quad
        p\  a\  b\  A_{1 \to}\  A_{2 \to}\ \cdots\  A_{(q-1) \to}\
        A_{q \to};
        \\
        \hat{v}_2 \colon& \quad
        \wt{A}_{q \to}\  \wt{A}_{(q-1) \to}\  \cdots\  \wt{A}_{2 \to}\  \wt{A}_{1 \to}\  b\  a\  p;
        \\
        \hat{v}_3 \colon& \quad
        p\  a\  b\
        B_{1 \to}\  B_{2 \to}\ \cdots\  B_{(q-1) \to}\
        B_{q \to};
        \\
        \hat{v}_4 \colon& \quad
        \wt{B}_{q \to}\  \wt{B}_{(q-1) \to}\  \cdots\  \wt{B}_{2 \to}\  \wt{B}_{1 \to}\  b\ a\ p.
    \end{align*}

    Observe that these votes do not change the relations between a core candidate~$x \in \{a,b,p\}$ and any other candidate~$y$, because~$\hat{v}_1$ and~$\hat{v}_2$, as well as~$\hat{v}_3$ and~$\hat{v}_4$, rank~$x$ and~$y$ in the opposite order.

    Let us now turn our attention to the relations between a vertex candidate~$u$ and some edge candidate~$f$.
    First, votes~$\hat{v}_1$ and~$\hat{v}_2$ clearly rank~$u$ and~$f$ in the opposite order unless they are both contained in the same series~$A_{i \to }$ for some
    $i \in [q]$. By our observation~($\dagger$), the same holds even if both~$u$ and~$f$ are contained in~$A_{i \to}$ for some
    $i \in [q]$, unless~$f$ is an edge incident to~$u$
    that connects~$u$ with some vertex in a later color class (i.e., in some~$U_j$ with~$j>i$).
    Similarly, votes~$\hat{v}_3$ and~$\hat{v}_4$ rank~$u$ and~$f$ in the opposite order unless they are both contained in the same series~$B_{\to i}$ for some
    $i \in [q]$.
    By our observation~($\ddagger$), the same holds even if both~$u$ and~$f$ are contained in~$B_{\to i}$ for some
    $i \in [q]$, unless~$f$ is an edge incident to~$u$ that connects~$u$ with some vertex in an earlier color class (i.e., in some~$U_j$ with~$j<i$).

    Taking into account the preferences of the base votes (out of whom two prefer vertex candidates to edge candidates, and one prefers edge candidates to vertex candidates), we can conclude that
    \begin{itemize}
        \item if~$f$ is \emph{not} incident to~$u$, then~$(u,f)$ is an edge of weight~$1$ in the majority graph, and
        \item if~$f$ is incident to~$u$, then~$(f,u)$ is an edge of weight~$1$ in the majority graph.
    \end{itemize}
    See \Cref{fig:NP:Schulze:numParties:Wh:construction} for an illustration of the final weighted majority graph.

    \proofsubparagraph{Correctness.}
    Assume first that~$\mathcal{I}$ is a yes-instance and~$K\subseteq U$ is a multicolored clique in~$H$.
    We create a set of nominations: First, each color party~$U_i$ nominates the vertex candidate corresponding to the vertex of~$U_i\cap K$.
    Additionally, each edge party~$F_{i,j}$ nominates the edge candidate corresponding to the edge connecting the vertices of~$U_i\cap K$ and~$U_j\cap K$ in~$H$.
    Note that such an edge candidate always exists, since~$K$ induces a complete graph.
    The resulting set~$\nomination$ of nominees, containing also all core candidates, has the following property: Candidate~$b$ defeats~$p$ in the reduced election~$\election(N)$ with~$\beats_N(b,p)=1$, but there is no path from~$p$ to~$b$ in the majority graph of~$\election(N)$.
    To see this, observe that no arc leaves the set~$\{p\} \cup \{U \cap N\}$, because (i)~$p$ defeats no candidates other than vertex candidates, and (ii) no vertex candidate defeats~$a$,~$b$, or any edge candidate, since every arc between the nominated vertex and edge candidates is directed from the latter to the former, as observed above.
    Hence, the distinguished candidate~$p$ is not a winner of the election~$\election(N)$.

    In the opposite direction, assume that there is a valid set~$\nomination$ of nominees such that~$p$ is \emph{not} a winner of the reduced election~$\election(N)$.
    We set~$K = U \cap N$, that is, we add to~$K$ the vertices corresponding to vertex candidates nominated by the color parties.
    As each color party has to nominate at least one candidate,~$K$ clearly contains one vertex from each color class, and therefore,~$|K| = q$.
    For the sake of contradiction, assume that~$K$ does not induce a complete
    subgraph in~$H$.
    Then, there exists a pair of vertices~$u_i^\ell$ and~$u_j^{\ell'}$ such that~$\{u_i^\ell,u_j^{\ell'}\} \not\in F$.
    Since~$\{u_i^\ell,u_j^{\ell'}\} \not\in F$, there is no edge candidate~$f \in F_{i,j}$ that beats both~$u_i^\ell$ and~$u_j^{\ell'}$.
    Therefore, the edge party~$F_{i,j}$ necessarily nominated an edge candidate, say~$e$, such that~$u_i^\ell$ or~$u_j^{\ell'}$ (or both) are preferred by the majority of voters over~$e$.
    Thus there is a path~$(p,(u_i^\ell \text{ or } u_j^{\ell'}), e, a, b)$ in the weighted majority graph.
    Since~$b$ defeats every candidate other than~$a$, this means that candidate~$p$ has a beatpath of strength~$1$ to every other nominee (while all paths leading to~$p$ have weight at most~$1$), and therefore~$p$ is a winner in~$\election(N)$.
    This contradicts that~$p$ is not a winner for nomination~$\nomination$.
    Hence, the edge~$\{u_i^\ell,u_j^{\ell'}\}$ has to exist in~$H$, implying that~$K$ is indeed a complete
    subgraph in~$H$.
    That is,~$\mathcal{I}$ is a yes-instance.

    Finally, observe that the number of parties is exactly~$q + \binom{q}{2} + 3 \in \mathcal{O}(q^2)$ and the reduction can be done in polynomial time.
    Hence, we indeed presented a parameterized reduction, completing the proof.
\end{proof}


\section{Conclusions and Future Research}

Complementing and improving the results of Rothe and Woitaschik~\cite{rot-woi:c:possible-and-necessary-president-problem-in-schulze-voting} on the \probName{Possible} and the \probName{Necessary President} problem, we have performed a more fine-grained parameterized complexity analysis for these two problems.
Our main results are dichotomies for them with respect to the number of voters: For two voters, they both are solvable in linear time, whereas we show the \probName{Possible President} problem \NPc for three or more voters and maximum party size two, and the \probName{Necessary President} problem \coNPc for three or more voters and maximum party size two or three, respectively, depending on whether the number of voters is even or odd.
In addition to these two natural parameters, we have also considered parameterization by the number of parties.
For any choice of viewing these three values as being a constant, a parameter, or unbounded, we have determined the parameterized complexity of the \probName{Possible} and the \probName{Necessary President} problem with respect to the corresponding parameterization.

In future research, we propose to study these two problems for other natural voting rules, especially those standing out because of their desirable axiomatic properties, such as ranked pairs~\cite{tid:j:independence-of-clones},\footnote{We note here that Cechlárová and Schlotter~\cite{cec-sch:c:necessary-president} study the complexity of the \probName{Necessary President} problem with respect to this rule, but, to the best of our knowledge, the complexity of \probName{Possible President} is still unknown.} where we have to pay particular attention the issue of breaking ties,\footnote{Note that winner determination in ranked pairs is \NPc for the ``parallel universe tie-breaking'' model~\cite{con-rog-xia:c:mle}, whereas it can be done efficiently when ranked pairs is defined by using a fixed tie-breaking method so that it becomes a resolute rule~\cite{bri-fis:c:price-of-neutrality}.} split cycle~\cite{hol-pac:j:split-cycle}, or the recently introduced River method~\cite{dor-bri-hei:c:river-method}.


\bibliography{PossiblePresident}

\end{document}